\providecommand{\tabularnewline}{\\}
\providecommand{\algorithmname}{Algorithm}
\theoremstyle{plain}
\newtheorem{thm}{\protect\theoremname}
  \theoremstyle{remark}
  \newtheorem*{rem*}{\protect\remarkname}
  \theoremstyle{plain}
  \newtheorem{lem}[thm]{\protect\lemmaname}
  \providecommand{\lemmaname}{Lemma}
  \providecommand{\remarkname}{Remark}
\providecommand{\theoremname}{Theorem}
\begin{document}
\global\long\def\bb{\bm{\beta}}
\global\long\def\bt{\bm{\theta}}
\global\long\def\bmu{\bm{\mu}}
\global\long\def\bl{\bm{\lambda}}
\global\long\def\be{\bm{\eta}}
\global\long\def\E{\mathbb{E}}
\global\long\def\by{\bm{y}}
\global\long\def\bys{\bm{y^{\star}}}
\global\long\def\bS{\bm{\Sigma}}
\global\long\def\N{\mathcal{N}}
\global\long\def\I{\mathds{1}}
\global\long\def\A{\mathcal{A}}
\global\long\def\Ml{\mathcal{M}_{l}}
\global\long\def\Q{\mathcal{Q}}

\title{Expectation Propagation in the large-data limit}

\author{Guillaume Dehaene, Simon Barthelm\'e}
\maketitle
\begin{abstract}
Expectation Propagation \citep{Minka:EP} is a widely successful algorithm
for variational inference. EP is an iterative algorithm used to approximate
complicated distributions, typically to find a Gaussian approximation
of posterior distributions. In many applications of this type, EP
performs extremely well. Surprisingly, despite its widespread use,
there are very few theoretical guarantees on Gaussian EP, and it is
quite poorly understood.

In order to analyze EP, we first introduce a variant of EP: averaged-EP
(aEP), which operates on a smaller parameter space. We then consider
aEP and EP in the limit of infinite data, where the overall contribution
of each likelihood term is small and where posteriors are almost Gaussian.
In this limit, we prove that the iterations of both aEP and EP are
simple: they behave like iterations of Newton's algorithm for finding
the mode of a function. We use this limit behavior to prove that EP
is asymptotically exact, and to obtain other insights into the dynamic
behavior of EP: for example, that it may diverge under poor initialization
exactly like Newton's method. EP is a simple algorithm to state, but
a difficult one to study. Our results should facilitate further research
into the theoretical properties of this important method.
\end{abstract}

\section*{Introduction}

Current practice in Bayesian statistics favors MCMC methods, but so-called
\emph{variational approximations} are gaining traction. In machine
learning, where time constraints are primary, they have long been
the favored method for Bayesian inference \citep{BishopPRML}. Variational
methods provide fast, deterministic approximations to arbitrary distributions.
Examples include mean-field methods \citep{WainwrightJordan:GraphModelsExpFamVarInf},
INLA (Integrated Nested Laplace Approximation, \citealp{Rue:INLA}),
and Expectation Propagation (EP). 

EP was introduced in \citet{Minka:EP} and has proved to be one of
the most durably popular methods in Bayesian machine learning. It
gives excellent results in important applications like Gaussian process
classification \citep{KussRasmussen:AssessingApproxInfGP,NickishRasmussen:ApproxGaussianProcClass}
and is used in a wide range of applications (e.g., \citealt{Jylanki:EPforNeuralNetworks,Jylanki:RobustGaussianProcessReg,GehreJin:EPforNonlinearInverseProb,Ridgway;PACBayesAUCClassificationScoring}
). Recently EP has been shown to work very well in certain difficult
likelihood-free settings \citep{BarthelmeChopin:EPforLikFreeInf},
and has even been advocated as a generic form of inference in large-data
problems \citep{Gelman:EPWayOfLife,Xu:DistributedBayesPostSampling},
since EP is easy to parallelize.

Most of the work on EP concerns applications, and focuses on making
the method work well in various settings. Why and when the method
should work remains somewhat of a mystery, and in this article we
aim to make progress in that direction. A few theoretical results
are available when the approximating family is a discrete distribution,
in which case EP is equivalent to Belief Propagation, a well-studied
algorithm \citep{WainwrightJordan:GraphModelsExpFamVarInf}. The typical
case in Bayesian inference is to use multivariate Gaussians as the
approximating family, but very little is known about that case: \citet{RibeiroOpper:EPWithFactDistr}
study a limit of EP for neural network models (the limit of infinitely
many weights) and \citet{Titterington:EMalgVarApproxEP} gives partial
results on mixture models in the large-data limit. Despite these efforts,
two aspects of EP's behavior have remained elusive: its dynamical
behavior (does the EP iteration converge on a \emph{fixed dataset}?)
and its large-data behavior (do fixed points of the iteration converge
to the target distribution in the limit of infinite data?). In this
work, we focus on the dynamical behavior of EP and show that it is
asymptotically equivalent to the behavior of Newton's method \citep{NocedalWright:NumericalOptim}.
This enables us to prove that EP is exact in the large-data limit:
if the posterior in the large-data limit tends to a Gaussian (as they
usually do), then EP recovers the limiting Gaussian. Furthermore,
we show that on multimodal distributions, EP often has one fixed-point
for each mode. This also yields insights into why EP iterations can
be so unstable. 

The outline of the paper is as follows. In section \ref{sec:From-classic-EP-to-aEP},
we give a quick introduction to EP and introduce a simpler variant
which we call averaged-EP (aEP). aEP is mathematically simpler than
EP because it iterates over a much smaller parameter space (independent
of $n$, the number of data points), which makes our results easier
to state and to understand. We then present our theoretical contributions
in section \ref{sec:Asymptotically-exact}. Our main result concerns
the asymptotic behavior of the EP update, which turns out to be extremely
simple. This asymptotic behavior has many consequences, of which we
highlight two. First, EP and aEP asymptote to Newton's algorithm.
Second, EP is asymptotically exact, or more specifically the target
distribution and one specific EP fixed-point converge in total-variation
distance. In section \ref{sec:Consequences-following-from the limit behavior},
we then show that this Newton limit behavior of EP can give us some
intuition into how the iterations of the algorithm work. Finally,
in section \ref{sec:Conclusion} we discuss limitations of our results
and give directions for future work.

\subsection*{Notation and background\label{sub:Notation-and-background}}

Vectors are in bold, matrices are in bold and capitalized. Given a
multivariate function $f(\mathbf{x})$, we note $\nabla f$ its gradient
and $Hf$ its Hessian, the matrix of the second derivatives. Univariate
Gaussian distributions are represented as $\N(x|\mu,v)\propto\exp\left(-\frac{1}{2v}\left(x-\mu\right)^{2}\right)$,
although occasionally the exponential parameters $\beta=v^{-1},\, r=\beta\mu$
are used: $\N(x|r,\beta)\propto\exp\left(-\frac{1}{2}\beta x^{2}+rx\right)$.
We call $\beta$ the \emph{precision }and $r$ the \emph{linear shift}.
Table \ref{tab:An-EP-lexicon} provides a lexicon for EP and a summary
of the notation.

The goal of EP is to compute a Gaussian approximation of a target
distribution, which we note $p\left(\mathbf{x}\right)\propto\exp\left(-\psi\left(\mathbf{x}\right)\right)$.
This distribution factorizes into $n$ factor-functions (sites in
EP terminology): $p\left(\mathbf{x}\right)=\prod_{i=1}^{n}l_{i}\left(\mathbf{x}\right)$.
We note $\phi_{i}\left(\mathbf{x}\right)=-\log\left(l_{i}\left(\mathbf{x}\right)\right)$.
EP produces a Gaussian approximation $q\left(\mathbf{x}\right)\approx p\left(\mathbf{x}\right)$
with the same factor structure, $n$ Gaussian factors $f_{i}\left(\mathbf{x}\right)$
such that: $q\left(\mathbf{x}\right)=\prod_{i=1}^{n}f_{i}\left(\mathbf{x}\right)$.
Each Gaussian factor $f_{i}\left(\mathbf{x}\right)$ approximates
the corresponding target factor $l_{i}\left(\mathbf{x}\right)$ .

\subsubsection*{Newton's algorithm as an approximate inference method}

Approximate inference methods aim to find a tractable approximation
$q(\mathbf{x})$ to a complicated density $p(\mathbf{x})$. Most of
them operate by solving:

\[
\underset{q\in\mathbb{\mathcal{Q}}}{\mbox{argmin}\,}D(p||q)
\]

where $\mathcal{Q}$ denotes some set of tractable distributions and
$D$ is a divergence measure. Depending on the choice of divergence
measure and approximating distribution, one can derive various variational
algorithms. These methods are often iterative and produce a sequence
of approximations $q_{1},\ldots,q_{T}$ that should hopefully tend
to a locally optimal approximation. 

One of our key results proves that, in the large-data limit (denoted
here by $n\rightarrow\infty$), EP behaves like Newton's algorithm
(NT, see e.g. \citealp{NocedalWright:NumericalOptim} for an introduction).
NT aims to find a mode of a target probability distribution $p(x)\propto\exp\left(-\psi\left(x\right)\right)$
through an iterative procedure. We present here the one-dimensional
version. Once initialized at a point $\mu_{1}$, a sequence of points
$(\mu_{t})$ is constructed with:

\begin{equation}
\mu_{t+1}=\mu_{t}-\left[\psi^{''}\left(\mu_{t}\right)\right]^{-1}\psi^{'}\left(\mu_{t}\right)\label{eq:GN iteration}
\end{equation}

This iteration can be viewed as a gradient descent with a Hessian
correction. It can also be viewed as approximating $\log(p)$ as its
second degree Taylor expansion around $\mu_{t}$, and then setting
$\mu_{t+1}$ as the extremum of that polynomial.

With a slight modification we can restate NT as an approximate inference
algorithm iterating on Gaussian approximations of $p$, which makes
the parallel to EP more obvious. Starting from an arbitrary Gaussian
$g_{1}$, with mean $\mu_{1}$, we construct a sequence of Gaussian
approximations $(g_{t})$ through iterating the following steps:
\begin{enumerate}
\item Compute $\delta r_{t\text{+1}}=-\psi^{'}\left(\mu_{t}\right)$ and
$\beta_{t+1}=\psi^{''}\left(\mu_{t}\right)$
\item Compute a Gaussian approximation to $p\left(x\right)$: $g_{t+1}\left(x\right)\propto\exp\left(\delta r_{t\text{+1}}\left(x-\mu_{t}\right)-\beta_{t+1}\frac{(x-\mu_{t})^{2}}{2}\right)$
\item Compute the mean of $g_{t\text{+1}}$: $\mu_{t+1}=\mu_{t}-\left[\psi^{''}\left(\mu_{t}\right)\right]^{-1}\psi^{'}\left(\mu_{t}\right)$
\end{enumerate}
With this change, the fixed point of NT is now the Gaussian distribution
$\N\left(x\middle|x^{\star},\left[\psi^{''}\left(x^{\star}\right)\right]^{-1}\right)$
centered at $x^{\star}$, the mode of $p$, and with precision the
Hessian of $\log(p)$ at the mode $x^{\star}$. Thus, the fixed point
of this NT variant is the canonical Gaussian approximation (CGA) at
the mode of $p$, also sometimes referred to as the ``Laplace''
approximation (which is erroneous as the Laplace approximation actually
refers to approximating integrals and not probability distributions).

An important issue is the convergence of NT. It has fast convergence
when initialized close to a mode of $p$. Technically, convergence
is quadratic, i.e. $|\mu_{t+1}-x^{\star}|\leq c(\mu_{t}-x^{\star})^{2}$.
However, that is only true in a neighborhood of the mode and the basic
version of the algorithm, which we presented here, does not generally
converge for all starting points $\mu_{1}$. In order to obtain an
algorithm with guaranteed convergence, one solution is to complement
NT with a line-search algorithm. As we shall see, EP can also have
unstable behavior when initialized too far from its fixed points:
we return to this important issue in section \ref{sub:Instability}.

\subsubsection*{Log-concave distributions and the Brascamp-Lieb theorem}

Our theoretical results depend on a very powerful theorem on log-concave
probability distributions, called the Brascamp-Lieb theorem \citep{Brascamp1976151,SaumardWellner:LogConcavityReview}.
Let $LC\left(\mathbf{x}\right)\propto\exp\left(-\psi\left(\mathbf{x}\right)\right)$
be a log-concave distribution (i.e., $H\psi\left(\mathbf{x}\right)$
is always symmetric positive definite). The variance of any statistic
$S\left(\mathbf{x}\right)$ is then bounded according to:

\begin{equation}
\mbox{var}_{LC}\left(S(\mathbf{x})\right)\leq E_{LC}\left(\left(\nabla S\right)^{T}\left[H\psi(\mathbf{x})\right]^{-1}\nabla S\right)\label{eq:Brascamp-Lieb}
\end{equation}

We use this result in the particular case $S\left(\mathbf{x}\right)=\mathbf{x}$
from which we get an upper-bound on the variance:

\begin{equation}
\mbox{var}_{LC}\left(\mathbf{x}\right)\leq E_{LC}\left(\left[H\psi\left(\mathbf{x}\right)\right]^{-1}\right)
\end{equation}

Further more, if the log-Hessian is lower-bounded (as a matrix inequality):$H\psi\left(\mathbf{x}\right)\geq\mathbf{B}_{m}$,
then the variance has an even simpler upper-bound:
\begin{equation}
\mbox{var}_{LC}\left(\mathbf{x}\right)\leq\mathbf{B}_{m}^{-1}
\end{equation}

\begin{table}
\begin{centering}
\begin{tabular*}{12cm}{@{\extracolsep{\fill}}|c|>{\raggedright}p{8cm}|}
\hline 
Term & Explanation\tabularnewline
\hline 
\hline 
Target distribution & The distribution we wish to approximate: $p(\mathbf{x})\propto\prod_{i=1}^{n}l_{i}\left(\mathbf{x}\right)$\tabularnewline
\hline 
EP approximation & An exponential-family distribution with the same factor structure
as $p(\mathbf{x})$, $q\left(\mathbf{x}\right)\propto\prod_{i=1}^{n}f_{i}\left(\mathbf{x}\right)=\frac{\exp\left(\sum\bl_{i}^{t}\mathbf{s}\left(\mathbf{x}\right)\right)}{Z\left(\sum\lambda_{i}\right)}$\tabularnewline
\hline 
``Site'' or ``factor'' & A factor $l_{i}\left(\mathbf{x}\right)$ in the target distribution\tabularnewline
\hline 
Site approximation & A factor $f_{i}\left(\mathbf{x}\right)$ in the approximation\tabularnewline
\hline 
Cavity prior & The approximate distribution with site $i$ taken out, i.e. $q_{-i}\left(\mathbf{x}\right)\propto\exp\left(\sum_{j\neq i}\bl_{j}^{t}\mathbf{s}\left(\mathbf{x}\right)\right)$
. In aEP, $q_{-}\left(\mathbf{x}\right)\propto q^{\frac{n-1}{n}}\left(\mathbf{x}\right)$
is independent of $i$. \tabularnewline
\hline 
Hybrid distribution & The product of a cavity prior and a true site, i.e.. $h_{i}\left(\mathbf{x}\right)\propto q_{-i}\left(\mathbf{x}\right)l_{i}\left(\mathbf{x}\right)$\tabularnewline
\hline 
\end{tabular*}
\par\end{centering}

\protect\caption{An EP lexicon\label{tab:An-EP-lexicon}}
\end{table}

\section{From classic EP to averaged-EP (aEP)\label{sec:From-classic-EP-to-aEP}}

\subsection*{Classic EP}

In this section we introduce EP in the exponential-family notation
used by \citet{Seeger:EPExpFam}, because it is neat, generic and
compact. EP has been introduced from a variety of viewpoints, and
the versions given in \citet{Minka:DivMeasuresMP,Seeger:EPExpFam,BishopPRML,Raymond:ExpectationPropagation}
are all potentially useful. 

Following \citet{Minka:DivMeasuresMP}, given a target distribution
$p(\mathbf{x})$, EP aims to solve

\begin{equation}
\underset{q\in\mathbb{\mathcal{Q}}}{\mbox{argmin}}\, KL(p||q)\label{eq:KL-optimisation}
\end{equation}

where $\mathcal{Q}$ is an approximating family and $KL$ denotes
the Kullback-Leibler divergence. Here we focus on the Gaussian case
but other exponential families may be used (for example, the Gaussian-Wishart
family is used in \citealp{Paquet:PerturbationCorrectionsApproxInference}). 

A central aspect of EP is that it relies on a \emph{factorization
}of $p$, i.e. that the posterior decomposes into a product of terms:

\begin{equation}
p(\mathbf{x})=\frac{1}{Z}\prod_{i=1}^{n}l_{i}\left(\mathbf{x}\right)\label{eq:true-posterior}
\end{equation}

where usually one of the terms corresponds to the prior and the rest
to independent likelihood terms (here and elsewhere $Z=\int\prod_{i=1}^{n}l_{i}\left(\mathbf{x}\right)\mbox{d}\mathbf{x}$
is a normalization constant). The decomposition is non-unique and
the performance and feasibility of EP depend on the factorization
one picks. The approximation has the same factor structure:

\begin{equation}
q(\mathbf{x})\propto\prod_{i=1}^{n}f_{i}(\mathbf{x})\label{eq:q-factorised}
\end{equation}

Following Seeger, we call the $l_{i}$'s \emph{sites} and the corresponding
$f_{i}$'s \emph{site approximations}. The site approximations have
exponential-family form (e.g., Gaussian)

\[
f_{i}(\mathbf{x})=\exp\left(\bm{\lambda}_{i}^{t}\bm{t}\left(\mathbf{x}\right)\right)
\]

which the approximation inherits

\begin{equation}
q_{\bl_{s}}(\mathbf{x})=\exp\left\{ \bm{\lambda}_{s}\bm{t}\left(\mathbf{x}\right)-\phi\left(\bm{\lambda}_{s}\right)\right\} \label{eq:exponential-family}
\end{equation}

where $\bl_{s}=\sum\bl_{i}$. Note that $\bl_{s}$ represents the
so-called natural parameters for the approximation. According to a
well-known property of exponential families, the gradient of the partition
function, $\nabla\phi\left(\bl\right)$ returns the expected value
of the sufficient statistics for a given value of the natural parameters:
\[
\be=\nabla\phi\left(\bl\right)=\frac{1}{Z}\int\bm{t}\left(\mathbf{x}\right)\exp\left\{ \bl^{t}\bm{t}\left(\mathbf{x}\right)\right\} \mbox{d}\mathbf{x}
\]

Its inverse $\nabla\phi^{-1}$ transforms expected values of the sufficient
statistics into natural parameters.

A well known result for exponential families shows that the global
solution of problem \eqref{eq:KL-optimisation} is a moment-matching
solution:

\[
\be^{*}=E_{p}\left(\bm{t}\left(\mathbf{x}\right)\right)
\]

In the Gaussian case, what this means is that the best approximation
of $p$ according to KL divergence is a Gaussian with the same mean
and covariance. Of course, directly computing the mean and covariance
of $p$ is intractable, and so EP tries to get there by successive
refinements of an approximation. 

Specifically, EP tries to improve the approximation sequentially by
introducing \emph{hybrid }distributions which interpolate between
the current approximation and the true posterior. A hybrid distribution
$h_{i}$ contains one site from the \emph{true posterior, }but all
the rest come from the approximation:

\begin{equation}
h_{i}\left(\mathbf{x}\right)\propto q_{-i}\left(\mathbf{x}\right)l_{i}(\mathbf{x}),\quad q_{-i}(\mathbf{x})=\prod_{j\neq i}f_{j}(\mathbf{x}).\label{eq:hybrid}
\end{equation}

Hybrids should be tractable, meaning that one should be able to compute
their moments quickly. Note that in exponential-family notation, the
$q_{-i}\left(\mathbf{x}\right)$ distribution is simply:

\[
q_{-i}(\mathbf{x})\propto\exp\left\{ \left(\bl_{s}-\bl_{i}\right)\bm{t}\left(\mathbf{x}\right)\right\} 
\]

EP improves the approximation sequentially by (a) picking a site $i$
(b) computing the moments of the hybrid $h_{i}$ and (c) setting $\bl_{s}$
such that the moments of $q_{\bl_{s}}$ match the moments of the hybrid.

\begin{algorithm}
Loop until convergence

For $i$ in $1\ldots n$
\begin{enumerate}
\item Compute ``cavity'' parameter $\bl_{-i}\leftarrow\mathbf{\bl}_{s}-\bl_{i}$
\item Form hybrid distribution and compute its moments 
\[
\mathbf{\be}_{i}=\frac{1}{Z_{i}}\int\bm{t}\left(\mathbf{x}\right)l_{i}\left(\mathbf{x}\right)\exp\left\{ \bm{\lambda}_{-i}^{t}\bm{t}\left(\mathbf{x}\right)-\phi\left(\bm{\lambda}_{-i}\right)\right\} \mbox{d}\mathbf{x}
\]

\item Update global parameter $\bl_{s}\leftarrow\nabla\phi^{-1}\left(\be_{i}\right)$,
site parameter $\bl_{i}\leftarrow\bl_{s}-\sum_{j\neq i}\bl_{j}$
\end{enumerate}
\protect\caption{Classic EP in exponential family form\foreignlanguage{british}{\label{alg:Classic-EP}}}
\end{algorithm}

Classic EP (Alg. \ref{alg:Classic-EP}) loops over the sites sequentially. 

A parallel variant forms all the hybrids at once, looping several
times over the whole dataset (Alg. \ref{alg:Parallel-EP}). 

\begin{algorithm}
Loop until convergence
\begin{enumerate}
\item Process all hybrids: for $i$ in $1\ldots n$

\begin{enumerate}
\item Compute ``cavity'' parameters $\bl_{-i}\leftarrow\mathbf{\bl}_{s}-\bl_{i}$
\item Form hybrid distribution and compute moments 
\[
\mathbf{\be}_{i}=\frac{1}{Z_{i}}\int\bm{t}\left(\mathbf{x}\right)l_{i}\left(\mathbf{x}\right)\exp\left\{ \bm{\lambda}_{-i}^{t}\bm{t}\left(\mathbf{x}\right)-\phi\left(\bm{\lambda}_{-i}\right)\right\} \mbox{d}\mathbf{x}
\]

\item Compute local update $\bl_{i}\leftarrow\nabla\phi^{-1}\left(\be_{i}\right)-\sum_{j\neq i}\bl_{j}$
\end{enumerate}
\item Update global parameters $\bl_{s}\leftarrow\sum\bl_{i}$
\end{enumerate}
\protect\caption{Parallel EP\foreignlanguage{british}{\label{alg:Parallel-EP}}}
\end{algorithm}

\subsection*{Averaged EP}

We introduce a simpler variant of EP with a drastically reduced parameter
set: namely, we get rid of all site-specific parameters $\bl_{i}$
and keep only global parameters $\bl_{s}$. The resulting algorithm
is simpler to analyze. Our variant is straightforward, and follows
from setting $\bl_{i}=\frac{1}{n}\bl_{s}$ for all $i$, under the
assumption that the contributions from all sites are be similar.

Proceeding step-by-step from alg. \ref{alg:Parallel-EP} we begin
with the cavity parameter, which becomes $\bl_{c}=\bl_{s}-\frac{1}{n}\bl_{s}=\frac{n-1}{n}\bl_{s}$
independent of $i$. We use the cavity parameter to form hybrid distributions
just as before:

\[
h_{i}\left(\mathbf{x}\right)\propto l_{i}\left(\mathbf{x}\right)\exp\left\{ \frac{n-1}{n}\bl_{s}\bm{t}\left(\mathbf{x}\right)\right\} 
\]

The moments of the hybrids are again noted $\be_{i}$, and inserting
the local updates into the update for the global parameter we get
(recall that $\nabla\phi^{-1}$ transforms moment parameters into
natural parameters):

\begin{eqnarray}
\bl_{s}^{'} & = & \sum\left\{ \nabla\phi^{-1}\left(\be_{i}\right)-\bl_{c}\right\} =\sum\left\{ \nabla\phi{}^{-1}\left(\be_{i}\right)-\frac{n-1}{n}\bl_{s}\right\} \nonumber \\
 & = & \sum\nabla\phi{}^{-1}\left(\be_{i}\right)-\left(n-1\right)\bl_{s}\label{eq:simplified-update-rule}
\end{eqnarray}

It is interesting to examine the fixed points of this update rule,
which satisfy:

\begin{eqnarray*}
\mathbf{\bl}_{s}^{\star} & = & \sum\nabla\phi{}^{-1}\left(\be_{i}\vert\bl_{s}^{\star}\right)-\left(n-1\right)\bl_{s}^{\star}
\end{eqnarray*}

or equivalently:

\[
\mathbf{\bl}_{s}^{\star}=\frac{1}{n}\sum\nabla\phi^{-1}\left(\be_{i}\vert\bl_{s}^{\star}\right)
\]

where the hybrid moments $\be_{i}$ depend implicitly on $\bl_{s}$.
The following averaging rule shares the same fixed points:

\begin{equation}
\bl_{s}^{'}=\frac{1}{n}\sum\nabla\phi^{-1}\left(\be_{i}\vert\bl_{s}\right)\label{eq: averaged update}
\end{equation}

and that is the rule that gives averaged-EP (aEP) its name%
\footnote{Note that it corresponds to a slowed down version of the aEP update%
}.

The resulting method is given in Alg. \ref{alg:averaged-EP} but can
be summarized in a few words. To improve an exponential-family approximation
aEP begins by forming $n$ hybrids of the approximation and the true
posterior, it computes their moments, uses those to compute the new
site approximations and the corresponding natural parameters, and
sets the new natural parameters of the approximation to the sum of
the site approximations.

\begin{algorithm}
Loop until convergence
\begin{enumerate}
\item Compute ``cavity'' parameters $\bl_{c}\leftarrow\frac{n-1}{n}\mathbf{\bl}_{s}$
\item For $i$ in $1\ldots n$, form hybrid distribution and compute moments
\[
\mathbf{\be}_{i}\leftarrow\frac{1}{Z_{i}}\int\bm{t}\left(\mathbf{x}\right)l_{i}\left(\mathbf{x}\right)\exp\left\{ \bm{\lambda}_{c}^{t}\bm{t}\left(\mathbf{x}\right)-\phi\left(\bm{\lambda}_{c}\right)\right\} \mbox{d}\mathbf{x}
\]

\item Update global parameters $\bl_{s}^{'}\leftarrow\sum\nabla\phi{}^{-1}\left(\be_{i}\right)-\left(n-1\right)\bl_{s}$
\end{enumerate}
\protect\caption{averaged-EP\label{alg:averaged-EP}}
\end{algorithm}

\section{Asymptotic behavior of the EP and aEP updates \label{sec:Asymptotically-exact}}

In this section, we investigate the dynamics of the EP and aEP algorithms.
We first present a new key result on the asymptotic behavior of the
EP approximation of a site: we show that as the variance of the cavity
$q_{-i}$ converges to 0, the approximation converges to a simple
Taylor approximation of $\log\left(l_{i}\right)$. This asymptotic
behavior has several consequences but we present here the most important
one: in the limit where all cavity priors $q_{-i}$ have small variance,
the parallel EP and the aEP updates converge towards the updates of
Newton's algorithm. A corollary is that, for multimodal target distributions,
all modes which have sufficient curvature have an associated EP fixed
point and that, as a certain measure of mode peakedness goes to infinity,
the EP fixed point converges to the CGA at that mode. Finally, this
enables us to prove that EP is asymptotically exact in the large-data
limit (if the CGA also is).

\subsection{Assumptions}

Throughout this section, we work in the one-dimensional case since
it is the easiest to understand. All results are straightforward to
extend to the $p$-dimensional case (i.e., when the target distribution
is $p-$dimensional), the most significant difficulty being notation.
In the appendix, we give the proofs for the \emph{$p$}-dimensional
case.

We use two assumptions on the sites $l_{i}\left(x\right)$. 

Both our conditions concern the negative log-likelihood of the sites
$\phi_{i}\left(x\right)=-\log\left(l_{i}\left(x\right)\right)$. Our
first assumption is that the second log-derivative of the sites has
a bounded range: there exists $B$ such that:
\begin{equation}
\forall i,x\ \ \max\left(\phi_{i}^{''}\left(x\right)\right)-\min\left(\phi_{i}^{''}\left(x\right)\right)\leq B\label{eq: bounded curvature condition}
\end{equation}

Our second condition concerns bounding some higher log-derivatives
of the sites, which ensures that all sites are sufficiently regular
so that we can use Taylor expansions and bound the remainder terms.
Our assumption is simply that there exist bounds $K_{3}$ and $K_{4}$
which bound the third and fourth derivatives of all $\phi_{i}$ functions.
For $d\in\left\{ 3,4\right\} $:
\begin{equation}
\forall i,x\ \ \left|\phi_{i}^{\left(d\right)}\left(x\right)\right|\leq K_{d}\label{eq: second assumption: bounded log-derivatives}
\end{equation}

Both of those conditions are easy to check in practice. For example,
for a Generalized Linear Model, we would simply need to check the
derivatives of the link function and that the design matrix is bounded
and of full column rank. The one important case for which we cannot
apply our result concerns non-parametric models, and, more generally,
cases in which $p$ is not fixed but grows. This reflects a limitation
of our proof, rather than one of EP, which works just fine in such
cases (see Appendix for details).

An important thing to note is that we chose those two assumptions
because they give very simple expressions for the error of the asymptotic
expression, but the limit behavior we present can still be reached
even if they are broken. In the appendix, we show how weaker assumptions
(bounded $l_{i}\left(x\right)$ and local smoothness of $\phi_{i}\left(x\right)$)
are sufficient to obtain our results on the limit behavior with similar
asymptotic errors.

\subsection{Limit behavior of the EP update}

\subsubsection{Limit behavior of the site update}

The only complicated step in EP and aEP (especially in practical implementation)
is the site-approximation update during which we form the hybrid distribution,
compute its moments and then subtract the contribution of the cavity
to obtain the approximation of the site. We study here the limit behavior
of the site-approximation as the cavity becomes more and more precise.
The result we obtain is essential to the rest of this work, but not
entirely intuitive, so our explanation will be progressive and careful.

What we are interested in is the limit behavior of the site update,
as the precision of the cavity becomes large. The reason we focus
on the high-precision limit is that, when there are many sites (datapoints),
each individual one makes a small contribution compared to the rest.
The cavity represents the contribution of all the other sites, and
generally speaking the more sites there are the lower the variance
of the cavity (the higher the precision). In large-data settings,
the cavity prior tends to dominate the site's likelihood, meaning
that at the level of individual sites, \emph{the ``large data''
limit becomes a ``weak data'' limit. }

To study that limit, our first object of interest is naturally the
hybrid:

\[
h_{i}\left(x\right)\propto l_{i}\left(x\right)\exp\left(-\frac{\beta}{2}x^{2}+\left(\beta\mu_{0}\right)x\right)
\]

where we have parametrized the cavity precision as $\beta$, and the
cavity mean stays constant (at $\mu_{0}$ throughout) %
\footnote{In the notation of the previous section, the natural parameters are
$\bl=\left[\begin{array}{cc}
\beta & \beta\mu_{0}\end{array}\right]^{t}$, the precision and linear shift.%
}. As $\beta$ grows large, the cavity prior (the Gaussian part) outweighs
the likelihood, and the hybrid starts to resemble a Gaussian centered
at $\mu_{0}$ with variance $\beta^{-1}$. Indeed those are provably
the limits of the mean and variance of $h_{i}$ when $\beta\rightarrow\infty$. 

When $\beta$ is large, the hybrid is almost the same as the cavity,
and it is tempting to conclude that when $\beta$ is large no update
happens (the cavity prior outweighs the likelihood $l_{i}$, the site
becomes negligible). That line of reasoning, although tempting, is
misleading, as an examination of the case of a Gaussian site shows.
Suppose 
\[
l_{i}\left(x\right)=\exp\left(-\frac{\gamma}{2}x^{2}+\alpha x\right)
\]

then, regardless of how large $\beta$ is, it is straightforward to
show that the site's natural parameters are always $\beta_{i}=\gamma$
and $r_{i}=\alpha.$ In other words: even when the prior outweighs
the likelihood, the site always increases the overall precision by
an additive factor and contributes to the overall linear shift.

In the non-Gaussian case the site's natural parameters also have a
non-trivial limit. The exact form of that limit turns out to be very
interesting, as it shares a close relationship to Newton's method.
Specifically, we show that $r_{i}$ reflects the gradient of the log-likelihood
at the cavity mean and $\beta_{i}$ the Hessian. In other words, the
log of the site-approximation tends towards the Taylor expansion around
$\mu_{0}$ of the log-site $\phi_{i}\left(x\right)$. Fig. \ref{fig:Site-limits}
illustrates that behavior in a simple scenario, where:

\begin{equation}
h_{i}\left(x\right)=\left(\frac{1}{1+e^{-x}}\right)\exp\left(-\frac{\beta}{2}x^{2}\right)\label{eq:logit-hybrid}
\end{equation}

which corresponds to a logit likelihood and a cavity prior centered
at 0. Here $\phi_{i}=\log\left(1+e^{-x}\right)$, $\phi{}_{i}^{'}\left(0\right)=-\frac{1}{2}$,
and $\phi{}_{i}^{''}\left(0\right)=\frac{1}{4}$. 

\begin{figure}
\begin{centering}
\includegraphics[width=8cm]{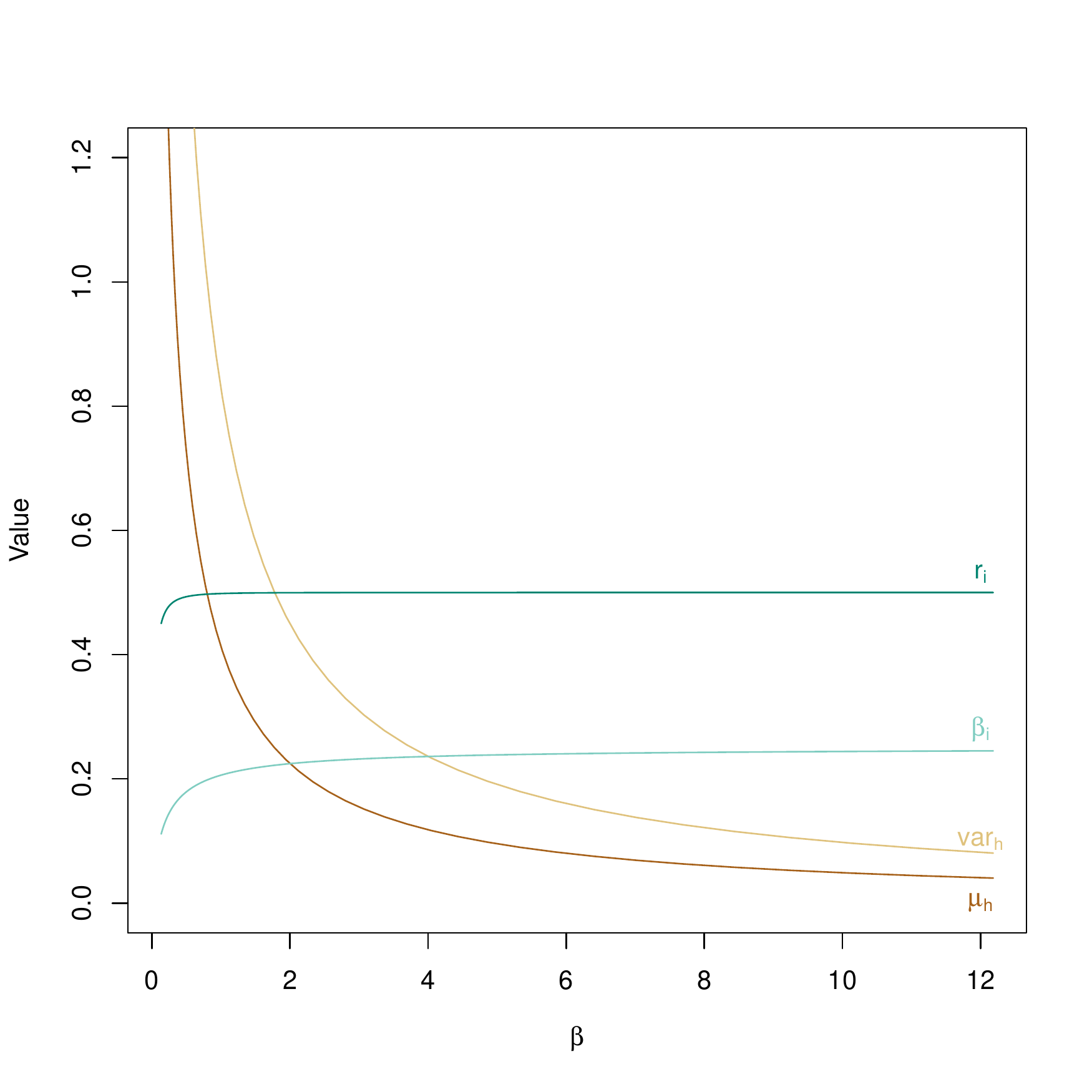}
\par\end{centering}

\protect\caption{Limit of site updates under increasing cavity precision. We use the
example given by eq. \eqref{eq:logit-hybrid}, where the site is a
logistic likelihood, the cavity prior has mean 0 and precision $\beta.$
The quantities shown are: the mean and variance of $h_{i}$ (labeled
$\mu_{h}$ and $var_{h})$, and the site parameters $r_{i}$ and $\beta_{i}$.
As expected, the mean tends to 0, and the variance tends to $\beta^{-1}$.
More surprisingly, the site parameters have non-trivial limits that
can be computed exactly (see main text). \label{fig:Site-limits} }
\end{figure}

We can now state our result formally. For simplicity, the case we
have just discussed had increasing precision and a fixed mean. The
following theorem is stated in a slightly more general case in which
the cavity mean is slightly offset from $\mu_{0}$, but tends to it
in as $\beta\rightarrow+\infty$. This more general case is important
for the corollaries we derive from this theorem.
\begin{thm}
Limit behavior of hybrid distributions\label{thm:Limit-behavior-of the site approximation}

Consider the hybrid distribution: $h_{i}(x)=l_{i}\left(x\right)\exp\left(-\frac{\beta}{2}x^{2}+\left(\beta\mu_{0}-\delta r\right)x\right)$.
In the limit that $\beta\rightarrow\infty$, the natural parameters
of its Gaussian approximation converge to:
\begin{eqnarray*}
\mbox{var}_{h_{i}}^{-1}E_{h_{i}} & \approx & \mbox{var}_{h_{i}}^{-1}\mu_{0}-\delta r+\phi_{i}^{'}\left(\mu_{0}\right)\\
\mbox{var}_{h_{i}}^{-1} & \approx & \beta+\phi_{i}^{''}\left(\mu_{0}\right)
\end{eqnarray*}

Thus, the natural parameters of the EP approximation ($r_{i}=var_{h_{i}}^{-1}E_{h_{i}}(x)-(\beta\mu_{0}-\delta r)$
and $\beta_{i}=var_{h_{i}}^{-1}-\beta$) of $l_{i}$ converge:
\begin{eqnarray*}
r_{i} & = & -\phi_{i}^{'}(\mu_{0})+\beta_{i}\mu_{0}+\mathcal{O}\left(\beta^{-1}+\left|\delta r+\phi_{i}^{'}(\mu_{0})\right|\beta^{-1}\right)\\
\beta_{i} & = & \phi_{i}^{''}(\mu_{0})+\mathcal{O}\left(\beta^{-1}+\left|\delta r+\phi_{i}^{'}(\mu_{0})\right|\beta^{-1}\right)
\end{eqnarray*}

\end{thm}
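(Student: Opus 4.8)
The plan is to analyze the hybrid $h_i(x) \propto \exp\left(-\phi_i(x) - \tfrac{\beta}{2}x^2 + (\beta\mu_0 - \delta r)x\right)$ via a change of variables that rescales to the natural width of the distribution. Since the Gaussian part has precision $\beta$, I would set $x = \mu_0 + \beta^{-1/2} z$, so that in the $z$-variable the log-density becomes $-\tfrac12 z^2 + \beta^{-1/2}\left(-\delta r - \phi_i'(\mu_0)\right)z - \beta^{-1}\tfrac{\phi_i''(\mu_0)}{2}z^2 + O(\beta^{-3/2})$ after Taylor-expanding $\phi_i$ around $\mu_0$ and using the derivative bounds \eqref{eq: second assumption: bounded log-derivatives} to control the remainder uniformly in $i$. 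Thus the hybrid in $z$-coordinates is a $o(1)$-perturbation of a standard Gaussian, with the perturbation being an explicit low-degree polynomial plus a controlled remainder.

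The key steps, in order: (i) establish that $h_i$ is, for $\beta$ large, log-concave up to a bounded-curvature perturbation — more precisely, write $\phi_i''(x) = \phi_i''(\mu_0) + (\text{bounded by } K_3|x-\mu_0|)$, so that the total log-density $\tfrac\beta2 x^2 + \phi_i(x) + \dots$ has Hessian $\beta + \phi_i''(\mu_0) + O(\beta^{-1/2})$ on the region where $h_i$ has non-negligible mass; (ii) use Brascamp--Lieb \eqref{eq:Brascamp-Lieb} together with condition \eqref{eq: bounded curvature condition} (to get a lower bound $H\psi \geq \beta + \min\phi_i''$ on the exact log-Hessian, or a shifted log-concave comparison if $\min\phi_i''$ is very negative) to conclude $\mathrm{var}_{h_i} \leq (\beta + c)^{-1} = O(\beta^{-1})$, which justifies that the mass of $h_i$ concentrates in an $O(\beta^{-1/2})$-neighborhood of $\mu_0$; (iii) compute $E_{h_i}(x)$ and $\mathrm{var}_{h_i}(x)$ by expanding the relevant integrals $\int x^k e^{-\phi_i(x) - \beta x^2/2 + \dots}\,dx$ in powers of $\beta^{-1/2}$, using the concentration from (ii) to bound the tail contributions and the $K_3, K_4$ bounds to bound the Taylor remainder of $\phi_i$; this yields $\mathrm{var}_{h_i}^{-1} = \beta + \phi_i''(\mu_0) + O(1)$ — wait, one must be careful to get the stated $O(\beta^{-1})$ error on the \emph{natural} parameters, not just $O(1)$. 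The point is that the leading perturbation of the Gaussian contributes to the precision at order $\beta^0$ (namely $\phi_i''(\mu_0)$) exactly, and the \emph{next} correction is $O(\beta^{-1})$; this requires carrying the expansion to one more order and observing that the odd-order term integrates against the symmetric Gaussian to produce a shift in the mean (the $-\phi_i'(\mu_0)$ term, at order $\beta^0$ relative to $\mathrm{var}_{h_i}^{-1}\mu_0$) rather than the variance. (iv) Finally, translate from the hybrid's moments to the site's natural parameters by the definitions $\beta_i = \mathrm{var}_{h_i}^{-1} - \beta$ and $r_i = \mathrm{var}_{h_i}^{-1}E_{h_i} - (\beta\mu_0 - \delta r)$, and collect the error terms, tracking how the offset $\delta r$ enters: it shifts the effective mean of the Gaussian part by $\delta r / \beta$, which propagates to a term $|\delta r + \phi_i'(\mu_0)|\beta^{-1}$ in the error, consistent with the statement.

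The main obstacle I anticipate is step (iii): controlling the moment expansion to the precise order $\beta^{-1}$ uniformly in $i$, while the perturbation to log-concavity from $\phi_i$ is only mildly controlled (its curvature can be negative, its range bounded by $B$). One cannot simply Laplace-expand, because $h_i$ need not be unimodal a priori and $\phi_i$ is not assumed convex. The resolution is to use Brascamp--Lieb as the workhorse for \emph{a priori} variance/concentration bounds (rather than a Hessian-at-the-mode argument), which is robust to non-convexity as long as the \emph{net} log-density is convex — and it is, since $\beta$ dominates, once $\beta > B - \min_i \min_x \phi_i''$, i.e.\ for $\beta$ large enough. With the $O(\beta^{-1})$ variance bound in hand, one splits each moment integral into a central region $|x - \mu_0| \leq \beta^{-1/2+\epsilon}$, where Taylor expansion of $\phi_i$ with remainder bounded by $K_3, K_4$ is valid, and a tail region whose contribution is super-polynomially small in $\beta$ by Gaussian concentration; the central-region integral is then a finite-order Gaussian moment computation. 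The bookkeeping of which order each term lands at is the genuinely delicate part, but it is mechanical once the concentration estimate licenses the expansion.
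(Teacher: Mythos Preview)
Your proposal is correct in outline and would work, but it takes a genuinely different route from the paper. You propose a direct Laplace-type expansion: rescale $x = \mu_0 + \beta^{-1/2}z$, Taylor-expand $\phi_i$ in the exponent, use Brascamp--Lieb for a priori concentration to control the tails, and then compute the first two moments of the perturbed Gaussian to the required order. The paper instead avoids direct moment computation almost entirely by using two exact identities. First, the ``Stein relation'' $E_{h_i}[\partial_x \log h_i(x)] = 0$ yields $\beta(\mu_{h_i}-\mu_0) = -\delta r - E_{h_i}[\phi_i'(x)]$ and, combined with the same identity applied to the Gaussian projection of $h_i$, the exact formula $r_i = \beta_i\mu_{h_i} - E_{h_i}[\phi_i'(x)]$. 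Second, for the variance the paper brackets $\mathrm{var}_{h_i}^{-1}$ from both sides: Brascamp--Lieb gives $\mathrm{var}_{h_i}^{-1} \geq \beta + E_{h_i}[\phi_i''(x)] - O(\beta^{-2})$ (after a second-order expansion of the inverse), while a Cram\'er--Rao inequality gives $\mathrm{var}_{h_i}^{-1} \leq \beta + E_{h_i}[\phi_i''(x)]$. Everything then reduces to controlling $E_{h_i}[\phi_i'(x)]$ and $E_{h_i}[\phi_i''(x)]$ by a one-line Taylor expansion around $\mu_{h_i}$ or $\mu_0$, using only the coarse $O(\beta^{-1})$ variance bound.

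The trade-off: your approach is more elementary (no Stein or Cram\'er--Rao machinery, just standard asymptotic analysis), but the paper's approach is cleaner and the authors explicitly remark that computing the mean and variance asymptotics directly ``obtain[s] bounds that are a bit ugly and not very tight''. In your route the $O(\beta^{-1})$ error on $\beta_i$ requires $\mathrm{var}(z)$ to order $\beta^{-2}$, which forces you to track the cross-term between the $\beta^{-1/2}(\delta r + \phi_i')z$ shift and the cubic remainder; this is exactly where the $|\delta r + \phi_i'(\mu_0)|\beta^{-1}$ error term appears, and you correctly flag it, but the bookkeeping is heavier. The paper's identities also generalise transparently to the $p$-dimensional case, whereas the rescaled-integral expansion becomes more cumbersome there.
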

Note the important role of the $\delta r$ term: it causes the cavity-mean
to be slightly different from $\mu_{0}$, but it can accelerate convergence
when set precisely to $\delta r=-\phi_{i}^{'}\left(\mu_{0}\right)$
(see appendix).
\begin{proof}
We only give a sketch of the proof here, because it is too long and
involved.

Our proof can be understood as simply computing the asymptotic behavior
of $E_{h_{i}}\left(x\right)$ and $\text{var}_{h_{i}}\left(x\right)$.
The first order is easily found to be: $E_{h_{i}}\left(x\right)\approx\mu_{0}$
and $\text{var}_{h_{i}}\left(x\right)\approx\beta^{-1}$. However,
when we compute the new values for $r_{i}$ and $\beta_{i}$, the
subtraction of the cavity parameters effectively cancels that first
order term. In our proof, we thus go beyond the first order and compute
the next order which gives us the claimed bound.

In practice, we use two tricks that enable us to directly express
$\beta_{i}$ and $r_{i}$ as expected values under the hybrid $h_{i}$,
which saves us from actually computing the limit behavior of the mean
and variance. We then approximate these expected values using Taylor
expansions and get the claimed result. See the appendix for details.
\end{proof}

\subsubsection{Limit behavior of parallel-EP and aEP.}

Now that we have some handle on the behavior of site-updates, we can
start to study the behavior of the algorithm as a whole. A full step
of aEP or parallel EP is a combination of $n$ site updates, and that
is what we characterize next. We show that one step of parallel-EP
or of one step of aEP both converge towards the result of one step
of Newton's algorithm. It is also possible to use Theorem \ref{thm:Limit-behavior-of the site approximation}
to describe the limit behavior of sequential-EP or of an EP variant
which updates batches of sites sequentially, which tend to variants
of Newton's%
\footnote{For example, sequential EP would asymptote to a variant of sequential
gradient descent with a Hessian correction. See \citet{Opper1999}
for a more extensive discussion of the link between sequential EP
and sequential gradient descent.%
}. We choose to focus on parallel-EP because its limiting behavior
is classic Newton's.

Let's first present the limit behavior of aEP, which is easier to
visualize because it only has two parameters instead of $2n$-parameters
like EP. One interesting feature of the limit behavior of the site-update
is that the value of the cavity precision $\beta_{-i}$ does not influence
the limit behavior, which is only set by the cavity mean $\mu_{-i}$.
In the aEP algorithm, the cavity mean is always equal to the current
approximation mean. Thus, when we sum all $r_{i}$ and $\beta_{i}$
approximations, we find that the limit behavior of the aEP update
also has that feature: the approximation at the next step mostly depends
on the current mean of the approximation, and corresponds to a Newton's
update.

The limit behavior of EP is similar, but is a little more complicated
to state. This is due to two additional complications. The first complication
is that, in EP, each cavity mean $\mu_{-i}$ is slightly different.
This is where the $\delta r$ parameter from Theorem \ref{thm:Limit-behavior-of the site approximation}
comes into play: it enables us to see each cavity distribution instead
as almost centered at the same mean but slightly offset in a specific
direction. The second complication is that, whereas in aEP each cavity
distribution has the same precision, once again each cavity distribution
is different in EP. In the end, these complications hardly matter
for the limit behavior, but they do make it slightly harder to understand
how EP works.
\begin{thm}
Limit behavior of aEP and EP\label{thm:Limit-behavior-of aEP and EP}

Consider a current EP approximation $\left(r_{i,}\beta_{i}\right)_{i\in\left[1,n\right]}$
and the corresponding aEP approximation $\left(r=\sum r_{i},\beta=\sum\beta_{i}\right)$
whose current mean is $\mu_{0}=\frac{\sum r_{i}}{\sum\beta_{i}}$.
In the limit that all cavity-precisions $\beta_{-i}=\sum_{j\neq i}\beta_{j}$
tend to infinity (so that $\min\left(\beta_{-i}\right)$ is of same
order as $\beta$), the limit behavior of one step of aEP and of one
step of EP is identical to Newton's algorithm. 

For aEP, the global parameters at the next step are:
\begin{eqnarray*}
r_{aEP} & = & -\psi^{'}(\mu_{0})+\beta_{aEP}\mu_{0}+\mathcal{O}\left(n\beta^{-1}+\sum_{i}\left|\phi_{i}^{'}(\mu_{0})\right|\beta^{-1}\right)\\
\beta_{aEP} & = & \psi^{''}(\mu_{0})+\mathcal{O}\left(n\beta^{-1}+\sum_{i}\left|\phi_{i}^{'}(\mu_{0})\right|\beta^{-1}\right)
\end{eqnarray*}

For EP, the global parameters at the next step are:
\begin{eqnarray*}
\sum_{i}r_{i}^{new} & = & -\psi^{'}(\mu_{0})+\left(\sum_{i}\beta_{i}^{new}\right)\mu_{0}+\mathcal{O}\left(n\beta^{-1}+\sum_{i}\left|r_{i}-\beta_{i}\mu_{0}+\phi_{i}^{'}(\mu_{0})\right|\beta^{-1}\right)\\
\sum_{i}\beta_{i}^{new} & = & \psi^{''}(\mu_{0})+\mathcal{O}\left(n\beta^{-1}+\sum_{i}\left|r_{i}-\beta_{i}\mu_{0}+\phi_{i}^{'}(\mu_{0})\right|\beta^{-1}\right)
\end{eqnarray*}
\end{thm}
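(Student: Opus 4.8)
The plan is to obtain Theorem~\ref{thm:Limit-behavior-of aEP and EP} as a direct corollary of Theorem~\ref{thm:Limit-behavior-of the site approximation}: apply the single-site result to each of the $n$ hybrids, then sum. The one preliminary observation is that the factorization $p(x)\propto\prod_i l_i(x)$ gives $\psi(x)=\sum_i\phi_i(x)$, hence $\psi'(\mu_0)=\sum_i\phi_i'(\mu_0)$ and $\psi''(\mu_0)=\sum_i\phi_i''(\mu_0)$; so the Newton-type right-hand sides in the statement are exactly the sums of the per-site right-hand sides of Theorem~\ref{thm:Limit-behavior-of the site approximation}. Throughout, $\beta$ denotes the total precision $\sum_i\beta_i$ and $\mu_0=(\sum_i r_i)/(\sum_i\beta_i)$ the current mean.

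For aEP I would first read off the cavity of Algorithm~\ref{alg:averaged-EP}: its natural parameters are $\frac{n-1}{n}$ times those of the current approximation, so it has precision $\beta_c=\frac{n-1}{n}\beta$ and mean exactly $\mu_0$. In the parametrization of Theorem~\ref{thm:Limit-behavior-of the site approximation} every hybrid then has the form $h_i(x)=l_i(x)\exp\!\left(-\frac{\beta_c}{2}x^2+\beta_c\mu_0\,x\right)$ with $\delta r=0$. Applying the theorem site-by-site, and using that $\frac{n}{n-1}$ is a bounded constant so $\beta_c^{-1}=\mathcal{O}(\beta^{-1})$, yields $\beta_i^{new}=\phi_i''(\mu_0)+\mathcal{O}(\beta^{-1}+|\phi_i'(\mu_0)|\beta^{-1})$ and $r_i^{new}=-\phi_i'(\mu_0)+\beta_i^{new}\mu_0+\mathcal{O}(\beta^{-1}+|\phi_i'(\mu_0)|\beta^{-1})$. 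It remains to note that the aEP global update $\bl_s'=\sum_i\nabla\phi^{-1}(\be_i)-(n-1)\bl_s$ equals $\sum_i\bl_i^{new}$ with $\bl_i^{new}=\nabla\phi^{-1}(\be_i)-\bl_c$ the per-site parameter of Theorem~\ref{thm:Limit-behavior-of the site approximation}, since $n\bl_c=(n-1)\bl_s$. Summing the two per-site displays over $i$ gives $(r_{aEP},\beta_{aEP})$ with error $\mathcal{O}(n\beta^{-1}+\sum_i|\phi_i'(\mu_0)|\beta^{-1})$.

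For EP the only additional work is handling the $n$ distinct cavities. The cavity of site $i$ (Algorithm~\ref{alg:Parallel-EP}) is $\bl_{-i}=\bl_s-\bl_i$, i.e.\ precision $\beta_{-i}=\sum_{j\neq i}\beta_j$ and linear shift $\sum_{j\neq i}r_j=\beta\mu_0-r_i$, so its mean is $\mu_{-i}=(\beta\mu_0-r_i)/(\beta-\beta_i)$. I would write $h_i$ in the exact form of Theorem~\ref{thm:Limit-behavior-of the site approximation} with cavity precision $\beta_{-i}$ and offset $\delta r_i$ determined by $\beta_{-i}\mu_0-\delta r_i=\beta_{-i}\mu_{-i}$; a one-line computation using $\beta\mu_0=\sum_j r_j$ gives $\delta r_i=r_i-\beta_i\mu_0$, hence $\delta r_i+\phi_i'(\mu_0)=r_i-\beta_i\mu_0+\phi_i'(\mu_0)$, precisely the quantity appearing in the stated error term. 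Applying Theorem~\ref{thm:Limit-behavior-of the site approximation} to each site, using the hypothesis that $\min_i\beta_{-i}$ is of the same order as $\beta$ to replace every $\beta_{-i}^{-1}$ by $\mathcal{O}(\beta^{-1})$, and summing (so $\sum_i\beta_i^{new}=\psi''(\mu_0)+\mathcal{O}(\cdot)$ and $\sum_i r_i^{new}=-\psi'(\mu_0)+(\sum_i\beta_i^{new})\mu_0+\mathcal{O}(\cdot)$, the global update of parallel EP being $\bl_s'=\sum_i\bl_i^{new}$) gives the claimed bounds.

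I expect essentially no conceptual obstacle here — all the analytic content sits in Theorem~\ref{thm:Limit-behavior-of the site approximation}, which I may assume. The two places requiring care are: (i) the per-site error constants delivered by Theorem~\ref{thm:Limit-behavior-of the site approximation} must be uniform in $i$, which is exactly what assumptions \eqref{eq: bounded curvature condition} and \eqref{eq: second assumption: bounded log-derivatives} guarantee, so that adding $n$ per-site errors of size $\mathcal{O}(\beta^{-1}+|\cdot|\beta^{-1})$ really produces the $\mathcal{O}(n\beta^{-1}+\sum_i|\cdot|\beta^{-1})$ terms in the statement and nothing larger; and (ii) the bookkeeping matching ``natural parameters of the hybrid $h_i$'' against the ``site parameter $\bl_i^{new}$'' and then against the global update rule of each algorithm, together with the identity $\delta r_i=r_i-\beta_i\mu_0$ for EP, must be done consistently so the cancellations that turn $n\beta_c$ into $(n-1)\beta$ (aEP) and that leave $\sum_i\bl_i^{new}$ as the new global parameter (EP) are visible.
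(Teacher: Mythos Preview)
Your proposal is correct and follows essentially the same route as the paper: apply Theorem~\ref{thm:Limit-behavior-of the site approximation} site by site (with $\delta r=0$ for aEP and $\delta r_i=r_i-\beta_i\mu_0$ for EP) and sum, using $\psi'=\sum_i\phi_i'$, $\psi''=\sum_i\phi_i''$. Your treatment is in fact more explicit than the paper's about the bookkeeping (matching site parameters to global updates and the uniformity of the per-site constants), but the argument is the same.
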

\begin{proof}
This result is simply obtained by summing the approximations offered
by Theorem \ref{thm:Limit-behavior-of the site approximation}.

For aEP, this is simple enough: all the cavity distributions are Gaussians
with precision $\frac{n-1}{n}\beta$ and with mean $\mu_{0}$. Straightforward
application of theorem \ref{thm:Limit-behavior-of the site approximation}
leads to the claimed result.

For EP, this is more complicated since every cavity distribution is
different. However, it is straightforward to check that the cavity
densities are: 
\[
g_{-i}\left(x\right)\propto\exp\left[-\left(\beta-\beta_{i}\right)\frac{x^{2}}{2}+\left(\left(\beta-\beta_{i}\right)\mu_{0}+\beta_{i}\mu_{0}-r_{i}\right)x\right]
\]
 We can then apply theorem \ref{thm:Limit-behavior-of the site approximation}
with cavity precision $\beta-\beta_{i}$ and offset $\delta r=r_{i}-\beta_{i}\mu_{0}$,
and recover the claimed result.
\end{proof}

\subsection{Where to find EP's fixed points}

In this section, we use the results above to find out more about the
location of fixed points of EP and aEP. We show that wherever the
posterior distribution has a strongly peaked mode, a fixed point of
EP or aEP lies in the vicinity. Our proof relies on an application
of Brouwer's fixed point theorem, and relies on finding \emph{stable
regions }of the parameter space, in a sense we need to make precise.

Since Newton's iterations are strongly contractive towards posterior
modes, and since our results tell us that the iterations of EP and
aEP are not far from those of Newton's, there is a good chance EP
and aEP do not stray too far from posterior modes either. Indeed,
we prove that there exist compact regions of the parameter space near
the CGA which are stable under the aEP or the EP updates: i.e., if
we start from inside of them, we stay inside. We can picture these
stable regions as boxes in parameter spaces inside of which aEP and
EP get stuck.

Unfortunately, our bounds are too weak to guarantee that the iterations
of aEP and EP converge in such regions. We know that they cannot exit
the box, but we cannot prove that they do not wander around forever
inside of it. However, there is a much more interesting consequence
of the existence of such stable regions: from the Brouwer fixed-point
theorem, we know that any compact stable region must contain at least
one fixed-point of the corresponding iteration, and so we have boxes
in parameter spaces that contain both a fixed point of Newton's and
a fixed point of aEP/EP. In order to apply this insight, we would
then want to find stable regions that are as small as possible in
order to give the tightest bounds on the position of that fixed-point.

In this section, we focus on identifying stable regions that are a
close neighborhood of the CGA at the mode of the target distribution,
and we compute the correct asymptotic scaling of the size of the stable
region. These results are sufficient to prove that aEP and EP are
both exact in the large-data limit. However, it would be an interesting
extension of the present work to also find maximal stable regions,
and to find ``unstable'' regions: regions of the parameter space
that the EP iteration is guaranteed to leave and which therefore cannot
hold a fixed-point.

We find that all modes of $p\left(x\right)$ have the potential to
have an associated stable region, and that the size of that stable
region depends on log-curvature at the mode: more peaked modes have
an associated region that is smaller than flatter modes. We use this
result in the next section to prove that aEP and EP fixed points converge
to the CGA at the mode in the large-data limit. 

We use aEP to outline this result. Let's assume that the starting
global approximation is in close proximity to the CGA at a mode $x^{\star}$
of $p\left(x\right)$:
\begin{eqnarray*}
\left|r_{aEP}-\beta_{aEP}x^{\star}\right| & \leq & n\Delta_{r}\\
\left|\beta_{aEP}-\psi^{''}\left(x^{\star}\right)\right| & \leq & n\Delta_{\beta}
\end{eqnarray*}
By applying Theorem \ref{thm:Limit-behavior-of the site approximation}
to a Gaussian approximation centered at $x^{\star}$, we find that
the new value of the aEP parameters are such that $r_{aEP}^{new}-\beta_{aEP}^{new}x^{\star}$
is small: 
\begin{equation}
r_{aEP}^{new}-\beta_{aEP}^{new}x^{\star}=\mathcal{O}\left(n\left[\psi^{''}\left(x^{\star}\right)-\Delta_{\beta}\right]^{-1}+\left[n\Delta_{r}+\sum_{i}\left|\phi_{i}^{'}(x^{\star})\right|\right]\left[\psi^{''}\left(x^{\star}\right)-\Delta_{\beta}\right]^{-1}\right)\label{eq:aEP-error-around-fixed-point}
\end{equation}

If the error is smaller than $n\Delta_{r}$, the initial region would
be stable in $r$. In order to find the limits of the stable region,
we simply need to find values $\Delta_{\beta}$ and $\Delta_{r}$
for which we can guarantee that the error is strictly smaller. Inspection
of eq. \eqref{eq:aEP-error-around-fixed-point} suggests immediately
that the curvature at the mode (represented by $\psi^{''}\left(x^{\star}\right)$)
plays a key role: the larger the curvature, the tighter the bound. 

For EP, the stable regions take the form:
\begin{eqnarray*}
\left|r_{i}+\phi_{i}^{'}\left(x^{\star}\right)-\beta_{i}x^{\star}\right| & \leq & \Delta_{r}\\
\left|\beta_{i}-\phi_{i}^{''}\left(x^{\star}\right)\right| & \leq & \Delta_{\beta}
\end{eqnarray*}
which ensures the global approximation is inside stable regions with
the same form as those for aEP.

$\Delta_{r}$ and $\Delta_{\beta}$ are small if the log-curvature
at the mode is sufficiently high, as summarized by the following theorem:
\begin{thm}
Convergence of fixed points of EP and aEP\label{thm:Convergence-of-fixed-points of EP and aEP}

There exists an EP and an aEP fixed-point close to the CGA of $p\left(x\right)$
at $x^{\star}$ if $\phi_{i}^{''}\left(x^{\star}\right)$ is sufficiently
large. More precisely, if:
\begin{eqnarray*}
\delta_{aEP} & = & \max\left(K_{3},K_{4}\right)\sum\left|\phi_{i}^{'}\left(x^{\star}\right)\right|\left[\psi^{''}\left(x^{\star}\right)\right]^{-1}\\
\delta & = & n\max\left(K_{3},K_{4}\right)\left[\psi^{''}\left(x^{\star}\right)\right]^{-1}
\end{eqnarray*}
 are $\mathcal{O}\left(1\right)$ quantities and $\psi^{''}\left(x^{\star}\right)$
is large, then the limit of the stable regions on the global approximation,
$n\Delta_{r}$ and $n\Delta_{\beta}$, scale as $\mathcal{O}\left(\delta_{aEP}+\delta\right)$
for aEP and as $\mathcal{O\left(\delta\right)}$ for EP.\end{thm}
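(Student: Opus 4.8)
The plan is to obtain both fixed points from Brouwer's fixed-point theorem: I exhibit, around the CGA at $x^{\star}$, a small compact convex region of parameter space that the EP (resp.\ aEP) update sends into itself, so that Brouwer forces a fixed point inside, and the claimed scaling of $n\Delta_{r},n\Delta_{\beta}$ is just the size of that region. The only ingredient is Theorem~\ref{thm:Limit-behavior-of the site approximation} (equivalently its summed form, Theorem~\ref{thm:Limit-behavior-of aEP and EP}): starting from a Gaussian near the CGA, one update moves the site parameters towards the Taylor data $\left(-\phi_{i}^{'}(x^{\star}),\phi_{i}^{''}(x^{\star})\right)$ up to an error of order $\beta_{-i}^{-1}\left(1+|\delta r|+|\phi_{i}^{'}(x^{\star})|\right)$; since $x^{\star}$ is a mode, $\sum_{i}\phi_{i}^{'}(x^{\star})=\psi^{'}(x^{\star})=0$ and $\sum_{i}\phi_{i}^{''}(x^{\star})=\psi^{''}(x^{\star})$, so these Taylor data \emph{are} the CGA, and only the error, which shrinks as $\psi^{''}(x^{\star})\to\infty$, has to be controlled. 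For aEP I would work in the two-dimensional space of global parameters and take the parallelogram $R_{aEP}=\left\{(r,\beta):\ |r-\beta x^{\star}|\le n\Delta_{r},\ |\beta-\psi^{''}(x^{\star})|\le n\Delta_{\beta}\right\}$, which is nonempty, compact, convex, and (for $\Delta_{\beta}$ small) keeps $\beta$ comparable to $\psi^{''}(x^{\star})$; the aEP update $T$ is continuous on it because the hybrid moments depend smoothly on the common cavity $\bl_{c}=\frac{n-1}{n}\bl_{s}$ (dominated convergence, the hybrids being integrable with finite, non-degenerate moments on $R_{aEP}$).

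For the aEP self-map property, fix $(r,\beta)\in R_{aEP}$ and apply Theorem~\ref{thm:Limit-behavior-of the site approximation} with reference point $x^{\star}$ and offset $\delta r=\frac{n-1}{n}\left(\beta x^{\star}-r\right)$, so $|\delta r|\le n\Delta_{r}$ by the box constraint; summing over the $n$ sites and using the two mode identities gives
\[
|r^{new}-\beta^{new}x^{\star}|,\ |\beta^{new}-\psi^{''}(x^{\star})|=\mathcal{O}\!\left(\frac{n+n|\delta r|+\sum_{i}|\phi_{i}^{'}(x^{\star})|}{\psi^{''}(x^{\star})}\right),
\]
with constants depending only on $B,K_{3},K_{4}$ (this is essentially eq.~\eqref{eq:aEP-error-around-fixed-point}). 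Recalling $\delta=n\max(K_{3},K_{4})[\psi^{''}(x^{\star})]^{-1}$ and $\delta_{aEP}=\max(K_{3},K_{4})\sum_{i}|\phi_{i}^{'}(x^{\star})|[\psi^{''}(x^{\star})]^{-1}$, the numerator over $\psi^{''}(x^{\star})$ is of size $\mathcal{O}\!\left(\delta+(n\Delta_{r})\,\delta/\max(K_{3},K_{4})+\delta_{aEP}\right)$ up to the fixed factor $\max(K_{3},K_{4})$; when $\delta,\delta_{aEP}=\mathcal{O}(1)$ and $\psi^{''}(x^{\star})$ is large enough that $\delta/\max(K_{3},K_{4})=n/\psi^{''}(x^{\star})$ is small, the middle, self-referential term is at most $\tfrac12 n\Delta_{r}$, so the choice $n\Delta_{r}=n\Delta_{\beta}=C(\delta+\delta_{aEP})$ with $C$ depending only on $B,K_{3},K_{4}$ yields $T(R_{aEP})\subseteq R_{aEP}$. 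Brouwer then produces an aEP fixed point inside $R_{aEP}$, i.e.\ within $\mathcal{O}(\delta_{aEP}+\delta)$ of the CGA at $x^{\star}$.

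The EP case is the same argument in the $2n$-dimensional site-parameter space, with the \emph{product} box $R_{EP}=\prod_{i=1}^{n}\left\{(r_{i},\beta_{i}):\ |r_{i}+\phi_{i}^{'}(x^{\star})-\beta_{i}x^{\star}|\le\Delta_{r},\ |\beta_{i}-\phi_{i}^{''}(x^{\star})|\le\Delta_{\beta}\right\}$. Summing and using the mode identities puts the global approximation of any point of $R_{EP}$ inside an aEP-type box of half-widths $n\Delta_{r},n\Delta_{\beta}$, so $\beta=\sum_{j}\beta_{j}$ and each cavity precision $\beta_{-i}=\sum_{j\neq i}\beta_{j}$ are of order $\psi^{''}(x^{\star})$ --- exactly the hypothesis of Theorem~\ref{thm:Limit-behavior-of aEP and EP}, automatic in the standard large-data regime and with \eqref{eq: bounded curvature condition} controlling the individual $\beta_{j}$. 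Writing the $i$-th cavity (as in the proof of Theorem~\ref{thm:Limit-behavior-of aEP and EP}) with precision $\beta_{-i}$, reference point $x^{\star}$ and offset $\delta r_{i}=\beta_{-i}(x^{\star}-\mu_{0})+r_{i}-\beta_{i}\mu_{0}=\mathcal{O}(n\Delta_{r}+|\phi_{i}^{'}(x^{\star})|)$, Theorem~\ref{thm:Limit-behavior-of the site approximation} makes $|r_{i}^{new}+\phi_{i}^{'}(x^{\star})-\beta_{i}^{new}x^{\star}|$ and $|\beta_{i}^{new}-\phi_{i}^{''}(x^{\star})|$ of size $\mathcal{O}\!\left((1+n\Delta_{r}+|\phi_{i}^{'}(x^{\star})|)/\psi^{''}(x^{\star})\right)$ --- crucially a \emph{per-site} bound in which $|\phi_{i}^{'}(x^{\star})|$ is not summed, so under the natural normalization $\max_{i}|\phi_{i}^{'}(x^{\star})|=\mathcal{O}(\max(K_{3},K_{4}))$ it is $\mathcal{O}(\delta/n)$. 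The choice $\Delta_{r}=\Delta_{\beta}=C\delta/n$ then gives $T(R_{EP})\subseteq R_{EP}$; $R_{EP}$ is nonempty compact convex and the parallel-EP update is continuous on it, so Brouwer produces an EP fixed point inside, with $n\Delta_{r}=n\Delta_{\beta}=\mathcal{O}(\delta)$.

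The hard part is the self-consistent calibration of $\Delta_{r},\Delta_{\beta}$ in the self-map step: the error of Theorem~\ref{thm:Limit-behavior-of the site approximation} is governed by $\delta r$, i.e.\ by how far the current mean lies from $x^{\star}$, which is bounded only in terms of the very half-width $\Delta_{r}$ one is trying to pin down; and the whole expansion is valid only while the cavity precisions stay of order $\psi^{''}(x^{\star})$, which forces $\Delta_{\beta}\ll\psi^{''}(x^{\star})$. One must choose $\Delta_{r},\Delta_{\beta}$ so that every inequality closes simultaneously and verify that the resulting requirement reduces exactly to ``$\delta,\delta_{aEP}$ are $\mathcal{O}(1)$ and $\psi^{''}(x^{\star})$ is large''. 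The remaining work is careful but routine bookkeeping: making the $\mathcal{O}$-constants of Theorem~\ref{thm:Limit-behavior-of the site approximation} uniform over the box (depending only on $B,K_{3},K_{4}$), carrying out the Taylor step that converts the $\phi_{i}^{'}(\mu_{0})$-form of Theorem~\ref{thm:Limit-behavior-of aEP and EP} into the $\phi_{i}^{'}(x^{\star})$-form using \eqref{eq: bounded curvature condition} to bound $\sum_{i}|\phi_{i}^{''}|$ (sidestepped, as above, by applying Theorem~\ref{thm:Limit-behavior-of the site approximation} at $x^{\star}$ directly), and checking continuity of the two update maps by differentiation under the integral sign.
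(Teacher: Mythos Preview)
Your approach is essentially the paper's: define a box around the CGA, apply Theorem~\ref{thm:Limit-behavior-of the site approximation} at $\mu_{0}=x^{\star}$ to show the update maps the box into itself (the ``asymptotically silent'' order-1 deviations being absorbed by the large $\psi''(x^{\star})$), then invoke Brouwer. You are in fact more explicit than the paper about continuity of the update and about the self-consistent calibration of $\Delta_{r},\Delta_{\beta}$.

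There is one inefficiency in your EP argument that is worth fixing, because it is exactly the mechanism behind the sharper $\mathcal{O}(\delta)$ scaling for EP versus $\mathcal{O}(\delta+\delta_{aEP})$ for aEP. You bound $|\delta r_{i}|=\mathcal{O}(n\Delta_{r}+|\phi_{i}'(x^{\star})|)$ and then carry $|\phi_{i}'(x^{\star})|$ into the per-site error, forcing the extra hypothesis $\max_{i}|\phi_{i}'(x^{\star})|=\mathcal{O}(\max(K_{3},K_{4}))$. But the error in Theorem~\ref{thm:Limit-behavior-of the site approximation} depends on $|\delta r_{i}+\phi_{i}'(x^{\star})|$, not on $|\delta r_{i}|$ alone. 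Inside your box $R_{EP}$ one has $\delta r_{i}+\phi_{i}'(x^{\star})=\beta_{-i}x^{\star}-(r-r_{i})+\phi_{i}'(x^{\star})=-(r-\beta x^{\star})+\bigl(r_{i}+\phi_{i}'(x^{\star})-\beta_{i}x^{\star}\bigr)$, so $|\delta r_{i}+\phi_{i}'(x^{\star})|\le (n+1)\Delta_{r}$ with \emph{no} $|\phi_{i}'(x^{\star})|$ term. The site-specific offset $\delta r_{i}\approx -\phi_{i}'(x^{\star})$ cancels the site gradient, which is precisely why EP escapes the $\delta_{aEP}$ contribution that aEP incurs. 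With this cancellation your per-site bound is already $\mathcal{O}\bigl((1+n\Delta_{r})/\psi''(x^{\star})\bigr)$ and the choice $n\Delta_{r}=n\Delta_{\beta}=C\delta$ closes without any side assumption on $\max_{i}|\phi_{i}'(x^{\star})|$.
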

\begin{proof}
We only sketch the proof of this theorem which we detail in the appendix.
We focus on the simpler aEP case but the reasoning is identical for
EP.

The key idea is the following: if we perform a first order perturbation
in $r_{aEP}$ or in $\beta_{aEP}$ while $\beta_{aEP}$ is large,
then this perturbation has a negligible effect on the limit behavior.
Thus, the error still scales (almost) as if we were starting from
the CGA at $x^{\star}$: $\beta_{aEP}=\psi^{''}\left(x^{\star}\right)$
and $r_{aEP}=\beta_{aEP}x^{\star}$:
\begin{eqnarray*}
r_{aEP}^{new} & = & \beta_{aEP}^{new}x^{\star}+\mathcal{O}\left(n\max\left(K_{3},K_{4}\right)\left[\psi^{''}\left(x^{\star}\right)\right]^{-1}\right)\\
\beta_{aEP}^{new} & = & \psi^{''}\left(x^{\star}\right)+\mathcal{O}\left(\max\left(K_{3},K_{4}\right)\left[n+\sum_{i=1}^{n}\left|\phi_{i}^{'}\left(x^{\star}\right)\right|\right]\left[\psi^{''}\left(x^{\star}\right)\right]^{-1}\right)
\end{eqnarray*}
from which we get the claimed limit behavior.\end{proof}
\begin{rem*}
It might seem strange to refer to $\delta$ and $\delta_{aEP}$ as
``order 1'' quantities. This holds in the large-data regime as we
show in the next section, but an easier example to visualize is the
following. Consider the EP approximation of a fixed-probability distribution
raised to power $\lambda$: $\left[p\left(x\right)\right]^{\lambda}=\prod_{i}\left[l_{i}\left(x\right)\right]^{\lambda}$.
For that example, as $\lambda\rightarrow\infty$, $\lambda\psi^{''}\left(x^{\star}\right)\rightarrow\infty$:
the log-derivative at the mode grows linearly. However, the third
and fourth log-derivatives also grow linearly so that $\delta=n\max\left(K_{3},K_{4}\right)\left[\psi^{''}\left(x^{\star}\right)\right]^{-1}$
does not depend on $\lambda$: it is indeed of order 1. $\delta_{aEP}$
is similarly found to be of order 1.
\end{rem*}
This theorem shows two interesting features of the behavior of EP.
First of all, we see that if $p\left(x\right)$ is a multimodal distribution
where multiple modes are sufficiently peaked and they are sufficiently
separated, then EP and aEP both have multiple fixed points, and those
fixed points do not give a global account of $p\left(x\right)$ but
only fit the local shape of $p\left(x\right)$ around ``their''
mode. This is quite contrary to the common view on EP which holds
that since EP's stated target is to find an approximation of the minimizer
of $KL\left(p,q\right)$, it gives global approximations of the target
distribution. We discuss this point further in section \ref{sub:EP-behavior-on multimodal distributions}.

\subsection{Large-data limit behavior}

So far, all of our results have been deterministic: assuming some
fixed target distribution $p\left(x\right)$, we have bounded the
distance between the result of the aEP and EP updates and the result
of the NT updates. We then used those results to derive a deterministic
result on the possible positions of the aEP and EP fixed points. We
have discussed asymptotic results in those sections in terms of either
asymptotes of the parameter space (large precision $\beta$) or of
properties of the target distribution (large log-curvature at a mode
$\psi^{''}\left(x^{\star}\right)$).

In this section, we adopt a different point of view: we seek a large-data
limit result. In other words, we assume that some random process is
generating the sites $l_{i}\left(x\right)$ and we consider what happens
as more and more sites are generated. In real applications, this would
correspond to accumulating data of some kind and computing the posterior
of the unknown $x$ under some (most likely miss-specified) generative
model. We abstract all those complications away and simply treat the
$l_{i}$ functions (or, equivalently, the $\phi_{i}$ functions) as
function-valued random variables.

Throughout this section, the number of sites, $n$, is variable. We
note $p_{n}\left(x\right)$ the random variable of the posterior distribution
constructed from the $n$ first sites $l_{i}\left(x\right)$. We note
$x_{0}$ the minimum of $x\rightarrow E\left(\phi_{i}\left(x\right)\right)$:
$x_{0}$ can be thought of as the ``true'' value that we seek to
recover. We note $x_{n}^{\star}$ the mode of $p_{n}\left(x\right)$
closest to $x_{0}$ and $q_{n}\left(x\right)$ the CGA of $p_{n}\left(x\right)$
at $x_{n}^{\star}$.

In order for EP to have good behavior, we require the process generating
the $l_{i}$ to obey our assumptions on the $l_{i}$ (eqs. \eqref{eq: bounded curvature condition}
and \eqref{eq: second assumption: bounded log-derivatives}) and to
obey two additional conditions. The first condition is that the distribution
of the log-sites $\phi_{i}\left(x\right)$ is non-degenerate so that
a number of variances are finite and we can apply the law of large
numbers (see appendix for details). This is a mild condition and,
in the rare occasion where it does not apply, we could even weaken
it. This condition ensures that the log-posterior $\sum\phi_{i}\left(x\right)$
is Locally Asymptotically Normal (LAN, \citet{KleijnVanDerVaart:BernsteinVonMisesUnderMisspec}).
Under this LAN behavior, we prove that every global approximation
in the aEP and EP stable regions around $x_{n}^{\star}$ converge
in KL divergence and in total-variation towards the CGA at $x_{n}^{\star}$:
$q_{n}\left(x\right)$.

Furthermore, if the process generating the $l_{i}$ produces a concentration
of the mass of the posterior around $x_{0}$, then, combined with
the LAN behavior of the posterior, we have enough to guarantee that
the posterior $p_{n}\left(x\right)$ converges towards its CGA $q_{n}\left(x\right)$
in total-variation. This result is the last we need to prove that,
in the large-data limit, aEP and EP are exact in the following sense:
there is a large neighborhood of aEP and EP approximations that surrounds
at least one fixed-point and where the aEP and EP iterations are ``stuck'',
such that all approximations in the neighborhood are asymptotically
exact.

The technical condition we require for concentration of mass is the
following: for all $\epsilon>0$, the integrals $\int p_{n}\left(x\right)1\left(\left|x-x_{0}\right|\leq\epsilon\right)dx$,
which are random variables whose distribution is dictated by the distribution
of the $l_{i}\left(x\right)$, need to converge in probability to
1. In other words, for every $\epsilon>0$, the posterior is guaranteed
to concentrate inside the $\epsilon$-ball centered around $x_{0}$.
This should be thought of as an identifiability condition: it requires
the posterior to concentrate around the ``true'' parameter value
$x_{0}$.
\begin{thm}
aEP and EP are exact in the large-data limit.\label{thm:aEP-and-EP are exact}

Under our assumptions on the sites and the site-generating process,
all Gaussian distributions in the stable region of th. \ref{thm:Limit-behavior-of aEP and EP}
converge in total-variation to the CGA $q_{n}$ with probability 1.
The convergence rate is $\mathcal{O}\left(n^{-1/2}\right)$.

Under a further identifiability assumption, all Gaussian distributions
in the stable region converge in total-variation to $p_{n}$ with
probability 1. The convergence rate is $\mathcal{O}\left(n^{-1/2}\right)$.\end{thm}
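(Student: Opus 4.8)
The plan is to split the claim into two pieces that are naturally handled in sequence. First I would establish the deterministic "stable box" consequences of Theorem \ref{thm:Convergence-of-fixed-points of EP and aEP} in the large-data regime, showing that $\psi_n''(x_n^\star)=\sum_i\phi_i''(x_n^\star)$ grows linearly in $n$ while $\max(K_3,K_4)$ stays bounded (by assumption \eqref{eq: second assumption: bounded log-derivatives}), so that $\delta=n\max(K_3,K_4)[\psi_n''(x_n^\star)]^{-1}$ is $\mathcal{O}(1)$; the law of large numbers applied to $\frac{1}{n}\sum_i\phi_i''(x_n^\star)$ gives the almost-sure lower bound needed. Similarly I would show $\delta_{aEP}$ is $\mathcal{O}(1)$, using that $\frac{1}{n}\sum_i|\phi_i'(x_n^\star)|$ converges a.s.\ to a finite constant (this is where the non-degeneracy / finite-variance condition enters, via the law of large numbers). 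Consequently the stable regions of Theorem \ref{thm:Limit-behavior-of aEP and EP} have half-widths $n\Delta_r,n\Delta_\beta=\mathcal{O}(\delta+\delta_{aEP})=\mathcal{O}(1)$ on the global parameters, which in terms of $(\beta_{aEP}/n,\,r_{aEP}/n)$ is a shrinking neighborhood of the CGA: the per-datapoint parameters are pinned to within $\mathcal{O}(n^{-1})$ of those of $q_n$.

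The second piece is to convert "the natural parameters of $g$ and of $q_n$ differ by $\mathcal{O}(n^{-1})$ per datapoint" into "$\mathrm{TV}(g,q_n)=\mathcal{O}(n^{-1/2})$" and then "$\mathrm{TV}(g,p_n)=\mathcal{O}(n^{-1/2})$". For the first step I would compare the two Gaussians directly: if $g=\N(x\mid \mu_g,v_g)$ and $q_n=\N(x\mid x_n^\star,[\psi_n''(x_n^\star)]^{-1})$, then the stable-box bounds give $|\beta_g-\psi_n''(x_n^\star)|=\mathcal{O}(1)$ and $|r_g-\beta_g x_n^\star|=\mathcal{O}(1)$; since $\psi_n''(x_n^\star)=\Theta(n)$, this forces $|v_g/v_{q_n}-1|=\mathcal{O}(n^{-1})$ and $(\mu_g-x_n^\star)^2/v_{q_n}=\mathcal{O}(n^{-1})$, and the standard expansion of KL divergence between Gaussians, $KL(g\Vert q_n)=\tfrac12\big(\log\frac{v_{q_n}}{v_g}+\frac{v_g+(\mu_g-x_n^\star)^2}{v_{q_n}}-1\big)$, gives $KL=\mathcal{O}(n^{-1})$; Pinsker's inequality then yields $\mathrm{TV}(g,q_n)=\mathcal{O}(n^{-1/2})$. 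For the second step I would invoke the LAN property of $\psi_n$ around $x_n^\star$ (guaranteed by the non-degeneracy condition, cf.\ \citet{KleijnVanDerVaart:BernsteinVonMisesUnderMisspec}) together with the concentration-of-mass / identifiability condition, which is exactly the Bernstein--von Mises setup: $\mathrm{TV}(p_n,q_n)\to 0$, with rate $\mathcal{O}(n^{-1/2})$ under the stated regularity (bounded $\phi_i^{(3)},\phi_i^{(4)}$ control the Taylor remainder of $\psi_n$ at $x_n^\star$ uniformly). The triangle inequality $\mathrm{TV}(g,p_n)\le\mathrm{TV}(g,q_n)+\mathrm{TV}(q_n,p_n)$ then finishes it.

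I expect the main obstacle to be the second step of the second piece — upgrading concentration of mass plus LAN into a genuine $\mathcal{O}(n^{-1/2})$ total-variation rate $\mathrm{TV}(p_n,q_n)$, rather than mere convergence to $0$. Mass concentration alone only localizes $p_n$ to an $\epsilon$-ball; to get the rate one must localize to an $\mathcal{O}(n^{-1/2})$-ball and then control the discrepancy between $-\log p_n$ and its quadratic Taylor approximation on that ball. The uniform bounds $|\phi_i^{(3)}|\le K_3$, $|\phi_i^{(4)}|\le K_4$ give $|\psi_n^{(3)}|\le nK_3$ etc., and on a ball of radius $r$ the cubic remainder of the log-density is $\mathcal{O}(nr^3)=\mathcal{O}(n\cdot n^{-3/2})=\mathcal{O}(n^{-1/2})$ after rescaling by the $\Theta(n)$ curvature — so the bookkeeping works, but making the localization step rigorous (showing $p_n$ assigns all but $\mathcal{O}(n^{-1/2})$ mass outside the $n^{-1/2}$-ball, which needs a quantitative version of the identifiability assumption, or at least an argument that the tails decay fast enough given log-concavity-type control from the Brascamp--Lieb bound) is the delicate part. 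I would defer the full details to the appendix and here just record the reduction to the classical Bernstein--von Mises rate under the stated assumptions.
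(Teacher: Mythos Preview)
Your overall architecture is the same as the paper's: verify the hypotheses of Theorem~\ref{thm:Convergence-of-fixed-points of EP and aEP} at $x_n^\star$, use the resulting $\mathcal{O}(1)$ stable-box widths to get $KL(g\Vert q_n)=\mathcal{O}(n^{-1})$ via the explicit Gaussian KL formula, apply Pinsker, and finish with Bernstein--von Mises plus the triangle inequality. The identification of the BvM rate as the delicate point is also in line with the paper, which defers to Kleijn--van der Vaart for that step.

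There is, however, a genuine gap in your first piece. You propose to apply the law of large numbers to $\frac{1}{n}\sum_i\phi_i''(x_n^\star)$ and to $\frac{1}{n}\sum_i|\phi_i'(x_n^\star)|$, but $x_n^\star$ is itself a random quantity depending on the entire sample $\phi_1,\dots,\phi_n$, so the summands $\phi_i''(x_n^\star)$ are not independent and the LLN does not apply directly. Relatedly, you never establish that $x_n^\star$ exists. The paper's fix is to anchor the argument at the \emph{deterministic} point $x_0=\arg\min E[\phi_i(x)]$: apply the LLN there to get $\sum_i\phi_i''(x_0)\approx nI_0$ and $\sum_i\phi_i'(x_0)=\mathcal{O}(\sqrt{n})$, then use the bound $|\psi_n^{(3)}|\le nK_3$ in a Taylor expansion of $\psi_n'$ to produce a root $x_n^\star$ with $|x_n^\star-x_0|=\mathcal{O}(n^{-1/2})$, and finally transfer the linear-growth estimates from $x_0$ to $x_n^\star$ via another Taylor step. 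Only after this detour through $x_0$ are the quantities $\delta$ and $\delta_{aEP}$ legitimately shown to be $\mathcal{O}(1)$. This is a standard maneuver, but it is not optional; as written, your step~1 does not go through.
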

\begin{proof}
Here is a sketch of the proof. First, define $I_{0}=E\left(\phi_{i}^{''}\left(x_{0}\right)\right)$
the Fisher information of our likelihood-generating process.

By a law of large numbers argument, $\sum_{i=1}^{n}\phi_{i}^{''}\left(x_{0}\right)\approx nI_{0}$:
this quantity grows linearly. In the meantime, the gradient at $x_{0}$
is small: $\sum_{i=1}^{n}\phi_{i}^{'}\left(x_{0}\right)\approx\sqrt{n}\sqrt{\text{var}\left(\phi_{i}^{'}\left(x_{0}\right)\right)}$.
Combining this with the third derivative bound and a simple Taylor
expansion of $\sum_{i=1}^{n}\phi_{i}^{'}\left(x_{0}\right)$ proves
that there must be a mode of $p_{n}$: $x_{n}^{\star}$, in close
proximity to $x_{0}$. The distance between the two scales as: $x_{n}^{\star}-x_{0}=\mathcal{O}\left(1/\sqrt{n}\right)$.

Since the log-curvature at $x_{0}$ grows linearly and $x_{0}-x_{n}^{\star}=\mathcal{O}\left(1/\sqrt{n}\right)$,
the log-curvature at $x_{n}^{\star}$ also grows linearly:
\[
\sum_{i=1}^{n}\phi_{i}^{''}\left(x_{n}^{\star}\right)\approx nI_{0}
\]

Similarly, $\sum_{i=1}^{n}\left|\phi_{i}^{'}\left(x_{n}^{\star}\right)\right|$
grows linearly and $n\max\left(K_{3},K_{4}\right)$ trivially grows
linearly with $n$. Thus, the conditions of th. \ref{thm:Limit-behavior-of aEP and EP}
are checked: there exists a stable region near the CGA at $x_{n}^{\star}$
which holds at least one fixed-point for aEP and EP.

To prove the total-variation convergence, we actually prove a KL-divergence
convergence. The bounds on $\Delta_{r}$ and $\Delta_{\beta}$ translate
into a $\mathcal{O}\left(n^{-1}\right)$ KL-divergence bound. From
Pinsker's inequality, this translates into a $n^{-1/2}$ total-variation
bound.

The proof then concludes by proving a $n^{-1/2}$ convergence of the
CGA towards $p_{n}$ which is a simple application of a Bernstein-von
Mises theorem for miss-specified models by \citet{KleijnVanDerVaart:BernsteinVonMisesUnderMisspec}.
\end{proof}
As a corollary from this theorem, it follows that EP is asymptotically
exact in all models which respect our hypothesis on the $l_{i}$ and
the $l_{i}$-generating process and which are identifiable. This includes
an extremely large class of models since our conditions are fairly
mild and since identifiability is a key requirement for the Bayesian
method to be useful. For example, our result can be applied to both
probit and logistic regression in finite-dimensions, as long as the
feature vectors are bounded (in order for our hypotheses on the $l_{i}$
to be verified), and spread uniformly enough for the Fisher information
matrix to be strictly positive (see Appendix). Three notes must be
made on that theorem. 

First of all, it shows that EP and aEP are exact in that there exists
a fixed-point which converges in total-variation to the true posterior.
It does not guarantee that asymptotically all fixed points converge.
In particular, should the expected value of $\phi_{i}\left(x\right)$
have several modes, we can guarantee that there also exists aEP and
EP fixed points which are terrible asymptotic approximations of $p_{n}\left(x\right)$:
the stable regions associated with the local minima converge to the
CGA at a mode with negligible asymptotic contribution to the mass
of $p_{n}$.

Second, it could seem from this result that EP and aEP approximations
are asymptotically worse than the CGA, because the total-variation
distance between the EP fixed-point and the target decreases slower
than for the CGA. This does not reflect a limitation of EP but it
is a feature of our proof: we have proved that EP is good because
it converges to the CGA, and only a direct proof would be able to
prove the superiority of EP. We expect that EP should give better
asymptotic approximations than the CGA from empirical tests of both
methods, but the result we present here is too weak to prove this
conjecture. We have recently made some progress on such a direct proof,
but under more restrictive assumptions than the ones presented here
\citep{DehaeneBarthelmeNips2015}.

Finally, it is interesting to come back to th. \ref{thm:Limit-behavior-of aEP and EP}
which shows that, in the large cavity precision limit, aEP and parallel
EP converge to NT. It is also possible to qualitatively discuss th.
\ref{thm:Limit-behavior-of aEP and EP} in terms of a large-data limit
instead. In order to do so, we assume that, as the number of data-points
grows, the typical value for the cavity precisions $\beta_{-i}$ grows
linearly with $n$. This is certainly true in the stable region around
$x_{0}$ as we have just shown in th. \ref{thm:aEP-and-EP are exact}.
If we have that $\min\left(\beta_{-i}\right)\propto n$, then the
errors in th. \ref{thm:Limit-behavior-of aEP and EP} are of order
1. These order 1 errors are negligible in the large-data limit, in
exactly the same way as for th. \ref{thm:aEP-and-EP are exact}. Thus
our result that aEP and parallel EP are almost Newton in the high
precision limit qualitatively apply in the large-data limit as long
as the ``typical'' cavity precisions grow linearly with $n$.

\section{Consequences of the quasi-Newton behavior of EP\label{sec:Consequences-following-from the limit behavior}}

In the previous section, we gave a proof that, in the limit of large-data,
EP behaves like a Newton search for the mode of the target distribution.
In this section, we highlight how this result can inform our intuition
about how EP behaves, and some potentially interesting avenues of
research it opens.

\subsection{Instability of the EP iteration\label{sub:Instability}}

Newton's algorithm (NT) is a good tool in finding a mode of a target
distribution as it has fast convergence if it is initialized properly
(i.e.: close enough to the mode), but it can often fail to converge
globally. For example, applying NT to $f(x)=\exp(-|x|^{4/3})$ always
results in a divergent sequence that oscillates wildly around the
fixed point at $x=0$. More generally, Newton is unstable when the
log-curvature $\psi^{''}$ is small because that makes the Newton
step $\left[\psi^{''}\right]^{-1}\psi^{'}$ too big.

In order to fix this problem, it is necessary to introduce a ``slowed-down''
version of the iteration:

\[
\mu_{t+1}=\mu_{t}-\gamma_{t}\frac{\psi^{'}(\mu_{t})}{\psi^{''}(\mu_{t})}
\]

where $0\leq\gamma_{t}\leq1$ is chosen carefully to ensure convergence.
As the NT algorithm is part of the class of Generalized Gradient Descent
algorithms, one solution is to choose values $\gamma_{t}$ that respect
the Wolfe conditions (see \citealp{Boyd:2004:CO:993483}, for a convergence
analysis of Newton's method).

Since EP behaves like NT in the large-data limit, we can intuit that
even for small $n$, EP might have a qualitatively similar behavior.
In particular, EP iterations might oscillate around their fixed-point
just like NT does. We give here a simple example of this behavior
with sites that are extremely regular and which seem harmless at a
glance.

In our example, we applied a parallel version of the EP algorithm
to the following situation:
\begin{itemize}
\item five ``double-logistic'' sites: $\forall i\in\{1,\ldots,5\},\ l_{i}(x)=\left(1+\exp(5x)\right)^{-1}\left(1+\exp(-5x)\right)^{-1}$,
so called because they are the product of two logistic functions.
If plotted, these appear to be Gaussian at a glance, but with the
important difference that they only have exponential decay in their
tails. In these exponential tails, the log-curvature $\phi_{i}^{''}\left(x\right)$
is very small.
\item one Gaussian site representing a prior: $l_{0}(x)=\exp(-x^{2}/2)$
\end{itemize}
In this example, there is an EP fixed-point that provides a good approximation
of the target distribution. However, we also found that the EP iteration
is unstable if it is initialized too far away from the fixed-point
distribution. EP iterations initialized too far away converge to a
limit cycle oscillating between two approximations that are completely
wrong. Figure \ref{fig:Convergence-of-aEP} shows the basins of attraction
of the stable equilibrium and the limit cycle, and one example trajectory
for each.

Our results on the limit behavior of the EP iteration can thus inform
our understanding of why the EP iteration sometimes has problems with
convergence: it could be that the EP iteration overshoots when it
is operating in a zone where most $\phi_{i}^{''}\left(x\right)$ are
small while most $\phi_{i}^{'}\left(x\right)$ are big, exactly like
NT would. A possible solution to this could be to complement EP with
an adaptive step algorithm. This algorithm would need to detect overshoots
or potential overshoots, and prevent them. Finding such an adaptive
step algorithm would represent major progress in EP methods.

\begin{figure}
\begin{centering}
\includegraphics[width=10cm]{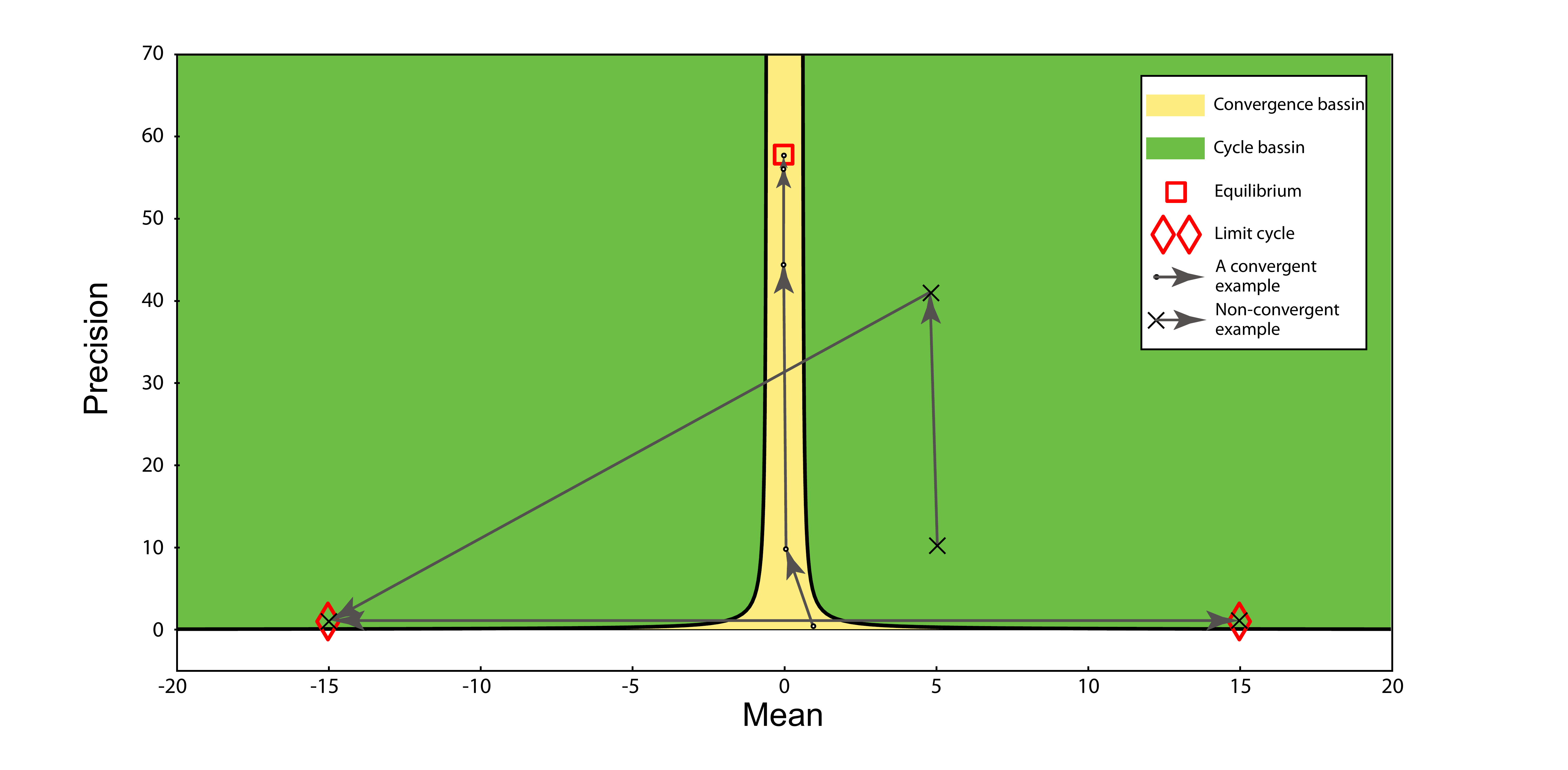}
\par\end{centering}

\protect\caption{Convergence of aEP on double-logistic sites. We evaluated the stability
of the averaged-EP algorithm on five double-logistic sites and one
Gaussian prior site. The aEP iteration either converged to a fixed
point (red square) or to an oscillation (between the two red diamonds).
The yellow background corresponds to the basin of attraction of the
fixed point and the green background to that of the limit cycle. The
first four steps of the aEP iteration are presented for two initial
points, one in each basin. Note how fast the convergence to both attractors
is: in four steps, the iteration reaches a very close neighborhood
of the attractors.\label{fig:Convergence-of-aEP}}
\end{figure}

\subsection{Behavior of EP on multimodal distributions\label{sub:EP-behavior-on multimodal distributions}}

Let's now investigate how our results shed new light on the behavior
of EP on a multimodal target distribution $p\left(x\right)$.

EP has been presented from the start as a rough approximation to the
minimizer of the ``forward'' KL divergence: $KL\left(p||q\right)$
that uses local (i.e., site-specific) approximations of the KL divergence.
This leads to the intuition that, when applied on a multimodal target,
the EP approximation would fit all modes, or maybe most modes, since
this is the behavior of the KL approximation.

With our method of bounding fixed points in neighborhoods of the CGA
at the various modes of $p\left(x\right)$, we can now see that this
intuition is flawed. Indeed, our result shows that all modes that
are:
\begin{itemize}
\item sufficiently peaked, so that the stable region is small
\item sufficiently isolated, so that their stable region does not overlap
with that of the other modes
\end{itemize}
have at least one associated fixed-point. This fixed-point corresponds
to the EP approximation fitting only this mode and not the rest of
the probability distribution. Thus, it can happen that EP approximations
give only a partial account of the target distribution.

However, it's also false to believe that EP always gets captured and
never provides a global approximation of a multimodal target. Indeed,
when there is only one site (or more generally, when there is only
a few sites), EP does give a global account of the distribution since
EP with one site exactly recovers the KL approximation of $p\left(x\right)$.

Surprisingly, both types of fixed points can co-exist. Figure \ref{fig:multimodal}
shows an example involving Gaussian mixtures, the prototypical example
of multimodal problems. Here the data $y_{1}\ldots y_{n}$ are supposed
IID, with 
\[
p(y_{i}|\mathbf{x})=\frac{1}{2}\left(\N\left(y_{i};x_{1},1\right)+\N\left(y_{i};x_{2},1\right)\right)
\]
The parameters correspond to component means in the Gaussian mixture
and are evidently interchangeable, so that the likelihood surface
is in general bimodal. We ran EP in this example with $n=20$, $\mathbf{x}=\left(0,-2.5\right)$
and a unit Gaussian prior on $\mathbf{x}.$ Different initializations
lead to different fixed points: we found three, two corresponding
to local approximations (as predicted by theory) and a global one,
the latter far from any mode. Interestingly, the local approximations
are locally ``exact'', meaning that under the identifiability constraints
$x_{2}>x_{1},$ the moments of the corresponding EP approximation
are exact. The mean of the global approximation matches the exact
global mean, although the covariance is a bit under-estimated. 

A simple take-home message from our work should thus be this one:
do not expect EP to fit all modes of a target distribution, but do
not automatically assume that it will fit a single mode either.

\begin{figure}
\begin{centering}
\includegraphics[width=8cm]{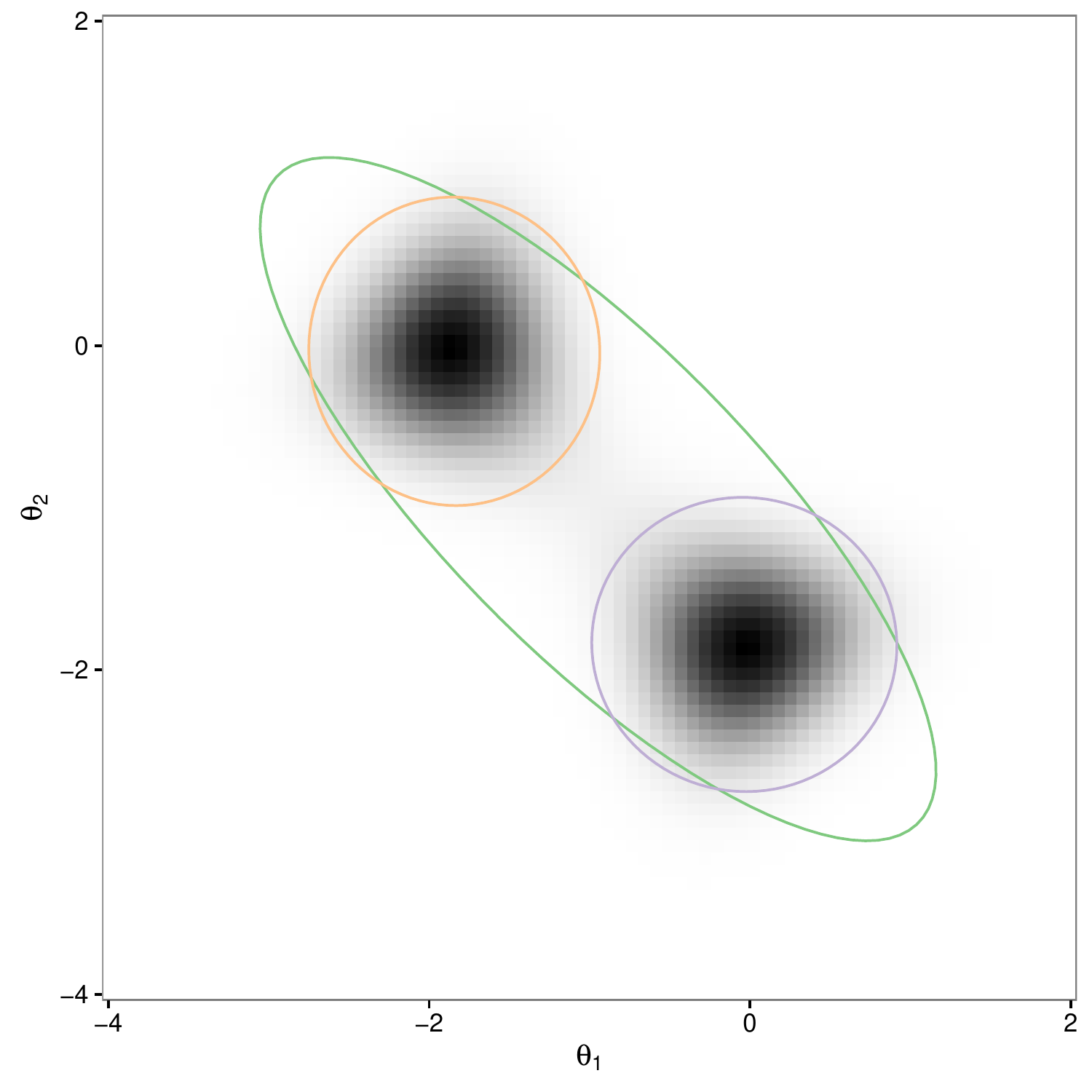}
\par\end{centering}

\protect\caption{Behavior of EP on multimodal target distributions. The grey density
represents the target, and the ellipses different EP approximations
(summarised as 95\% confidence regions). In this example EP has three
possible fixed points, one corresponding to a global approximation
of the target and the other two to local approximations. Which fixed
point is reached depends on the initialisation. See text for details.
\label{fig:multimodal}}
\end{figure}

\section{Conclusion\label{sec:Conclusion}}

EP is an algorithm whose theoretical analysis lags far behind its
empirical success. We describe in this manuscript a number of results
that narrow the gap between theory and empirics, and we hope that
they will provide a useful basis for future work.

In this article, we propose a simpler version of EP which we call
averaged-EP or aEP. aEP could be interesting as an empirical algorithm
(see Appendix and \citet{NIPS2015_5760} who introduce a close variant
of aEP called stochastic EP). However, our main focus is on using
it as a theoretical tool for studying the asymptotics of EP, since
its reduced parameter space makes the results simpler to understand.
We derive analytical results on aEP and EP in several limits: in the
limit of large cavity precisions, and in the classical large-data
limit. We prov that both methods converge to a Newton's search for
a mode of the target distribution. We then show that both are asymptotically
exact in that there exists a fixed-point which converges towards the
target.

Our theoretical results open several avenues of research into gaining
a better understanding of EP. First of all, while we shed some light
into the behavior of the EP iterations by providing a qualitative
link to Newton's method, we still do not know how to build a variant
of EP which is guaranteed to converge. This is a key avenue of research
since the only way we know to guarantee convergence to an EP fixed-point,
the Expectation-Consistent algorithm \citep{OpperWinther:ExpectationConsistentApproxInf},
converges much more slowly. An algorithm that always converges while
staying as fast as EP would represent a major step forward. The parallel
with NT opens the interesting idea of designing a line-search extension
of EP. 

Another limit of our result is the coarseness of our bounds: while
we show that EP is asymptotically exact, we do not show that it improves
on the Canonical Gaussian Approximation when there is ample empirical
evidence that it does. Future theoretical work on EP should aim at
showing how and when EP does dominate the CGA (see \citet{DehaeneBarthelmeNips2015}
for one such investigation, though crippled by unrealistic assumptions
on the model).

A final interesting extension of this work concerns the non-parametric
case, and, more generally, EP approximations of high-dimensional posteriors.
Indeed, we believe that our results are sub-optimal in bounding how
the error scales in high-dimensional cases, which is why we cannot
apply our results to the non-parametric case for which $p=n$. A careful
extension to show that EP behaves correctly in those cases would prove
another huge step forward in providing a good theoretical basis for
EP.

\section*{Acknowledgments}

We thank Alex Pouget for his support, and Hugo Duminil for helpful
insight on the math. We also thank M�lisande Albert, Nicolas Chopin,
Gina Gr�nhage, and James Ridgway for their comments on the manuscript.
Finally, we thank Judith Rousseau for her help on Bernstein-von Mises
theorems.

\bibliographystyle{apalike}
\bibliography{ref}

\begin{thebibliography}{}

\bibitem[Barthelm{\'e} and Chopin, 2014]{BarthelmeChopin:EPforLikFreeInf}
Barthelm{\'e}, S. and Chopin, N. (2014).
\newblock Expectation propagation for likelihood-free inference.
\newblock {\em Journal of the American Statistical Association},
  109(505):315--333.

\bibitem[Bishop, 2007]{BishopPRML}
Bishop, C.~M. (2007).
\newblock {\em Pattern Recognition and Machine Learning (Information Science
  and Statistics)}.
\newblock Springer, 1st ed. 2006. corr. 2nd printing 2011 edition.

\bibitem[Boyd and Vandenberghe, 2004]{Boyd:2004:CO:993483}
Boyd, S. and Vandenberghe, L. (2004).
\newblock {\em Convex Optimization}.
\newblock Cambridge University Press, New York, NY, USA.

\bibitem[Brascamp and Lieb, 1976]{Brascamp1976151}
Brascamp, H.~J. and Lieb, E.~H. (1976).
\newblock {Best constants in Young's inequality, its converse, and its
  generalization to more than three functions }.
\newblock {\em Advances in Mathematics}, 20(2):151 -- 173.

\bibitem[Dehaene and Barthelm\'{e}, 2015]{DehaeneBarthelmeNips2015}
Dehaene, G.~P. and Barthelm\'{e}, S. (2015).
\newblock {Bounding errors of Expectation-Propagation}.
\newblock In Cortes, C., Lawrence, N.~D., Lee, D.~D., Sugiyama, M., and
  Garnett, R., editors, {\em Advances in Neural Information Processing Systems
  28}, pages 244--252. Curran Associates, Inc.

\bibitem[Gehre and Jin, 2013]{GehreJin:EPforNonlinearInverseProb}
Gehre, M. and Jin, B. (2013).
\newblock {Expectation Propagation for Nonlinear Inverse Problems - with an
  Application to Electrical Impedance Tomography}.

\bibitem[Gelman et~al., 2014]{Gelman:EPWayOfLife}
Gelman, A., Vehtari, A., Jyl\"{a}nki, P., Robert, C., Chopin, N., and
  Cunningham, J.~P. (2014).
\newblock Expectation propagation as a way of life.

\bibitem[Jyl\"{a}nki et~al., 2014]{Jylanki:EPforNeuralNetworks}
Jyl\"{a}nki, P., Nummenmaa, A., and Vehtari, A. (2014).
\newblock Expectation propagation for neural networks with sparsity-promoting
  priors.
\newblock {\em Journal of Machine Learning Research}, 15:1849--1901.

\bibitem[Jyl\"{a}nki et~al., 2011]{Jylanki:RobustGaussianProcessReg}
Jyl\"{a}nki, P., Vanhatalo, J., and Vehtari, A. (2011).
\newblock Robust gaussian process regression with a student-t likelihood.
\newblock {\em J. Mach. Learn. Res.}, 12:3227--3257.

\bibitem[Kleijn et~al., 2012]{KleijnVanDerVaart:BernsteinVonMisesUnderMisspec}
Kleijn, B., van~der Vaart, A., et~al. (2012).
\newblock The bernstein-von-mises theorem under misspecification.
\newblock {\em Electronic Journal of Statistics}, 6:354--381.

\bibitem[Kuss and Rasmussen, 2005]{KussRasmussen:AssessingApproxInfGP}
Kuss, M. and Rasmussen, C.~E. (2005).
\newblock {Assessing Approximate Inference for Binary Gaussian Process
  Classification}.
\newblock {\em J. Mach. Learn. Res.}, 6:1679--1704.

\bibitem[Li et~al., 2015]{NIPS2015_5760}
Li, Y., Hern\'{a}ndez-Lobato, J.~M., and Turner, R.~E. (2015).
\newblock {Stochastic Expectation Propagation}.
\newblock In Cortes, C., Lawrence, N.~D., Lee, D.~D., Sugiyama, M., and
  Garnett, R., editors, {\em Advances in Neural Information Processing Systems
  28}, pages 2323--2331. Curran Associates, Inc.

\bibitem[Minka, 2005]{Minka:DivMeasuresMP}
Minka, T. (2005).
\newblock {Divergence Measures and Message Passing}.
\newblock Technical report.

\bibitem[Minka, 2001]{Minka:EP}
Minka, T.~P. (2001).
\newblock {Expectation Propagation for approximate Bayesian inference}.
\newblock In {\em UAI '01: Proceedings of the 17th Conference in Uncertainty in
  Artificial Intelligence}, pages 362--369, San Francisco, CA, USA. Morgan
  Kaufmann Publishers Inc.

\bibitem[Nickisch and Rasmussen,
  2008]{NickishRasmussen:ApproxGaussianProcClass}
Nickisch, H. and Rasmussen, C.~E. (2008).
\newblock {Approximations for Binary Gaussian Process Classification}.
\newblock {\em Journal of Machine Learning Research}, 9:2035--2078.

\bibitem[Nocedal and Wright, 2006]{NocedalWright:NumericalOptim}
Nocedal, J. and Wright, S. (2006).
\newblock {\em Numerical Optimization (Springer Series in Operations Research
  and Financial Engineering)}.
\newblock Springer, 2nd edition.

\bibitem[Opper, 1998]{Opper1999}
Opper, M. (1998).
\newblock {On-line Learning in Neural Networks}.
\newblock chapter A Bayesian Approach to On-line Learning, pages 363--378.
  Cambridge University Press, New York, NY, USA.

\bibitem[Opper and Winther, 2005]{OpperWinther:ExpectationConsistentApproxInf}
Opper, M. and Winther, O. (2005).
\newblock {Expectation Consistent Approximate Inference}.
\newblock {\em J. Mach. Learn. Res.}, 6:2177--2204.

\bibitem[Paquet et~al., 2009]{Paquet:PerturbationCorrectionsApproxInference}
Paquet, U., Winther, O., and Opper, M. (2009).
\newblock {Perturbation Corrections in Approximate Inference: Mixture Modelling
  Applications}.
\newblock {\em Journal of Machine Learning Research}, 10:1263--1304.

\bibitem[Pereyra, 2016]{Pereyra2016}
Pereyra, M. (2016).
\newblock {Approximating Bayesian confidence regions in convex inverse
  problems}.

\bibitem[Raymond et~al., 2014]{Raymond:ExpectationPropagation}
Raymond, J., Manoel, A., and Opper, M. (2014).
\newblock Expectation propagation.

\bibitem[Ribeiro and Opper, 2011]{RibeiroOpper:EPWithFactDistr}
Ribeiro, F. and Opper, M. (2011).
\newblock Expectation propagation with factorizing distributions: A gaussian
  approximation and performance results for simple models.
\newblock {\em Neural computation}, 23(4):1047--1069.

\bibitem[Ridgway et~al., 2014]{Ridgway;PACBayesAUCClassificationScoring}
Ridgway, J., Alquier, P., Chopin, N., and Liang, F. (2014).
\newblock {PAC-Bayesian AUC classification and scoring}.
\newblock In Ghahramani, Z., Welling, M., Cortes, C., Lawrence, N., and
  Weinberger, K., editors, {\em Advances in Neural Information Processing
  Systems 27}, pages 658--666. Curran Associates, Inc.

\bibitem[Rue et~al., 2009]{Rue:INLA}
Rue, H., Martino, S., and Chopin, N. (2009).
\newblock Approximate bayesian inference for latent gaussian models by using
  integrated nested laplace approximations.
\newblock {\em Journal of the Royal Statistical Society: Series B (Statistical
  Methodology)}, 71(2):319--392.

\bibitem[Saumard and Wellner, 2014]{SaumardWellner:LogConcavityReview}
Saumard, A. and Wellner, J.~A. (2014).
\newblock Log-concavity and strong log-concavity: A review.
\newblock {\em Statist. Surv.}, 8:45--114.

\bibitem[Seeger, 2005]{Seeger:EPExpFam}
Seeger, M. (2005).
\newblock {Expectation Propagation for Exponential Families}.
\newblock Technical report.

\bibitem[Titterington, 2011]{Titterington:EMalgVarApproxEP}
Titterington, D.~M. (2011).
\newblock The em algorithm, variational approximations and expectation
  propagation for mixtures.
\newblock In {\em Mixtures}, pages 1--29. John Wiley \& Sons, Ltd.

\bibitem[Varadhan and Roland, 2008]{Varadhan:SimpleGloballyConvMethods}
Varadhan, R. and Roland, C. (2008).
\newblock Simple and globally convergent methods for accelerating the
  convergence of any em algorithm.
\newblock {\em Scandinavian Journal of Statistics}, 35(2):335--353.

\bibitem[Wainwright and Jordan, 2008]{WainwrightJordan:GraphModelsExpFamVarInf}
Wainwright, M.~J. and Jordan, M.~I. (2008).
\newblock {\em Graphical Models, Exponential Families, and Variational
  Inference (Foundations and Trends(r) Machine Learning)}.
\newblock Now Publishers Inc.

\bibitem[Xu et~al., 2014]{Xu:DistributedBayesPostSampling}
Xu, M., Lakshminarayanan, B., Teh, Y.~W., Zhu, J., and Zhang, B. (2014).
\newblock Distributed bayesian posterior sampling via moment sharing.
\newblock In {\em Advances in Neural Information Processing Systems}, pages
  3356--3364.

\end{thebibliography}

\global\long\def\bb{\bm{\beta}}
\global\long\def\bt{\bm{\theta}}
\global\long\def\bmu{\bm{\mu}}
\global\long\def\bl{\bm{\lambda}}
\global\long\def\be{\bm{\eta}}
\global\long\def\E{\mathbb{E}}
\global\long\def\by{\bm{y}}
\global\long\def\bys{\bm{y^{\star}}}
\global\long\def\bS{\bm{\Sigma}}
\global\long\def\N{\mathcal{N}}
\global\long\def\I{\mathds{1}}
\global\long\def\A{\mathcal{A}}
\global\long\def\Ml{\mathcal{M}_{l}}
\global\long\def\Q{\mathcal{Q}}
\global\long\def\cov{\text{Cov}}
\global\long\def\var{\mbox{var}}
\global\long\def\O{\mathcal{O}}

\section*{Supplementary information}

The following two sections hold all the supplementary information
of this article.

\section{Proofs}

In this section, we will give detailed proofs of all the results we
have presented in the main text.

We will prove, in order:
\begin{enumerate}
\item the limit behavior of the EP update in one-dimension
\item the limit behavior of the EP update in high-dimensions
\item the limit behavior under weaker assumptions
\item the exactness of aEP and EP in the large-data limit
\end{enumerate}

\subsection{Assumptions}

We will prove our results in the one-dimensional case and in the n-dimensional
case. Let's first recall our assumptions on the likelihoods in the
one-dimensional case. We will explain in section \ref{sub:Limit-behavior-in high-dimensions}
how to modify these assumptions in the high-dimensional case. In section
\ref{sub:Limit-behavior-under weaker assumptions}, we show that these
assumptions can be weakened considerably, though the expression for
the errors is much harder to state.

Let $l_{i}\left(x\right)=\exp\left(-\phi_{i}\left(x\right)\right)$
be the sites, and$\phi_{i}\left(x\right)$ be the negative log of
each site.

Our first assumption will be that the second log-derivatives of the
sites span a finite range:

\begin{equation}
\max\left(\phi_{i}^{''}\left(x\right)\right)-\min\left(\phi_{i}^{''}\left(x\right)\right)\leq B\label{eq: assumption 1, bound the range of the log-curvature}
\end{equation}

and second, that some of the higher log-derivatives are bounded. There
exists constants $K_{d}$ for $d\in\left[3,4\right]$ such that:

\begin{equation}
\left|\phi_{i}^{\left(d\right)}\left(x\right)\right|\leq K_{d}\label{eq: assumption 2, bound the 3 and 4 derivatives}
\end{equation}

\subsection{Limit behavior of the EP update}

In this section, we will prove the following theorem.
\begin{thm}
Limit behavior of the site-approximation\label{thm:Limit-behavior-of the site approximation-1}

Consider the hybrid distribution: $h_{i}(x)=l_{i}\left(x\right)\exp\left(-\frac{\beta}{2}x^{2}+(\beta\mu_{0}-\delta r)x\right)$.
In the limit that $\beta\rightarrow\infty$, the hybrid mean $E_{h_{i}}\left(x\right)$
and the natural parameters of the EP approximation ($r_{i}=var_{h_{i}}^{-1}E_{h_{i}}(x)-(\beta\mu_{0}-\delta r_{i})$
and $\beta_{i}=var_{h_{i}}^{-1}-\beta$) of $l_{i}$ converge. Defining
$K_{M}=\max\left(K_{3,}K_{4}\right)$ and $\Delta_{r}=\left(\phi_{i}^{'}\left(\mu_{0}\right)+\delta r\right)$,
the limits are:
\begin{eqnarray*}
E_{h_{i}}\left(x\right)=\mu_{h_{i}} & = & \mu_{0}+\O\left(\Delta_{r}\beta^{-1}+K_{M}\beta^{-2}\right)\\
r_{i} & = & \beta_{i}\mu_{h_{i}}-E_{h_{i}}\left(\phi_{i}^{'}\left(x\right)\right)\\
 & = & -\phi_{i}^{'}(\mu_{0})+\beta_{i}\mu_{0}+\O\left(K_{M}\beta^{-1}+K_{M}\beta^{-1}\left[\mu_{h_{i}}-\mu_{0}\right]+K_{M}\left[\mu_{h_{i}}-\mu_{0}\right]^{2}\right)\\
 & = & -\phi_{i}^{'}(\mu_{0})+\beta_{i}\mu_{0}+\O\left(K_{M}\beta^{-1}+\beta^{-2}\right)\\
\beta_{i} & \approx & E_{h_{i}}\left(\phi_{i}^{''}\left(x\right)\right)\\
 & = & \phi_{i}^{''}(\mu_{0})+\O\left(K_{M}\beta^{-1}+K_{M}\left[\mu_{h_{i}}-\mu_{0}\right]\right)\\
 & = & \phi_{i}^{''}(\mu_{0})+\O\left(K_{M}\beta^{-1}+K_{M}\Delta_{r}\beta^{-1}\right)
\end{eqnarray*}
\end{thm}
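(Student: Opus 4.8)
The plan is to compute the hybrid mean and variance to the order needed by manipulating the defining integrals, and then crucially to express $\beta_i$ and $r_i$ directly as hybrid expectations to avoid cascading the errors. First I would set up the ``two tricks'' alluded to in the sketch. For the precision, integration by parts on $\int h_i'(x)\,dx = 0$ gives $\beta\mu_0 - \delta r - \beta E_{h_i}(x) = E_{h_i}(\phi_i'(x))$, i.e. $\mathrm{var}_{h_i}^{-1}E_{h_i}(x) - (\beta\mu_0 - \delta r) = -E_{h_i}(\phi_i'(x)) + (\mathrm{var}_{h_i}^{-1}-\beta)E_{h_i}(x)$, which after subtracting the cavity natural parameters is exactly $r_i = \beta_i \mu_{h_i} - E_{h_i}(\phi_i'(x))$. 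A second integration-by-parts identity, applied to $\int (x-\mu_{h_i}) h_i'(x)\,dx$, lets me write $\mathrm{var}_{h_i}^{-1}$ in terms of $1 + \mathrm{var}_{h_i}E_{h_i}(\phi_i''(x)) + \text{(a covariance correction)}$, from which $\beta_i = \mathrm{var}_{h_i}^{-1}-\beta \approx E_{h_i}(\phi_i''(x))$ up to controllable terms. These identities are the heart of the argument: they convert the problem from ``expand a ratio of integrals'' into ``Taylor-expand a smooth function under an almost-Gaussian measure''.

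Next I would pin down the size of $\mathrm{var}_{h_i}(x)$ and of $\mu_{h_i}-\mu_0$. By the Brascamp--Lieb inequality \eqref{eq:Brascamp-Lieb} applied to the hybrid (which is log-concave once $\beta$ is large enough, since $-\log h_i$ has Hessian $\beta + \phi_i''(x) \geq \beta - \tfrac{1}{2}B > 0$), I get $\mathrm{var}_{h_i}(x) \leq (\beta - \tfrac{1}{2}B)^{-1} = \O(\beta^{-1})$; an analogous lower bound $\mathrm{var}_{h_i}(x) \geq (\beta + \tfrac12 B)^{-1}$ follows from the matching Hessian upper bound, so $\mathrm{var}_{h_i}(x) = \beta^{-1}(1 + \O(\beta^{-1}))$. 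For the mean, the stationarity identity above rearranges to $\mu_{h_i} - \mu_0 = -\mathrm{var}_{h_i}^{-1}\big(\beta^{-1}(\delta r + E_{h_i}(\phi_i'(x)))\big)$-type expression; combining $E_{h_i}(\phi_i'(x)) = \phi_i'(\mu_0) + \O(K_3(|\mu_{h_i}-\mu_0| + \mathrm{var}_{h_i}))$ with $\mathrm{var}_{h_i}=\O(\beta^{-1})$ and solving the resulting fixed-point inequality for $|\mu_{h_i}-\mu_0|$ yields $\mu_{h_i}-\mu_0 = \O(\Delta_r\beta^{-1} + K_M\beta^{-2})$, matching the claim. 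Here $\Delta_r = \phi_i'(\mu_0)+\delta r$ enters exactly because it measures how far the hybrid's ``force'' is from being centered.

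With the scale of $\mu_{h_i}-\mu_0$ and $\mathrm{var}_{h_i}$ in hand, the final step is bookkeeping: Taylor-expand $\phi_i'(x)$ around $\mu_0$ to second order under $E_{h_i}$, so $E_{h_i}(\phi_i'(x)) = \phi_i'(\mu_0) + \phi_i''(\mu_0)(\mu_{h_i}-\mu_0) + \O(K_3(\mathrm{var}_{h_i} + (\mu_{h_i}-\mu_0)^2))$, and substitute into $r_i = \beta_i\mu_{h_i} - E_{h_i}(\phi_i'(x))$; rewriting $\beta_i\mu_{h_i}$ as $\beta_i\mu_0 + \beta_i(\mu_{h_i}-\mu_0)$ and noting $\beta_i = \phi_i''(\mu_0) + \O(\cdot)$ makes the $\phi_i''(\mu_0)(\mu_{h_i}-\mu_0)$ terms cancel, leaving $r_i = -\phi_i'(\mu_0) + \beta_i\mu_0 + \O(K_M\beta^{-1} + K_M\beta^{-1}|\mu_{h_i}-\mu_0| + K_M|\mu_{h_i}-\mu_0|^2)$, which collapses to $\O(K_M\beta^{-1} + \beta^{-2})$ using the mean bound. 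For $\beta_i$, expand $\phi_i''(x)$ to first order under $E_{h_i}$: $E_{h_i}(\phi_i''(x)) = \phi_i''(\mu_0) + \O(K_4(|\mu_{h_i}-\mu_0| + \mathrm{var}_{h_i}^{1/2}\cdot\ldots))$; combined with the covariance-correction term in the $\beta_i \approx E_{h_i}(\phi_i''(x))$ step (which is $\O(K_3^2\mathrm{var}_{h_i}) = \O(\beta^{-1})$, absorbed into $K_M\beta^{-1}$) this gives $\beta_i = \phi_i''(\mu_0) + \O(K_M\beta^{-1} + K_M\Delta_r\beta^{-1})$.

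I expect the main obstacle to be the second integration-by-parts identity for $\beta_i$ and controlling its error term cleanly: writing $\mathrm{var}_{h_i}^{-1} - \beta = E_{h_i}(\phi_i''(x)) + (\text{correction})$ requires handling a term of the form $\beta^2\,\mathrm{Cov}_{h_i}\!\big(x, (x-\mu_{h_i})\big)$-style quantity or equivalently $\mathrm{var}_{h_i}^{-1}\mathrm{Cov}_{h_i}(x,\phi_i'(x))$, and showing this correction is genuinely $\O(K_M\beta^{-1})$ rather than $\O(1)$ needs the $\mathrm{var}_{h_i}=\O(\beta^{-1})$ bound fed back in together with a Cauchy--Schwarz estimate on the covariance and possibly a higher Brascamp--Lieb bound (applied to $S(x)=\phi_i'(x)$, using $|\phi_i''|$ to bound $(\nabla S)^2$). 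Everything else is routine Taylor expansion; the delicate part is making sure no step secretly needs a bound on $\mathrm{var}_{h_i}$ better than $\O(\beta^{-1})$ or on $|\mu_{h_i}-\mu_0|$ before it has been established, i.e. ordering the estimates so the argument is not circular.
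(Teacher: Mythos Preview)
Your proposal is essentially correct and tracks the paper's argument closely: the Stein identity giving $r_i=\beta_i\mu_{h_i}-E_{h_i}(\phi_i')$, the Brascamp--Lieb coarse bound $\var_{h_i}=\O(\beta^{-1})$, the mean estimate via the same Stein relation, and the final Taylor bookkeeping are all exactly what the paper does, in the same order.

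The one substantive divergence is your treatment of $\beta_i$. The paper does \emph{not} extract $\beta_i$ from a single integration-by-parts identity plus a ``covariance correction''; instead it sandwiches $\beta_i$ between two inequalities. The upper bound is Cram\'er--Rao ($\var_{h_i}^{-1}\le E_{h_i}[\phi_i''+\beta]$, so $\beta_i\le E_{h_i}[\phi_i'']$), and the lower bound comes from Taylor-expanding the inverse inside the Brascamp--Lieb estimate $\var_{h_i}\le E_{h_i}[(\phi_i''+\beta)^{-1}]$ to get $\var_{h_i}^{-1}\ge\beta+E_{h_i}[\phi_i'']-\O(K_3^2\beta^{-2})$. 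This two-sided squeeze gives $\beta_i=E_{h_i}[\phi_i'']+\O(\beta^{-1})$ with no covariance term to control. Your route from $\int(x-\mu_{h_i})h_i'\,dx$ only yields $\beta\var_{h_i}+\mathrm{Cov}_{h_i}(x,\phi_i')=1$, which does not yet contain $\phi_i''$; to reach $E_{h_i}[\phi_i'']$ you need a \emph{second} Stein identity (take $g=\phi_i'$ in $E[g']=E[g\,\psi_h']$, giving $E_{h_i}[\phi_i'']=\var_{h_i}(\phi_i')+\beta\,\mathrm{Cov}_{h_i}(x,\phi_i')$), and then Brascamp--Lieb on $S=\phi_i'$ to show $\var_{h_i}(\phi_i')=\O(\beta^{-1})$. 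That chain does work and lands at the same place, but your written formula ``$\var_{h_i}^{-1}$ in terms of $1+\var_{h_i}E_{h_i}(\phi_i'')$'' is not dimensionally right, and you should be aware that the single identity you name is not sufficient on its own. The paper's Cram\'er--Rao step is the cleaner substitute for the half of the argument you flag as the main obstacle.
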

\begin{rem*}
Note the key role of the $\delta r$ parameter. It causes the mean
of the cavity distribution to be offset from $\mu_{0}$, but it can
make the mean of the hybrid, $\mu_{h_{i}}$, closer to $\mu_{0}$.
Indeed, if $\delta r$ is such that $\Delta_{r}=0$, then we gain
an order of magnitude in the limit behavior of $\mu_{h_{i}}$ and
the errors in the limit behavior of both $r_{i}$ and $\beta_{i}$
are smaller.\end{rem*}
\begin{proof}
Let's first sketch a global overview of how the proof works. Intuitively,
what is going on is that we are going beyond the first order approximations
of the mean and variance of $h_{i}$:
\begin{eqnarray*}
E_{h_{i}}\left(x\right) & \approx & \mu_{0}\\
\var_{h_{i}}\left(x\right) & \approx & \beta^{-1}
\end{eqnarray*}

and computing the next order of their limit behavior. If we try to
follow that path directly, however, we obtain bounds that are a bit
ugly and not very tight. A better proof path is slightly clever and
sophisticated and consists in finding ``tricks'' ways toof bounding
$\beta_{i}$ and $r_{i}$ directly. This is where the Brascamp-Lieb
inequality comes in play: it provides one-half of the $\beta_{i}$
bound.

In practice, our proof can be decomposed into five steps:
\begin{itemize}
\item Upper-bound $\var_{h_{i}}\left(x\right)$ in a coarse way
\item Upper-bound and lower-bound $\var_{h_{i}}\left(x\right)$ in a fine
way
\item Prove a coarse bound on the $E_{h_{i}}\left(x\right)-\mu_{0}=\mu_{h_{i}}-\mu_{0}$
from the coarse bound on the variance
\item Use the bound on $\mu_{h_{i}}=E_{h_{i}}\left(x\right)$ to improve
the bound on $\var_{h_{i}}\left(x\right)$ to its final state which
provides us with the bound on $\beta_{i}$
\item Compute the limit behavior of $r_{i}$ from the coarse limit behavior
of $\var_{h_{i}}\left(x\right)$ and $\mu_{h_{i}}$
\end{itemize}

\paragraph{Limit behavior of $\protect\var_{h_{i}}\left(x\right)$ and of $\beta_{i}=\protect\var_{h_{i}}\left(x\right)^{-1}-\beta$}

First, we will deal with the variance of the hybrid. We will use the
Brascamp-Lieb result and a Cramer-Rao like bound to derive the final
bounds on $\beta_{i}$. The Brascamp-Lieb result will also give a
coarse bound on $\var_{h_{i}}$ which we will use in the other sections.

Let's start by upper-bounding the variance with the Brascamp-Lieb
bound. This will also give us the coarse bound we need on $\var_{h_{i}}\left(x\right)$.

Consider the value of $\phi_{i}^{''}$ at $\mu_{0}$. It gives us
a universal lower-bound on $\phi_{i}^{''}\left(x\right)$ (from assumption
\ref{eq: assumption 1, bound the range of the log-curvature}):
\[
\phi_{i}^{''}\left(x\right)\geq\phi_{i}^{''}\left(\mu_{0}\right)-B
\]
Thus, when the cavity-precision $\beta$ is sufficiently large, the
hybrid distribution $h_{i}$ is strongly log-concave: its second log-derivative
is lower-bounded everywhere by a strictly positive quantity:
\[
-\frac{\partial^{2}}{\partial x^{2}}\log\left(h_{i}\left(x\right)\right)=\phi_{i}^{''}\left(x\right)+\beta\geq\phi_{i}^{''}\left(\mu_{0}\right)+\beta-B
\]

For log-concave distributions, the variance of any statistic is upper-bounded
by the Brascamp-Lieb inequality. In particular, the variance is upper-bounded
by:
\begin{equation}
\var_{h_{i}}\left(x\right)\leq E_{h_{i}}\left(\left[\phi_{i}^{''}\left(x\right)+\beta\right]^{-1}\right)\label{eq: Brascamp-Lieb upper-bounds the variance}
\end{equation}

From this and the curvature lower-bound, we get a coarse upper-bound
on the variance: 
\begin{eqnarray}
\var_{h_{i}}\left(x\right) & \leq & \left[\phi_{i}^{''}\left(\mu_{0}\right)+\beta-B\right]^{-1}\nonumber \\
 & \leq & \O\left(\beta^{-1}\right)\label{eq: coarse upper bound on the variance}
\end{eqnarray}

This coarse upper-bound is the first step of our proof.

Now that this coarse bound is established, we continue working on
$\var_{h_{i}}\left(x\right)$ and backtrack to eq. \ref{eq: Brascamp-Lieb upper-bounds the variance}
which we will simplify.

In order to simplify it, we will use several properties of the inverse
function: $M\rightarrow M^{-1}$ for $\linebreak M\in\left[\phi_{i}^{''}\left(\mu_{0}\right)+\beta-B,\phi_{i}^{''}\left(\mu_{0}\right)+\beta+B\right]$.

First, we can perform a simple Taylor expansion of the inverse function
around $E_{h_{i}}\left(\phi_{i}^{''}\left(x\right)\right)+\beta$
in order to simplify $\left[\phi_{i}^{''}\left(x\right)+\beta\right]^{-1}$:
\begin{eqnarray}
\left[\phi_{i}^{''}\left(x\right)+\beta\right]^{-1} & \approx & \ \!\left[E_{h_{i}}\left(\phi_{i}^{''}\left(x\right)\right)+\beta\right]^{-1}\nonumber \\
 &  & -\left[E_{h_{i}}\left(\phi_{i}^{''}\left(x\right)\right)+\beta\right]^{-2}\left(\phi_{i}^{''}\left(x\right)-E_{h_{i}}\left(\phi_{i}^{''}\left(x\right)\right)\right)\nonumber \\
 &  & +\left[E_{h_{i}}\left(\phi_{i}^{''}\left(x\right)\right)+\beta\right]^{-3}\left(\phi_{i}^{''}\left(x\right)-E_{h_{i}}\left(\phi_{i}^{''}\left(x\right)\right)\right)^{2}
\end{eqnarray}

This expression is a rough approximation with which it is hard to
do rigorous mathematics. We would like to replace it with an upper-bound.
Thankfully, the inverse function $M\rightarrow M^{-1}$ is convex
and its curvature is bounded over the interval of interest: $M\in\left[\phi_{i}^{''}\left(\mu_{0}\right)+\beta-B,\phi_{i}^{''}\left(\mu_{0}\right)+\beta+B\right]$.
We can then upper-bound it by a second degree Taylor expansion in
which we replace $\left[E_{h_{i}}\left(\phi_{i}^{''}\left(x\right)\right)+\beta\right]^{-3}$
by an upper-bound $\left[\phi_{i}^{''}\left(\mu_{0}\right)+\beta-B\right]^{-3}$.
This yields:
\begin{eqnarray}
\left[\phi_{i}^{''}\left(x\right)+\beta\right]^{-1} & \leq & \ \!\left[E_{h_{i}}\left(\phi_{i}^{''}\left(x\right)\right)+\beta\right]^{-1}\nonumber \\
 &  & -\left[E_{h_{i}}\left(\phi_{i}^{''}\left(x\right)\right)+\beta\right]^{-2}\left(\phi_{i}^{''}\left(x\right)-E_{h_{i}}\left(\phi_{i}^{''}\left(x\right)\right)\right)\nonumber \\
 &  & +\left[\phi_{i}^{''}\left(\mu_{0}\right)+\beta-B\right]^{-3}\left(\phi_{i}^{''}\left(x\right)-E_{h_{i}}\left(\phi_{i}^{''}\left(x\right)\right)\right)^{2}\label{eq: an airtight upper bound for the inverse curvature}
\end{eqnarray}

We can now get an upper-bound for the variance by combining this upper-bound
\eqref{eq: an airtight upper bound for the inverse curvature} and
the Brascamp-Lieb inequality eq. \eqref{eq: Brascamp-Lieb upper-bounds the variance}:
\begin{eqnarray}
\var_{h_{i}}\left(x\right) & \leq & E_{h_{i}}\left(\left[\phi_{i}^{''}\left(x\right)+\beta\right]^{-1}\right)\nonumber \\
 & \leq & \left[E_{h_{i}}\left(\phi_{i}^{''}\left(x\right)\right)+\beta\right]^{-1}+\left[\phi_{i}^{''}\left(\mu_{0}\right)+\beta-B\right]^{-3}\var_{h_{i}}\left(\phi_{i}^{''}\left(x\right)\right)\nonumber \\
 & \leq & \left[E_{h_{i}}\left(\phi_{i}^{''}\left(x\right)\right)+\beta\right]^{-1}+\left[\phi_{i}^{''}\left(\mu_{0}\right)+\beta-B\right]^{-3}K_{3}^{2}\var_{h_{i}}\left(x\right)\nonumber \\
 & \leq & \left[E_{h_{i}}\left(\phi_{i}^{''}\left(x\right)\right)+\beta\right]^{-1}+K_{3}^{2}\left[\phi_{i}^{''}\left(\mu_{0}\right)+\beta-B\right]^{-4}\label{eq: a less coarse upper-bound on the variance}
\end{eqnarray}
where we have bounded the variance of $\phi_{i}^{''}$ using a Taylor
expansion, and re-used our coarse bound on the variance \eqref{eq: coarse upper bound on the variance}.

We will now invert this expression \eqref{eq: a less coarse upper-bound on the variance}.We
use the convexity of the inverse function: $\left(a+b\right)^{-1}\geq a^{-1}-a^{-2}b$
(with $a=E_{h_{i}}\left(\phi_{i}^{''}\left(x\right)\right)+\beta$)
to find:
\begin{eqnarray}
\var_{h_{i}}^{-1}\left(x\right) & \geq & \left(\left[E_{h_{i}}\left(\phi_{i}^{''}\left(x\right)\right)+\beta\right]^{-1}+K_{3}^{2}\left[\phi_{i}^{''}\left(\mu_{0}\right)+\beta-B\right]^{-4}\right)^{-1}\nonumber \\
 & \geq & \left[E_{h_{i}}\left(\phi_{i}^{''}\left(x\right)\right)+\beta\right]-K_{3}^{2}\left[E_{h_{i}}\left(\phi_{i}^{''}\left(x\right)\right)+\beta\right]^{2}\left[\phi_{i}^{''}\left(\mu_{0}\right)+\beta-B\right]^{-4}\nonumber \\
 & \geq & \beta+E_{h_{i}}\left(\phi_{i}^{''}\left(x\right)\right)-K_{3}^{2}\frac{\left[\phi_{i}^{''}\left(\mu_{0}\right)+\beta+B\right]^{2}}{\left[\phi_{i}^{''}\left(\mu_{0}\right)+\beta-B\right]^{-4}}\nonumber \\
 & \geq & \beta+E_{h_{i}}\left(\phi_{i}^{''}\left(x\right)\right)-\O\left(K_{3}^{2}\beta^{-2}\right)
\end{eqnarray}

Note that in that last expression, the terms are ordered according
to their asymptotic behavior: linear-term, order 1 term (which also
contains a $\beta^{-1}$ term) and a $\beta^{-2}$ remainder.

We finally expand $\phi_{i}^{''}\left(x\right)$ around $\mu_{h_{i}}=E_{h_{i}}\left(x\right)$
in order to simplify the expected value $E_{h_{i}}\left(\phi_{i}^{''}\left(x\right)\right)$:
\begin{eqnarray*}
E_{h_{i}}\left(\phi_{i}^{''}\left(x\right)\right) & \geq & \phi_{i}^{''}\left(\mu_{h_{i}}\right)-\frac{K_{4}}{2}\var_{h_{i}}\left(x\right)\\
 & \geq & \phi_{i}^{''}\left(\mu_{h_{i}}\right)-\frac{K_{4}}{2}\left[\phi_{i}^{''}\left(\mu_{0}\right)+\beta-B\right]^{-1}\\
 & \geq & \phi_{i}^{''}\left(\mu_{h_{i}}\right)-\O\left(K_{4}\beta^{-1}\right)
\end{eqnarray*}

Which gives us a lower-bound for $\beta_{i}=\var_{h_{i}}\left(x\right)-\beta$:
\begin{eqnarray}
\beta_{i} & \geq & \phi_{i}^{''}\left(\mu_{h_{i}}\right)-K_{4}\left[\phi_{i}^{''}\left(\mu_{0}\right)+\beta-B\right]^{-1}-K_{3}^{2}\frac{\left[\phi_{i}^{''}\left(\mu_{0}\right)+\beta+B\right]^{2}}{\left[\phi_{i}^{''}\left(\mu_{0}\right)+\beta-B\right]^{-4}}\nonumber \\
 & \geq & \phi_{i}^{''}\left(\mu_{h_{i}}\right)-\O\left(K_{4}\beta^{-1}+K_{3}^{2}\beta^{-2}\right)\label{eq: a lower bound on beta_i which still refers to mu_h_i}
\end{eqnarray}

This is not our final lower-bound on $\beta_{i}$ because it refers
to the hybrid-mean $\mu_{h_{i}}$ and not to the cavity-mean $\mu_{0}$.
As explained in the proof outline, we will further along the way bound
$\mu_{h_{i}}-\mu_{0}$. We will then be able to backtrack to our bounds
on $\beta_{i}$ which refer to $\mu_{h_{i}}$ and, with yet another
Taylor expansion, make them use $\mu_{0}$ instead.

However, before we work on $\mu_{h_{i}}$, let's conclude our work
on $\beta_{i}$. We will now upper-bound $\beta_{i}$ which requires
lower-bounding $\var_{h_{i}}$.

We will make use of a Cramer-Rao inequality%
\footnote{Consider estimating $x$ from the observation $\hat{x}=x+\eta$ where
the noise has distribution $h_{i}\left(\eta-\mu_{h_{i}}\right)$:
it has mean 0 and variance $\var_{h_{i}}$.

$\hat{x}$ is an unbiased estimator with variance $\var_{h_{i}}$.
The Fisher information in $\hat{x}$ about $x$ is: $E_{h_{i}}\left(\phi_{i}^{''}\left(x\right)+\beta\right)$.
This gives the claimed Cramer-Rao inequality.

See example 10.22 from Saumard et al (2014).%
}. It reads:
\begin{eqnarray}
\var_{h_{i}}^{-1}\left(x\right) & \leq & E_{h_{i}}\left(\phi_{i}^{''}\left(x\right)+\beta\right)\nonumber \\
 & \leq & \beta+\phi_{i}^{''}\left(\mu_{h_{i}}\right)+K_{4}E_{h_{i}}\left(\left(x-\mu_{h_{i}}\right)^{2}/2\right)\nonumber \\
 & \leq & \beta+\phi_{i}^{''}\left(\mu_{h_{i}}\right)+\frac{K_{4}}{2}\var_{h_{i}}\left(x\right)\nonumber \\
 & \leq & \beta+\phi_{i}^{''}\left(\mu_{h_{i}}\right)+\frac{K_{4}}{2}\left[\phi_{i}^{''}\left(\mu_{0}\right)+\beta-B\right]^{-1}
\end{eqnarray}
where we have used a Taylor expansion and the coarse-bound on $\var_{h_{i}}\left(x\right)$
(eq. \eqref{eq: coarse upper bound on the variance}). 

This gives us an upper-bound on $\beta_{i}$:
\begin{eqnarray}
\beta_{i} & \leq & \phi_{i}^{''}\left(\mu_{h_{i}}\right)+\frac{K_{4}}{2}\left[\phi_{i}^{''}\left(\mu_{0}\right)+\beta-B\right]^{-1}\nonumber \\
 & \leq & \phi_{i}^{''}\left(\mu_{h_{i}}\right)+\O\left(K_{4}\beta^{-1}\right)\label{eq: an upper-bound on beta_i that still refers to mu_h_i}
\end{eqnarray}

Combining the upper and lower-bound (eqs. \ref{eq: a lower bound on beta_i which still refers to mu_h_i}
and \ref{eq: an upper-bound on beta_i that still refers to mu_h_i}),
we find that we can express the limit behavior of $\beta_{i}$ as
a function of $\mu_{h_{i}}$:
\begin{equation}
\beta_{i}=\phi_{i}^{''}\left(\mu_{h_{i}}\right)+\O\left(K_{4}\beta^{-1}+K_{3}^{2}\beta^{-2}\right)\label{eq: mu_h_i refering limit-behavior of beta_i}
\end{equation}

This last result concludes our first section on $\var_{h_{i}}\left(x\right)$
and $\beta_{i}$.%
\footnote{Note also the slight variant:
\begin{equation}
\beta_{i}=E_{h_{i}}\left(\phi_{i}^{''}\left(x\right)\right)+\O\left(K_{3}^{2}\beta^{-2}\right)\label{eq: expected-value limit behavior of beta_i}
\end{equation}

This variant is not used again in the rest of the work we present
here but is the next order of the expansion of $\beta_{i}$ which
could be of interest in expansions.%
}

\paragraph{Limit of $\mu_{h_{i}}$}

The next step in our proof is to work on the mean of the hybrid. We
will now show that $\mu_{h_{i}}\approx\mu_{0}$. This will give us
the final expression: $\beta_{i}\approx\phi_{i}^{''}\left(\mu_{0}\right)$
and will also be important in deriving the limit behavior of $r_{i}$.

We start from a ``Stein relationship'': with integration by parts,
we find that it would be true that, for any probability distribution,
the expected value of the log-gradient of the distribution is always
equal to 0:
\begin{equation}
E_{p}\left(\psi^{'}\left(x\right)\right)=0
\end{equation}

where we have used $p\left(x\right)\propto\exp\left(-\psi\left(x\right)\right)$
as an example.

We then apply that relationship to $h_{i}$ whose log-gradient has
contributions from the site and from the cavity distribution:
\begin{eqnarray}
E_{h_{i}}\left(\phi_{i}^{'}\left(x\right)+x\beta-\left(\beta\mu_{0}-\delta r\right)\right) & = & 0\label{eq: the Stein relationship}\\
\beta\left[\mu_{h_{i}}-\mu_{0}\right] & = & -\delta r-E_{h_{i}}\left(\phi_{i}^{'}\left(x\right)\right)\label{eq: Slightly rephrasing the Stein relationship}
\end{eqnarray}

We now perform a Taylor expansion of $E_{h_{i}}\left(\phi_{i}^{'}\left(x\right)\right)$:
\begin{equation}
E_{h_{i}}\left(\phi_{i}^{'}\left(x\right)\right)\approx\phi_{i}^{'}\left(\mu_{0}\right)+\phi_{i}^{''}\left(\mu_{0}\right)\left[\mu_{h_{i}}-\mu_{0}\right]\label{eq: a Taylor expansion}
\end{equation}

The error in that expression can be upper-bounded:
\begin{eqnarray}
\text{error} & \leq & \frac{K_{3}}{2}\var_{h_{i}}\left(x\right)\nonumber \\
 & \leq & \frac{K_{3}}{2}\left[\phi_{i}^{''}\left(\mu_{0}\right)+\beta-B\right]^{-1}\nonumber \\
 & \leq & \O\left(K_{3}\beta^{-1}\right)\label{eq: bounding the error of the Taylor expansion}
\end{eqnarray}

We combine that Taylor expansion \eqref{eq: a Taylor expansion} and
that bound \eqref{eq: bounding the error of the Taylor expansion}
with the Stein relationship \eqref{eq: Slightly rephrasing the Stein relationship}:
\begin{equation}
\left[\beta+\phi_{i}^{''}\left(\mu_{0}\right)\right]\left[\mu_{h_{i}}-\mu_{0}\right]=-\delta r-\phi_{i}^{'}\left(\mu_{0}\right)+\O\left(K_{3}\beta^{-1}\right)\label{eq: almost bounding mu_h_i - mu_0}
\end{equation}

From which we can deduce a coarse bound on $\mu_{h_{i}}-\mu_{0}$:
\begin{eqnarray}
\mu_{h_{i}}-\mu_{0} & = & \left[\beta+\phi_{i}^{''}\left(\mu_{0}\right)\right]^{-1}\left[-\delta r-\phi_{i}^{'}\left(\mu_{0}\right)\right]+\O\left(K_{3}\beta^{-2}\right)\nonumber \\
 & = & \O\left(\left[\delta r+\phi_{i}^{'}\left(\mu_{0}\right)\right]\beta^{-1}+K_{3}\beta^{-2}\right)\label{eq: limit behavior of mu_h_i - mu_0}
\end{eqnarray}

This coarse bound will now be used to slightly update eq. \eqref{eq: a lower bound on beta_i which still refers to mu_h_i}
in order to remove the dependency on $\mu_{h_{i}}$ and obtain the
final equation on the limit behavior of $\beta_{i}$. We will also
use it in the final section to compute the limit behavior of $r_{i}$.

\paragraph*{Returning to $\beta_{i}$}

We combine the coarse bound on $\mu_{h_{i}}$ with our expression
for the limit behavior of $\beta_{i}$. We find:
\begin{eqnarray}
\beta_{i}-\phi_{i}^{''}\left(\mu_{0}\right) & = & \phi_{i}^{''}\left(\mu_{h_{i}}\right)-\phi_{i}^{''}\left(\mu_{0}\right)+\O\left(K_{4}\beta^{-1}+K_{3}^{2}\beta^{-2}\right)\nonumber \\
 & = & \O\left(K_{3}\left[\mu_{h_{i}}-\mu_{0}\right]+K_{4}\beta^{-1}+K_{3}^{2}\beta^{-2}\right)\nonumber \\
 & = & \O\left(K_{3}\left[\delta r+\phi_{i}^{'}\left(\mu_{0}\right)\right]\beta^{-1}+K_{3}^{2}\beta^{-2}+K_{4}\beta^{-1}+K_{3}^{2}\beta^{-2}\right)\label{eq:limit behavior of beta_i}
\end{eqnarray}

Which completely concludes the proof for the expression of the limit
behavior of $\beta_{i}$.

\paragraph{Limit of $r_{i}=\protect\var_{h_{i}}^{-1}\left(x\right)\mu_{h_{i}}-\left(\beta\mu_{0}-\delta r\right)$}

We now turn to the task of approximating $r_{i}$. Our very first
step here is to use the Stein relationship we already used above (eq.
\eqref{eq: the Stein relationship}). This gives a simple expression
for $r_{i}$ without going through the limit behavior of the hybrid
mean and the hybrid variance. 

We start from the Stein relationship we had before (eq. \eqref{eq: the Stein relationship}):
\begin{eqnarray}
E_{h_{i}}\left(\phi_{i}^{'}\left(x\right)+x\beta-\left(\beta\mu_{0}-\delta r\right)\right) & = & 0\nonumber \\
E_{h_{i}}\left(\phi_{i}^{'}\left(x\right)\right)+\beta\mu_{h_{i}}-\beta\mu_{0}+\delta r & = & 0\label{eq: a different rewritting of the Stein relationship}
\end{eqnarray}

Let $g\left(x\right)$ be the Gaussian with same mean and variance
as the hybrid. The natural parameters of $g$ are the sum of the natural
parameters of the cavity distribution and of the approximation of
$h_{i}$. In other words, its natural parameters are $\beta+\beta_{i}$
and $\beta_{i}\mu_{0}-\delta_{r}+r_{i}$:
\[
g=\mathcal{P}\left(h_{i}\right)\propto\exp\left(-\left[\beta+\beta_{i}\right]\frac{x^{2}}{2}+\left[\beta\mu_{0}-\delta r+r_{i}\right]x\right)
\]
If we now apply the Stein relationship to $g$, we find:
\begin{eqnarray}
E_{g}\left(\left[\beta+\beta_{i}\right]x-\left(\beta\mu_{0}-\delta r+r_{i}\right)\right) & = & 0\nonumber \\
\left(\beta_{i}\mu_{h_{i}}-r_{i}\right)+\left(\beta\mu_{h_{i}}-\beta\mu_{0}+\delta r\right) & = & 0\label{eq: A different rewritting for the Gaussian projection of the hybrid}
\end{eqnarray}

Combining these two equations \eqref{eq: a different rewritting of the Stein relationship},
\eqref{eq: A different rewritting for the Gaussian projection of the hybrid},
we find:
\begin{equation}
r_{i}=\beta_{i}\mu_{h_{i}}-E_{h_{i}}\left(\phi_{i}^{'}\left(x\right)\right)\label{eq: exact nice expression for r_i}
\end{equation}
which is a nice simple expression for $r_{i}$. We have thus found
a way to work directly on $r_{i}$ while avoiding direct work on the
mean and variance.%
\footnote{Note that, in the exact same way as for eq. \eqref{eq: expected-value limit behavior of beta_i},
this equation is interesting in its own right for expansions of our
result as it is the next order of the expression for $r_{i}$.%
}

We will now simplify eq. \eqref{eq: exact nice expression for r_i}.
We start from a slight rewriting:
\begin{equation}
r_{i}=\beta_{i}\mu_{0}+\beta_{i}\left(\mu_{h_{i}}-\mu_{0}\right)-E_{h_{i}}\left(\phi_{i}^{'}\left(x\right)\right)
\end{equation}

Let's expand the $E_{h_{i}}\left(\phi_{i}^{'}\left(x\right)\right)$
term. We just perform a Taylor expansion (around $\mu_{0}$):
\begin{eqnarray}
E_{h_{i}}\left(\phi_{i}^{'}\left(x\right)\right) & = & \phi_{i}^{'}\left(\mu_{0}\right)+\phi_{i}^{''}\left(\mu_{0}\right)\left(\mu_{h_{i}}-\mu_{0}\right)+\O\left(K_{3}\left(\var_{h_{i}}+\left(\mu_{h_{i}}-\mu_{0}\right)^{2}\right)\right)\nonumber \\
 & = & \phi_{i}^{'}\left(\mu_{0}\right)+\phi_{i}^{''}\left(\mu_{0}\right)\left(\mu_{h_{i}}-\mu_{0}\right)+\O\left(K_{3}\beta^{-1}+K_{3}\left(\mu_{h_{i}}-\mu_{0}\right)^{2}\right)
\end{eqnarray}

This gives us the following expression for $r_{i}$:
\begin{equation}
r_{i}\approx\beta_{i}\mu_{0}-\phi_{i}^{'}\left(\mu_{0}\right)+\left(\beta_{i}-\phi_{i}^{''}\left(\mu_{0}\right)\right)\left(\mu_{h_{i}}-\mu_{0}\right)
\end{equation}

We now just need to bound $\beta_{i}-\phi_{i}^{''}\left(\mu_{0}\right)$
to finalize our proof, which we have already done in eq. \eqref{eq:limit behavior of beta_i}.
We then obtain the final expression for the limit behavior of $r_{i}$:
\begin{eqnarray}
r_{i}-\beta_{i}\mu_{0}-\phi_{i}^{'}\left(\mu_{0}\right) & = & \O\left(\left(\beta_{i}-\phi_{i}^{''}\left(\mu_{0}\right)\right)\left(\mu_{h_{i}}-\mu_{0}\right)+K_{3}\beta^{-1}+K_{3}\left(\mu_{h_{i}}-\mu_{0}\right)^{2}\right)\nonumber \\
 & = & \O\left(K_{3}\beta^{-1}+K_{4}\beta^{-1}\left(\mu_{h_{i}}-\mu_{0}\right)+K_{3}\left(\mu_{h_{i}}-\mu_{0}\right)^{2}\right)\label{eq: final expression for r_i}
\end{eqnarray}

\end{proof}

\subsection{Limit behavior in high-dimensions\label{sub:Limit-behavior-in high-dimensions}}

Our result generalizes to the p-dimensional case: when the probability
distribution we are trying to approximate concerns a p-dimensional
random variable.

The main difficulty in stating and in understanding that case comes
from the tensor notation that we have to work with. Indeed, all of
the moments and the derivatives we have to work with shift from being
scalars to being p-dimensional tensors of various orders.

\subsubsection{Tensor notation and rephrasing the assumptions}

Let's first recall what tensors are. A tensor of order $k$ is simply
a multilinear mapping of $\left(\mathbb{R}^{p}\right)^{k}$ to $\mathbb{R}$:
it takes $k$ vectors and returns a scalar. We will note this $T\left[v_{1},v_{2}\dots v_{k}\right]$.
This is a simple extension of vectors (order 1 tensors) and matrices
(order 2 tensors). In our examples, we will always deal with symmetric
tensors for which the order of the arguments does not influence the
outputted value. Finally, an order $k$ tensor can be also be used
as a multilinear mapping with fewer than $l<k$ entries: it then returns
an order $k-l$ tensor. This simply corresponds to specifying some
of the inputs to the original tensor and leaving the assignation of
the other inputs for latter. We will note by leaving to be specified
inputs with a minus sign: eg $T\left[v_{1}\dots v_{k-2},-,-\right]$
for $l=2$. For example, if we were to perform a Taylor expansion
of $\nabla\phi\left(\mathbf{x}\right)$ , this Taylor expansion would
need to return a vector (ie: an order 1 tensor). The first term of
the expansion would be $H\phi_{i}\left(\mathbf{x}_{0}\right)\left[\left(\mathbf{x}-\mathbf{x}_{0}\right),-\right]$,
the second term $\phi_{i}^{\left(3\right)}\left[\left(\mathbf{x}-\mathbf{x}_{0}\right),\left(\mathbf{x}-\mathbf{x}_{0}\right),-\right]$,
etc.

In order to state our result in $p$-dimensions, we will first need
to extend our assumptions to the high-dimensional case. This is relatively
easy for the boundedness condition on $H\phi_{i}$: we just need to
find a matrix $\mathbf{B}$ such that:
\begin{equation}
\forall\mathbf{x}_{1},\mathbf{x}_{2}\ \ H\phi_{i}\left(\mathbf{x}_{1}\right)-H\phi_{i}\left(\mathbf{x}_{2}\right)\leq\mathbf{B}\label{eq: matrix valued bounded range of second derivative}
\end{equation}
where the order relationship is the standard (semi) order between
symmetric matrices: $\mathbf{B_{1}}\geq\mathbf{B}_{2}$ if their difference
is positive semi-definite. This captures the idea that the range of
the second derivatives is small.

The boundedness condition on the higher-derivatives requires us to
define a norm on tensors. We will simply use the norm induced on tensors
by the $L_{2}$ norm on vectors. This is defined, for a tensor of
order $k$ by:
\begin{equation}
\left\Vert T\right\Vert =\max_{v_{1}\dots v_{k}}\frac{T\left[v_{1}\dots v_{k}\right]}{\prod\left\Vert v_{i}\right\Vert _{2}}
\end{equation}

For vectors (who are order 1 tensors), this is of course the $L_{2}$
norm. For matrices, this corresponds to the maximum eigenvalue. 

This induced norm has good behavior when we input an order $k$ tensor
$T_{k}$ with only $l$ inputs: the resulting order $k-l$ tensor
$T_{k-l}=T_{k}\left[v_{1}\dots v_{l},-,\dots,-\right]$ has bounded
norm:
\begin{equation}
\left\Vert T_{k-l}\right\Vert \leq\left\Vert T_{k}\right\Vert \prod_{i=1}^{l}\left\Vert v_{i}\right\Vert _{2}
\end{equation}

This simply follows from the definition of the norm. We will use this
fact several times because we often need to perform Taylor expansion
of gradient vectors or of Hessian matrices. For example, if we perform
an order 0 Taylor expansion of $H\phi_{i}\left(\mathbf{x}\right)$
around $\bmu_{0}$, we can use this result to bound the matrix-valued
reminder term:
\begin{eqnarray*}
R_{3}\left(\mathbf{x}\right) & = & H\phi_{i}\left(\mathbf{x}\right)-H\phi_{i}\left(\bmu_{0}\right)\\
\left\Vert R_{3}\left(\mathbf{x}\right)\right\Vert  & \leq & K_{3}\left\Vert \mathbf{x}-\bmu_{0}\right\Vert 
\end{eqnarray*}

Once we have defined this norm, we can define the infinity norm for
a tensor-valued function like the third derivative and the fourth
derivative. We can then state our second assumption in p-dimensions:
\begin{eqnarray}
\left\Vert \phi_{i}^{\left(3\right)}\right\Vert _{\infty} & \leq & K_{3}\\
\left\Vert \phi_{i}^{\left(4\right)}\right\Vert _{\infty} & \leq & K_{4}\label{eq: matrix valued bounded third and fourth}
\end{eqnarray}

\subsubsection{Updating the proof}

Let's now see how our proof changes now that we are in the high-dimensional
case.

\paragraph{Summary of the differences}

All the important steps of the proof are identical: the Brascamp-Lieb
theorem, the Cramer-Rao bound, the Stein's method trick all work in
high-dimensions. Thus, the algebra of the proof is the same. What
does change is the bounding of the error terms: throughout the proof,
we repeatedly bound $E_{h_{i}}\left(\left(x-\mu_{h_{i}}\right)^{2}\right)$
which corresponds in high-dimensions to $E_{h_{i}}\left(\left\Vert \mathbf{x}-\bmu_{h_{i}}\right\Vert ^{2}\right)$.
In 1D, we bounded this using the coarse bound on the variance of the
hybrid we got from the Brascamp-Lieb theorem:
\[
\var_{h_{i}}\left(x\right)\leq\left[H\phi_{i}\left(\mu_{0}\right)+\beta-B\right]^{-1}
\]

We will do the exact same thing in high-dimensions. Let's note $\mathbf{Q}$
the precision matrix of the cavity distribution. The coarse variance
bound reads:
\[
\cov_{h_{i}}\left(\mathbf{x}\right)\leq\left[H\phi_{i}\left(\bmu_{0}\right)+\mathbf{Q}-\mathbf{B}\right]^{-1}
\]

which gives:
\begin{eqnarray*}
E_{h_{i}}\left(\left\Vert \mathbf{x}-\bmu_{h_{i}}\right\Vert ^{2}\right) & \leq & Tr\left(\left[H\phi_{i}\left(\bmu_{0}\right)+\mathbf{Q}-\mathbf{B}\right]^{-1}\right)\\
 & \leq & p\left\Vert \left[H\phi_{i}\left(\bmu_{0}\right)+\mathbf{Q}-\mathbf{B}\right]^{-1}\right\Vert 
\end{eqnarray*}

which gives us the effect of the dimension parameter $p$ on our results:
it scales the error (at most) linearly.

\paragraph{An example: going through the bound on $\mathbf{Q}_{i}$}

Now that we have given a high-level explanation of the difference
between the two cases, let's detail how the proof of the result needs
to be extended for the bound on $\mathbf{Q}_{i}$.

In the 1D proof, we bounded: $\var_{h_{i}}\left(H\phi_{i}\left(x\right)\right)$
using a Taylor expansion. In high-dimensions, this gives:

\begin{eqnarray*}
\left\Vert H\phi_{i}\left(\mathbf{x}\right)-H\phi_{i}\left(\bmu_{h_{i}}\right)\right\Vert  & \leq & K_{3}\left\Vert \mathbf{x}-\bmu_{h_{i}}\right\Vert \\
\left\Vert H\phi_{i}\left(\mathbf{x}\right)-H\phi_{i}\left(\bmu_{h_{i}}\right)\right\Vert ^{2} & \leq & K_{3}^{2}\left\Vert \mathbf{x}-\bmu_{h_{i}}\right\Vert ^{2}
\end{eqnarray*}

We then have:
\begin{eqnarray*}
\var_{h_{i}}\left(H\phi_{i}\left(x\right)\right) & \leq & K_{3}^{2}E_{h_{i}}\left(\left\Vert \mathbf{x}-\bmu_{h_{i}}\right\Vert ^{2}\right)\\
 & \leq & K_{3}^{2}Tr\left(\cov_{h_{i}}\left(\mathbf{x}\right)\right)\\
 & \leq & K_{3}^{2}Tr\left(\left[H\phi_{i}\left(\bmu_{0}\right)+\mathbf{Q}-\mathbf{B}\right]^{-1}\right)\\
 & \leq & pK_{3}^{2}\left\Vert \left[H\phi_{i}\left(\bmu_{0}\right)+\mathbf{Q}-\mathbf{B}\right]^{-1}\right\Vert 
\end{eqnarray*}
where we have used the coarse bound on the covariance of the hybrid
distribution that we obtained from the Brascamp-Lieb theorem.

\paragraph{The high-dimensional theorem}

This then leads us to the following theorem stating our limit result
in high-dimensions. Note that the ``high precision'' limit here
means that \emph{all }eigenvalues of $\mathbf{Q}$ should go to infinity. 
\begin{thm}
High-dimensional extension

Th. \ref{thm:Limit-behavior-of the site approximation-1} also applies
when approximating a target distribution over a p-dimensional space.
The small term is then 
\begin{eqnarray}
\epsilon & = & \max\left(K_{3},K_{4}\right)Tr\left(\left[H\phi_{i}\left(\bmu_{0}\right)+\mathbf{Q}-\mathbf{B}\right]^{-1}\right)\\
 & \leq & p\max\left(K_{3},K_{4}\right)\left\Vert \left[H\phi_{i}\left(\bmu_{0}\right)+\mathbf{Q}-\mathbf{B}\right]^{-1}\right\Vert 
\end{eqnarray}
.\end{thm}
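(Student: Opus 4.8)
The plan is to re-run the proof of Theorem~\ref{thm:Limit-behavior-of the site approximation-1} line by line, replacing scalars by the appropriate tensors and tracking where the ambient dimension $p$ enters the error terms. The three structural ingredients survive essentially verbatim. First, Brascamp-Lieb gives the coarse (Loewner-order) upper bound $\cov_{h_{i}}(\mathbf{x})\leq[H\phi_{i}(\bmu_{0})+\mathbf{Q}-\mathbf{B}]^{-1}$ as soon as the cavity precision $\mathbf{Q}$ is large enough for the hybrid to be strongly log-concave, which here uses assumption~\eqref{eq: matrix valued bounded range of second derivative} in place of \eqref{eq: assumption 1, bound the range of the log-curvature}. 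Second, the Cramer-Rao bound $\cov_{h_{i}}(\mathbf{x})^{-1}\leq E_{h_{i}}(H\phi_{i}(\mathbf{x}))+\mathbf{Q}$ supplies the matching lower bound on the precision. Third, the Stein identity $E_{h_{i}}(\nabla\phi_{i}(\mathbf{x})+\mathbf{Q}(\mathbf{x}-\bmu_{0})+\delta\mathbf{r})=\mathbf{0}$, applied both to $h_{i}$ and to its Gaussian projection, yields the exact identities $\bmu_{h_{i}}-\bmu_{0}=-\mathbf{Q}^{-1}(\delta\mathbf{r}+E_{h_{i}}(\nabla\phi_{i}(\mathbf{x})))$ and $\mathbf{r}_{i}=\mathbf{Q}_{i}\bmu_{h_{i}}-E_{h_{i}}(\nabla\phi_{i}(\mathbf{x}))$, exactly as in the scalar case.

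The genuinely new bookkeeping concerns Taylor remainders. Wherever the one-dimensional proof expands $\phi_{i}^{'}$ or $\phi_{i}^{''}$ and bounds the remainder by $K_{3}$ or $K_{4}$ times $E_{h_{i}}((x-\mu_{h_{i}})^{2})$, the $p$-dimensional version expands the gradient $\nabla\phi_{i}$ (an order-1 tensor) or the Hessian $H\phi_{i}$ (an order-2 tensor), and controls the remainder in the induced operator norm via $\Vert T_{k}[v_{1},\dots,v_{l},-,\dots,-]\Vert\leq\Vert T_{k}\Vert\prod_{j}\Vert v_{j}\Vert_{2}$ together with $\Vert\phi_{i}^{(3)}\Vert_{\infty}\leq K_{3}$ and $\Vert\phi_{i}^{(4)}\Vert_{\infty}\leq K_{4}$ from \eqref{eq: matrix valued bounded third and fourth}. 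The second moment that appears throughout is then $E_{h_{i}}(\Vert\mathbf{x}-\bmu_{h_{i}}\Vert^{2})=\mathrm{Tr}(\cov_{h_{i}}(\mathbf{x}))\leq\mathrm{Tr}([H\phi_{i}(\bmu_{0})+\mathbf{Q}-\mathbf{B}]^{-1})\leq p\,\Vert[H\phi_{i}(\bmu_{0})+\mathbf{Q}-\mathbf{B}]^{-1}\Vert$, which is precisely where the factor $p$ enters and which collapses to the scalar $\beta^{-1}$ when $p=1$. Carrying this through the same five-step skeleton (coarse covariance bound; refined upper and lower covariance bounds; coarse bound on $\bmu_{h_{i}}-\bmu_{0}$; feedback into the precision estimate; then into $\mathbf{r}_{i}$ through the Stein identity) reproduces the statement with small term $\epsilon=\max(K_{3},K_{4})\,\mathrm{Tr}([H\phi_{i}(\bmu_{0})+\mathbf{Q}-\mathbf{B}]^{-1})$.

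The main obstacle I expect is the step that refines the covariance bound, i.e.\ the matrix analogue of the scalar manipulation in which convexity of $t\mapsto t^{-1}$ on a bounded interval converts the Brascamp-Lieb bound into $\cov_{h_{i}}(\mathbf{x})\leq[E_{h_{i}}(H\phi_{i}(\mathbf{x}))+\mathbf{Q}]^{-1}+(\text{small})$ and then inverts it back to $\cov_{h_{i}}(\mathbf{x})^{-1}\geq\mathbf{Q}+E_{h_{i}}(H\phi_{i}(\mathbf{x}))-(\text{small})$. In one dimension this is just $(a+b)^{-1}\geq a^{-1}-a^{-2}b$; in $p$ dimensions one must instead invoke operator convexity of $\mathbf{X}\mapsto\mathbf{X}^{-1}$ on the positive-definite cone, using that all the Hessians in play lie in the Loewner interval $[H\phi_{i}(\bmu_{0})+\mathbf{Q}-\mathbf{B},\,H\phi_{i}(\bmu_{0})+\mathbf{Q}+\mathbf{B}]$ so that the second-order behaviour of the matrix-inverse map is uniformly controlled there, of size $\mathcal{O}(\Vert[H\phi_{i}(\bmu_{0})+\mathbf{Q}-\mathbf{B}]^{-1}\Vert^{2})$ per unit of perturbation. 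Non-commutativity forces one to symmetrize the ``multiplying out'' of perturbed inverses, but since every error term is ultimately bounded in operator norm and only its trace against the identity is needed, this affects constants rather than the stated order of $\epsilon$.
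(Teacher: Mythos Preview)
Your proposal is correct and follows essentially the same route as the paper: the paper's proof explicitly says that Brascamp--Lieb, Cramer--Rao, and the Stein identity all carry over verbatim, and that the only change is replacing the scalar second moment $E_{h_{i}}((x-\mu_{h_{i}})^{2})$ by $E_{h_{i}}(\Vert\mathbf{x}-\bmu_{h_{i}}\Vert^{2})=\mathrm{Tr}(\cov_{h_{i}}(\mathbf{x}))\leq\mathrm{Tr}([H\phi_{i}(\bmu_{0})+\mathbf{Q}-\mathbf{B}]^{-1})$, which is exactly your key identification of where $p$ enters. If anything you are more careful than the paper: you flag the operator-convexity issue in the refined covariance step, whereas the paper simply asserts that ``the algebra of the proof is the same'' and works only the $\var_{h_{i}}(H\phi_{i}(\mathbf{x}))$ example in detail.
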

\begin{rem*}
In this result, the error scales linearly with the dimensionality
of the space over which the target distribution is expressed. This
means that, in the non-parametric case for which $p=n$, our results
stop providing useful bounds, and it wouldn't be possible to prove
that EP is asymptotically exact. We believe that this is a weakness
of our proof and not of EP, and that a more careful bounding of the
error would show that EP still behaves properly in non-parametric
problems.
\end{rem*}

\subsection{Limit behavior under weaker assumptions\label{sub:Limit-behavior-under weaker assumptions}}

Our choice of constraints in the main text is restrictive: while it
can be applied for wide classes of statistical models, it's still
quite far from being a necessary condition. In this section, we expose
an alternative set of assumptions such that a variant of theorem \ref{thm:Limit-behavior-of the site approximation-1}
holds.

In this section, we will simply use a global bounding of the site:
\begin{equation}
l_{i}\left(x\right)\leq1\label{eq: weaker assumption, bounded site}
\end{equation}
We will also replace our global assumption constraining the higher
log-derivatives of the sites with a purely local constraint: we assume
that there exists a compact neighborhood $\mathcal{I}$ of $\mu_{0}$
such that $\phi_{i}$ is derivable four times, and that those four
derivatives are bounded inside $\mathcal{I}$.
\begin{thm}
Global bounding of the site (eq. \ref{eq: weaker assumption, bounded site})
and local smoothness constraints are sufficient for th. \ref{thm:Limit-behavior-of the site approximation-1}
to hold\end{thm}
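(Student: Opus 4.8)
The plan is to reduce the weak-assumption case to the structure of the proof of Theorem~\ref{thm:Limit-behavior-of the site approximation-1} by showing that, for $\beta$ large enough, the hybrid $h_{i}$ concentrates on the compact neighborhood $\mathcal{I}$ where the local smoothness assumption gives us the bounds we need, and that the mass outside $\mathcal{I}$ is negligible to all the orders we care about. The key point is that the original proof only ever uses the global assumptions \eqref{eq: assumption 1, bound the range of the log-curvature}--\eqref{eq: assumption 2, bound the 3 and 4 derivatives} in two ways: (a) to get the coarse Brascamp--Lieb bound $\var_{h_{i}}(x)\le[\phi_i''(\mu_0)+\beta-B]^{-1}=\O(\beta^{-1})$, and (b) to control the Taylor remainders of $\phi_i'$ and $\phi_i''$ around $\mu_0$ via $K_3,K_4$. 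So I would re-establish surrogates for (a) and (b) under the new hypotheses and then the rest of the argument (the Stein relation \eqref{eq: the Stein relationship}, the Cramer--Rao bound, and the algebra producing $r_i$ and $\beta_i$) carries over essentially verbatim.

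First I would prove a concentration lemma. Write $h_i(x)\propto l_i(x)\exp(-\tfrac{\beta}{2}x^2+(\beta\mu_0-\delta r)x)$ and split the normalizing integral over $\mathcal{I}$ and its complement. On $\mathcal{I}^c$, use $l_i(x)\le 1$ from \eqref{eq: weaker assumption, bounded site} so that $h_i(x)\le \exp(-\tfrac{\beta}{2}x^2+(\beta\mu_0-\delta r)x)$, a Gaussian with variance $\beta^{-1}$ centered near $\mu_0$; the mass it puts outside a fixed neighborhood of $\mu_0$ is $\O(\exp(-c\beta))$ for some $c>0$ depending only on $\mathcal{I}$ and $\mu_0$. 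On $\mathcal{I}$, the local four-times-differentiability and boundedness of $\phi_i,\dots,\phi_i^{(4)}$ mean that after shrinking $\mathcal{I}$ slightly we can assume $\phi_i''$ has bounded range there, so the restricted (truncated) hybrid is strongly log-concave for $\beta$ large and Brascamp--Lieb applies to it. Combining, for $\beta$ large the full hybrid satisfies $\var_{h_i}(x)=\O(\beta^{-1})$ and, more importantly, $E_{h_i}(g(x))=E_{h_i}(g(x)\mathbf{1}_{\mathcal{I}})+\O(\|g\|_\infty e^{-c\beta})$ whenever $g$ has at most polynomial growth; this is exactly what lets the truncated Taylor expansions of $\phi_i',\phi_i''$ on $\mathcal{I}$ replace the global ones, with the local sup-norms of $\phi_i^{(3)},\phi_i^{(4)}$ playing the role of $K_3,K_4$ and an extra $\O(e^{-c\beta})$ term absorbed into the $\O(\beta^{-1})$ remainders.

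With those two surrogates in hand, I would then run through the five-step skeleton of the proof of Theorem~\ref{thm:Limit-behavior-of the site approximation-1} without change: the coarse variance bound, the refined variance bound via the Taylor expansion of the matrix inverse, the coarse bound on $\mu_{h_i}-\mu_0$ from the Stein relation \eqref{eq: Slightly rephrasing the Stein relationship}, the return to $\beta_i$, and finally the expression $r_i=\beta_i\mu_{h_i}-E_{h_i}(\phi_i'(x))$ followed by a truncated Taylor expansion. Every expected value appearing is either of $\phi_i'$, $\phi_i''$, or a smooth function of these, so each can be handled by the truncation-plus-local-Taylor device above. The conclusion is that the same asymptotic formulas hold, with the constants now reflecting local rather than global derivative bounds.

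The main obstacle I anticipate is making the truncation fully rigorous when $\delta r$ is allowed to be large: the theorem permits $\delta r$ of order $\beta$ in principle, which would shift the Gaussian envelope's center and could push mass out of $\mathcal{I}$. The honest fix is to note that the regime of interest always has $\delta r+\phi_i'(\mu_0)$ bounded (indeed, in the applications $\delta r=-\phi_i'(\mu_0)$ up to lower-order terms), so that the cavity mean $\mu_0-\beta^{-1}\delta r$ stays within $\mathcal{I}$ for $\beta$ large; I would state this as an explicit side condition. A second, more technical nuisance is that Brascamp--Lieb strictly speaking needs a genuinely log-concave density on all of $\mathbb{R}$, not a truncation, so I would instead apply it to the globally log-concave comparison density $\tilde h_i(x)\propto \exp(-\tfrac12(\phi_i''(\mu_0)-B_{\mathrm{loc}})x^2-\tfrac{\beta}{2}x^2+\cdots)$ that dominates $h_i$ on $\mathcal{I}$ and controls it elsewhere, or invoke the weighted Poincar\'e form of the inequality directly; either way the bookkeeping is routine once the concentration lemma is established.
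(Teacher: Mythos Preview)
Your overall strategy---localize to $\mathcal{I}$, show the complement contributes exponentially small mass, and let the local derivative bounds on $\mathcal{I}$ play the role of $B,K_3,K_4$---is correct and is the same high-level idea as the paper's. The execution differs. The paper does not re-run the five-step skeleton on $h_i$; instead it applies Theorem~\ref{thm:Limit-behavior-of the site approximation-1} as a black box to the \emph{restricted} hybrid $h_i^r(x)\propto h_i(x)\,1(x\in\mathcal{I})$, since the restricted site trivially satisfies the strong assumptions on its support. It then proves that the first two moments of $h_i$ and $h_i^r$ differ by an exponentially small amount, using a moment-generating-function argument: pointwise convergence of the MGF of $\sqrt{\beta}(x-\mu_0)$ to $e^{t^2/2}$, followed by Markov-type bounds giving $e^{-c\sqrt\beta}$ decay of the outer-region contribution to $E_{h_i}((x-\mu_0)^2)$ and $E_{h_i}|x-\mu_0|$. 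Hence $r_i,\beta_i$ computed from $h_i$ and from $h_i^r$ agree to the relevant order. Your direct Gaussian-tail argument actually yields the sharper rate $e^{-c\beta}$; the paper's MGF route is coarser but handles the $\delta r$ offset uniformly with less bookkeeping.

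One genuine wrinkle in your plan: under the weak hypotheses $\phi_i$ is only assumed differentiable on $\mathcal{I}$, so $E_{h_i}(\phi_i'(x))$, $E_{h_i}(\phi_i''(x))$, and the Stein identity \eqref{eq: the Stein relationship} on the full hybrid are not even defined. You therefore cannot literally ``run the five steps'' on $h_i$; you would have to run them on $h_i^r$ (picking up boundary terms at $\partial\mathcal{I}$, which are exponentially small) and then transfer the conclusion to $h_i$ via the moment comparison---at which point you have essentially reproduced the paper's black-box reduction. Separately, your worry about Brascamp--Lieb on a truncation is unfounded: the inequality holds for log-concave densities on any convex domain, so it applies directly to $h_i^r$ on the interval $\mathcal{I}$ without the comparison-density workaround you propose.
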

\begin{proof}
The key idea behind this proof is the following: we will show that
under our assumptions the sequence of probability densities $h_{i}\left(x|\beta\right)$
converges towards their restrictions to $\mathcal{I}$. This convergence
is strong enough to imply convergence of the parameters of the Gaussian
approximation. Since the restriction of the site to $\mathcal{I}$:
$l_{i}\left(x\right)1\left(x\in\mathcal{I}\right)$ fulfills the hypotheses
of the weaker theorem, we get the claimed result.

\paragraph*{Convergence of $h_{i}\left(x|\beta\right)$}

As a first step, let's slightly rewrite the ``cavity'' Gaussian
by centering it around its limit mean $\mu_{0}$:
\begin{eqnarray*}
q_{-i} & \propto & \exp\left(-\frac{\beta}{2}x^{2}+(\beta\mu_{0}-\delta r)x\right)\\
 & \propto & \exp\left(-\frac{\beta}{2}\left(x-\mu_{0}\right)^{2}-\delta r\left(x-\mu_{0}\right)\right)
\end{eqnarray*}

Now consider the sequence of probability density functions: $\newline h_{i}\left(x|\beta\right)=\frac{l_{i}\left(x\right)}{l_{i}\left(\mu_{0}\right)}\sqrt{\frac{\beta}{2\pi}}\exp\left(-\frac{\beta}{2}\left(x-\mu_{0}\right)^{2}-\delta r\left(x-\mu_{0}\right)\right)$.
Note that these are unproperly normalized. However, in the limit,
the sequence of integrals $\int h_{i}\left(x|\beta\right)\rightarrow1$
as we will now show.

Note $M_{\beta}\left(t\right)=\int\exp\left(t\sqrt{\beta}\left(x-\mu_{0}\right)\right)h_{i}\left(x|\beta\right)dx$:
this is the moment generating function of the random variable $y_{\beta}=\sqrt{\beta}\left(x_{\beta}-\mu_{0}\right)$
(though, once again, note that this is only an asymptotically properly
normalized density as our proof that $M_{\beta}\left(0\right)\rightarrow1$
will show). For any $\beta>0$, $M_{\beta}\left(t\right)$ is finite
(the quadratic terms in the log dominate and $l_{i}\left(x\right)\leq1$).

Let's prove that $M_{\beta}\left(t\right)$ converges pointwise for
any value of $t$ to $\exp\left(-\frac{t^{2}}{2}\right)$. This will
prove that $y_{\beta}$ converges to a Gaussian of variance $1$,
and that its density is asymptotically properly normalized.

Fix $t$. Let $\epsilon>0$. Since the derivatives are bounded on
$\mathcal{I}$, there must exist a ball centered on $\mu_{0}$ : $B\left(\mu_{0},r_{\epsilon}\right)\subset\mathcal{I}$,
where $\left|l_{i}\left(x\right)-l\left(\mu_{0}\right)\right|\leq\epsilon$.
Let's decompose the integral in $M_{\beta}$ into the integral over
$B$ (where that simple bound holds) and the one over $\mathbb{R}/B$.

For the central region:
\begin{eqnarray*}
\mbox{cent}\left(\beta\right) & = & \int_{B}\exp\left(\sqrt{\beta}t\left(x-\mu_{0}\right)\right)h_{i}\left(x|\beta\right)\\
 & \geq & \frac{l_{i}\left(\mu_{0}\right)-\epsilon}{l_{i}\left(\mu_{0}\right)}\sqrt{\frac{\beta}{2\pi}}\int_{B}\exp\left(\sqrt{\beta}t\left(x-\mu_{0}\right)-\frac{\beta}{2}\left(x-\mu_{0}\right)^{2}-\delta r\left(x-\mu_{0}\right)\right)\\
 & \geq & \frac{l_{i}\left(\mu_{0}\right)-\epsilon}{l_{i}\left(\mu_{0}\right)}\sqrt{\frac{\beta}{2\pi}}\int_{B}\exp\left(\frac{\beta}{2}\left(x-\mu_{0}-\frac{t}{\sqrt{\beta}}\right)^{2}+\frac{\left(t-\delta r/\sqrt{\beta}\right)^{2}}{2}\right)
\end{eqnarray*}
and a similar upper-bound.

As $\beta\rightarrow\infty$, all of the mass of $\sqrt{\beta}\exp\left(\frac{\beta}{2}\left(x-\mu_{0}-\frac{t}{\sqrt{\beta}}\right)^{2}\right)$
becomes concentrated inside $B$. Thus, the bounds converge to $\frac{l_{i}\left(\mu_{0}\right)\pm\epsilon}{l_{i}\left(\mu_{0}\right)}\exp\left(\frac{t^{2}}{2}\right)$
as $\beta\rightarrow\infty$.

For the exterior region:
\begin{eqnarray*}
\mbox{ext}\left(\beta\right) & = & \int_{x\notin B}\exp\left(\sqrt{\beta}t\left(x-\mu_{0}\right)\right)h_{i}\left(x|\beta\right)\\
 & \leq & \frac{\sqrt{\beta}}{l_{i}\left(\mu_{0}\right)}\int_{x\notin B}\exp\left(\sqrt{\beta}t\left(x-\mu_{0}\right)-\frac{\beta}{2}\left(x-\mu_{0}\right)^{2}-\delta r\left(x-\mu_{0}\right)\right)\\
 & \leq & \frac{\sqrt{\beta}}{l_{i}\left(\mu_{0}\right)}\int_{x\notin B}\exp\left(\frac{\beta}{2}\left(x-\mu_{0}-\frac{t}{\sqrt{\beta}}\right)^{2}+\frac{\left(t-\delta r/\sqrt{\beta}\right)^{2}}{2}\right)
\end{eqnarray*}
which converges to $0$ as $\beta\rightarrow\infty$ (exponentially
fast), since all of the probability mass of $\sqrt{\beta}\exp\left(\frac{\beta}{2}\left(x-\mu_{0}-\frac{t}{\sqrt{\beta}}\right)^{2}\right)$
is concentrated inside $B$.

With those convergences, we could thus find $\beta_{0}$ such that
$\forall\beta\geq\beta_{0}$, 
\begin{equation}
\left|M_{\beta}\left(t\right)-\exp\left(-\frac{t^{2}}{2}\right)\right|\leq3\epsilon\exp\left(\frac{t^{2}}{2}\right)\left(l_{i}\left(\mu_{0}\right)\right)^{-1}
\end{equation}
This proves that $M_{\beta}\left(t\right)$ converges pointwise to
$\exp\left(\frac{t^{2}}{2}\right)$.

$M_{\beta}\left(t\right)$ is the moment generating function of an
unnormalized probability distribution. The moment generating function
for the normalized variable is simply found by taking the ratio against
$M_{\beta}\left(0\right)$. Since $M_{\beta}\left(0\right)$, the
MGF of the normalized density also converges to $\exp\left(\frac{t^{2}}{2}\right)$:
\[
\mbox{lim}_{\beta\rightarrow\infty}\frac{M_{\beta}\left(t\right)}{M_{\beta}\left(0\right)}\rightarrow\exp\left(\frac{t^{2}}{2}\right)
\]

Note that this means that the sequence of random variables $y_{\beta}$
converges in MGF to a Gaussian centered at $0$ and with variance
$1$. Convergence in MGF implies weak-convergence and convergence
of all moments, and implies the convergence of $x_{\beta}$ to a Gaussian
centered at $\mu_{0}$ and with variance $\beta^{-1}$. However, this
is a secondary point: we will have to work directly on $M_{\beta}\left(t\right)$
in order to derive our next results.

\paragraph*{Convergence of $h_{i}\left(x|\beta\right)$ to its restriction to
$\mathcal{I}$}

We will use this MGF convergence to prove that $h_{i}\left(x|\beta\right)$
converges to its restriction to $\mathcal{I}$: $h_{i}^{r}\left(x|\beta\right)=h_{i}\left(x|\beta\right)1\left(x\in\mathcal{I}\right)$,
and furthermore that the error we make in this approximation are negligible
compared to the limit behavior of interest in $\mu_{h_{i}}$ and in
$\var_{h_{i}}$.

Let's first prove the convergence of the variances. $h_{i}^{r}$ fulfills
the hypotheses of the less-general theorem \ref{thm:Limit-behavior-of the site approximation-1},
so that:
\[
\var_{h_{i}^{r}}\left(x\right)\approx\beta^{-1}+\beta^{-2}\phi_{i}^{''}\left(\mu_{0}\right)+\O\left(\beta^{-3}\right)
\]
The important feature is that the deviation of interest from the $\beta^{-1}$
limit is of order $\beta^{-2}$.

Let's compute the variance for the unrestricted hybrid $h_{i}$. We
can decompose the variance into the contribution from $\mathcal{I}$
and the contribution from $\mathbb{R\backslash\mathcal{I}}$. The
contribution from the outer region can be upper-bounded by the following
argument:
\begin{itemize}
\item Let $t_{0}$ such that $\exp\left(-\left|t_{0}\left(x-\mu_{0}\right)\right|\right)\left(x-\mu_{0}\right)^{2}$
is strictly decreasing on the outer region $\mathbb{R}\backslash\mathcal{I}$.
\item Let's recall the Markov-bound: it proves that a positive random-variable
with finite mean can not have arbitrarily large deviations. It can
be extended to prove that a positive random-variable with finite $k^{th}$
moment must also have bounded $k_{2}^{th}$ moments for all $k_{2}<k$.\\
Because $\exp\left(-\left|t_{0}\left(x-\mu_{0}\right)\right|\right)\left(x-\mu_{0}\right)^{2}$
is decreasing, we can find a Markov-like bound on the contribution
of the outer region of $h_{i}$ to the variance. Note $\rho$ the
radius of $\mathcal{I}$ and consider the following bound on $\left(x-\mu_{0}\right)^{2}1\left(x\notin\mathcal{I}\right)$:
\begin{eqnarray*}
\exp\left(-\left|t_{0}\left(x-\mu_{0}\right)\right|\right)\left(x-\mu_{0}\right)^{2}1\left(x\notin\mathcal{I}\right) & \leq & \exp\left(-\left|t_{0}\rho\right|\right)\rho^{2}\\
\left(x-\mu_{0}\right)^{2}1\left(x\notin\mathcal{I}\right) & \leq & \exp\left(\left|t_{0}\left(x-\mu_{0}\right)\right|\right)\exp\left(-\left|t_{0}\rho\right|\right)\rho^{2}
\end{eqnarray*}
When we take the expected value under $h_{i}$, we obtain the Markov-like
bound:
\begin{eqnarray*}
E_{h_{i}}\left(\left(x-\mu_{0}\right)^{2}1\left(x\notin\mathcal{I}\right)\right) & \leq & E_{h_{i}}\left(\exp\left(\left|t_{0}\left(x-\mu_{0}\right)\right|\right)\right)\exp\left(-\left|t_{0}\rho\right|\right)\rho^{2}\\
 & \leq & \frac{1}{M_{\beta}\left(0\right)}\left(M_{\beta}\left(-\frac{t_{0}}{\sqrt{\beta}}\right)+M_{\beta}\left(\frac{t_{0}}{\sqrt{\beta}}\right)\right)\exp\left(-\left|t_{0}\rho\right|\right)\rho^{2}
\end{eqnarray*}

\item Now consider adapting that last bound to the value of $\beta$ with
$t_{\beta}=\sqrt{\beta}t_{0}$ (note that for $t\geq t_{0}$, the
corresponding function is still decreasing outside of $\mathcal{I}$).
We get:
\[
E_{h_{i}}\left(\left(x-\mu_{0}\right)^{2}1\left(x\notin\mathcal{I}\right)\right)\leq\frac{1}{M_{\beta}\left(0\right)}\left(M_{\beta}\left(-t_{0}\right)+M_{\beta}\left(t_{0}\right)\right)\exp\left(-\left|\sqrt{\beta}t_{0}\rho\right|\right)\rho^{2}
\]
and since the $M_{\beta}\left(\pm t_{0}\right)$ converge to $\exp\left(\frac{t_{0}^{2}}{2}\right)$,
we have that, for sufficiently high $\beta$:
\[
E_{h_{i}}\left(\left(x-\mu_{0}\right)^{2}1\left(x\notin\mathcal{I}\right)\right)\leq\left(2\exp\left(\frac{t_{0}^{2}}{2}\right)+\epsilon\right)\exp\left(-\left|\sqrt{\beta}t_{0}\rho\right|\right)\rho^{2}
\]

\end{itemize}
This proves that the contribution of the outer region to the variance
decreases exponentially in $\sqrt{\beta}$ which is much faster than
the error in the central region which is in $\beta^{-2}$.

The error in the central region is thus found to dominate the other
error terms so that $\var_{h_{i}}^{-1}-\beta\rightarrow\var_{h_{i}^{r}}^{-1}-\beta$.

By a similar argument, we can prove that the deviation of the mean
from $\mu_{0}$ is also dominated by the error in the central region,
and that the error contributed by the outer region decays exponentially.
This is done by finding $t_{0}$ such that $\linebreak\exp\left(-\left|t_{0}\left(x-\mu_{0}\right)\right|\right)\left|x-\mu_{0}\right|$.
This proves that: $\var_{h_{i}}^{-1}E_{h_{i}}\left(x\right)-\beta\mu_{0}\rightarrow\var_{h_{i}^{r}}^{-1}E_{h_{i}^{r}}\left(x\right)-\beta\mu_{0}$
exponentially fast, while the size of the central term is of order
1.

We thus have that the Gaussian approximations of the $h_{i}\left(x|\beta\right)$
converges towards the Gaussian approximations of $h_{i}^{r}\left(x|\beta\right)$.
Since we can apply the weaker theorem to $h_{i}^{r}\left(x|\beta\right)$,
we get the claimed result. 
\end{proof}

\subsection{Detailled proof of the stable region theorem}

In this section, we detail the proof of the main text theorem on stable
regions of aEP and EP.
\begin{thm}
Convergence of fixed points of EP and aEP\label{thm:Convergence-of-fixed-points of EP and aEP-1}

There exists an EP and an aEP fixed point close to the CGA of $p\left(x\right)$
at $x^{\star}$ if $\phi_{i}^{''}\left(x^{\star}\right)$ is sufficiently
large. More precisely, if:
\begin{eqnarray*}
\delta_{aEP} & = & \max\left(K_{3},K_{4}\right)\sum\left|\phi_{i}^{'}\left(x^{\star}\right)\right|\left[\psi^{''}\left(x^{\star}\right)\right]^{-1}\\
\delta & = & n\max\left(K_{3},K_{4}\right)\left[\psi^{''}\left(x^{\star}\right)\right]^{-1}
\end{eqnarray*}
 are order 1 quantities and $\psi^{''}\left(x^{\star}\right)$ is
large, then the limit of the stable regions on the global approximation,
$n\Delta_{r}$ and $n\Delta_{\beta}$, scale as $\mathcal{O}\left(\delta_{aEP}+\delta\right)$
for aEP and as $\mathcal{O\left(\delta\right)}$ for EP.\end{thm}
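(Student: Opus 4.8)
The plan is to exhibit, for a suitable choice of $\Delta_r$ and $\Delta_\beta$, a compact convex neighbourhood of the CGA at $x^{\star}$ that the aEP update maps into itself, and then to invoke Brouwer's fixed-point theorem; the EP case will be handled the same way, working on the site parameters rather than the global ones.

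\emph{Setting up the box.} I would take $\mathcal{R}$ to be the set of global parameters $(r_{aEP},\beta_{aEP})$ with $|r_{aEP}-\beta_{aEP}x^{\star}|\le n\Delta_r$ and $|\beta_{aEP}-\psi''(x^{\star})|\le n\Delta_\beta$. This set is compact and convex, and for any member of it the current mean $\mu_0=r_{aEP}/\beta_{aEP}$ satisfies $|\mu_0-x^{\star}|=\mathcal{O}(n\Delta_r[\psi''(x^{\star})]^{-1})$, once $\psi''(x^{\star})$ dominates $n\Delta_\beta$. The aEP cavity precision $\tfrac{n-1}{n}\beta_{aEP}$ is then of the same order as $\psi''(x^{\star})$, so it is large precisely under the hypothesis of the theorem, which is what licenses the use of the limit results.

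\emph{One update step.} The core computation feeds an approximation in $\mathcal{R}$ into Theorem \ref{thm:Limit-behavior-of aEP and EP} with cavity mean $\mu_0$, giving
\[
\beta_{aEP}^{new}=\psi''(\mu_0)+\mathcal{O}\big(\max(K_3,K_4)[\,n+\sum_i|\phi_i'(\mu_0)|\,][\psi''(x^{\star})]^{-1}\big),
\]
\[
r_{aEP}^{new}-\beta_{aEP}^{new}\mu_0=-\psi'(\mu_0)+\mathcal{O}\big(n\max(K_3,K_4)[\psi''(x^{\star})]^{-1}\big).
\]
I would then re-centre at $x^{\star}$ via $r_{aEP}^{new}-\beta_{aEP}^{new}x^{\star}=(r_{aEP}^{new}-\beta_{aEP}^{new}\mu_0)+\beta_{aEP}^{new}(\mu_0-x^{\star})$ and Taylor-expand $\psi'$ around $x^{\star}$ using $\psi'(x^{\star})=0$, so that $\psi'(\mu_0)=\psi''(x^{\star})(\mu_0-x^{\star})+\mathcal{O}(nK_3(\mu_0-x^{\star})^2)$. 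After substitution the surviving first-order piece is $(\beta_{aEP}^{new}-\psi''(x^{\star}))(\mu_0-x^{\star})$, a product of two already-small quantities and hence of lower order, and $\psi''(\mu_0)-\psi''(x^{\star})=\mathcal{O}(nK_3(\mu_0-x^{\star}))$ is likewise negligible. Collecting terms with $\beta$ of order $\psi''(x^{\star})$ gives $|r_{aEP}^{new}-\beta_{aEP}^{new}x^{\star}|=\mathcal{O}(\delta+\delta_{aEP})$ and $|\beta_{aEP}^{new}-\psi''(x^{\star})|=\mathcal{O}(\delta+\delta_{aEP})$, with $\delta,\delta_{aEP}$ as in the statement.

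\emph{Closing the argument, and EP.} Choosing $n\Delta_r$ and $n\Delta_\beta$ to be a sufficiently large constant multiple of $\delta+\delta_{aEP}$ makes these bounds strictly smaller than $n\Delta_r$ and $n\Delta_\beta$, so the continuous aEP update sends $\mathcal{R}$ into itself and Brouwer yields a fixed point inside $\mathcal{R}$, i.e.\ the stated scaling. For EP I would use the box $\{\,|r_i+\phi_i'(x^{\star})-\beta_ix^{\star}|\le\Delta_r,\ |\beta_i-\phi_i''(x^{\star})|\le\Delta_\beta\,\}$; summing over $i$ shows the global parameters still lie in a region shaped like $\mathcal{R}$, so each cavity precision $\beta_{-i}=\beta_{aEP}-\beta_i$ is again of order $\psi''(x^{\star})$, and applying Theorem \ref{thm:Limit-behavior-of aEP and EP} site by site reproduces the above bound but with the $\sum_i|\phi_i'(x^{\star})|$ term absent: the EP offset $\delta r=r_i-\beta_i\mu_0$ equals $-\phi_i'(x^{\star})$ up to $\mathcal{O}(\Delta_r)+\mathcal{O}(K_3|\mu_0-x^{\star}|^2)$, cancelling the site gradients to leading order. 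Hence $n\Delta_r,n\Delta_\beta=\mathcal{O}(\delta)$ for EP, and Brouwer again produces the fixed point.

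\emph{Main obstacle.} The delicate part is the re-centring step: I must verify that displacing the parameters away from the exact CGA only feeds back into the update at higher order, which hinges on $\psi''(x^{\star})$ being large (so that the Newton step $[\psi'']^{-1}\psi'$ is contractive) and on bounding $\mu_0-x^{\star}$ and $\beta_{aEP}-\psi''(x^{\star})$ tightly enough that their products with the residual errors stay below the box size --- effectively showing the update is an approximate contraction on $\mathcal{R}$, not merely bounded there. Checking that the constant multiple of $\delta+\delta_{aEP}$ can be fixed uniformly, and that the $p$-dimensional tensor bookkeeping of the appendix does not alter these orders, is the remaining lengthy-but-routine work.
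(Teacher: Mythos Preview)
Your proposal is correct and follows the same overall strategy as the paper: exhibit a compact box around the CGA that the update maps into itself, then apply Brouwer. The organisation of the core computation differs slightly. You apply Theorem~\ref{thm:Limit-behavior-of aEP and EP} at the \emph{current} mean $\mu_0$ and then Taylor-expand $\psi'$, $\psi''$ to re-centre the bounds at $x^{\star}$. The paper instead applies Theorem~\ref{thm:Limit-behavior-of the site approximation-1} directly with $\mu_0=x^{\star}$, encoding the displacement of the current approximation from the CGA entirely through the offset $\delta r$; since $\psi'(x^{\star})=0$ automatically, no re-centring is needed and the ``asymptotic silence'' of the order-1 perturbations in $r$ and $\beta$ is immediate. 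Your route costs an extra Taylor step and the check that the cross-term $(\beta_{aEP}^{new}-\psi''(x^{\star}))(\mu_0-x^{\star})$ is of lower order, but it is perfectly valid and perhaps more transparent about where the Newton-like contraction enters. Your treatment of the EP case---observing that the site offset $\delta r=r_i-\beta_i\mu_0$ cancels $-\phi_i'(x^{\star})$ to leading order, killing the $\delta_{aEP}$ term---is exactly the mechanism behind the sharper $\mathcal{O}(\delta)$ bound, and is spelt out more explicitly than in the paper, which simply states that ``the result for EP follows from the exact same reasoning.''
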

\begin{proof}
Let's start with aEP. Let's assume that we start from region of the
parameter space with the following form:

\begin{eqnarray*}
\left|r_{aEP}-\beta_{aEP}x^{\star}\right| & \leq & \gamma\left(\delta+\delta_{aEP}\right)\\
\left|\beta_{aEP}-\psi^{''}\left(x^{\star}\right)\right| & \leq & \gamma\left(\delta+\delta_{aEP}\right)
\end{eqnarray*}

The critical feature here is that the stable region has size of order
1 and that $\psi^{''}\left(x^{\star}\right)$ is large.

We will now apply th. \ref{thm:Limit-behavior-of the site approximation-1}
with $\mu_{0}=x^{\star}$ to bound the next parameter values:
\begin{eqnarray}
r_{aEP}^{new} & = & \beta_{aEP}^{new}x^{\star}+\O\left(n\max\left(K_{3},K_{4}\right)\left[\psi^{''}\left(x^{\star}\right)\right]^{-1}\right)\nonumber \\
\beta_{aEP}^{new} & = & \psi^{''}\left(x^{\star}\right)+\O\left(n\max\left(K_{3},K_{4}\right)\left[\psi^{''}\left(x^{\star}\right)\right]^{-1}\right)\nonumber \\
 &  & +\O\left(\max\left(K_{3,}K_{4}\right)\left[\left|r_{aEP}-\beta_{aEP}x^{\star}\right|+\sum_{i=1}^{n}\left|\phi_{i}^{'}\left(x^{\star}\right)\right|\right]\left[\psi^{''}\left(x^{\star}\right)\right]^{-1}\right)
\end{eqnarray}

In those equations, we have used the fact that $\O\left(\left[\frac{n-1}{n}\psi^{''}\left(x^{\star}\right)-\left|\beta_{aEP}-\psi^{''}\left(x^{\star}\right)\right|\right]^{-1}\right)=\O\left(\left[\psi^{''}\left(x^{\star}\right)\right]^{-1}\right)$.
Note how the order 1 deviation: $\left|\beta_{aEP}-\psi^{''}\left(x^{\star}\right)\right|$
does not affect the limit behavior.

Similarly, the order 1 deviation in $r$: $\left|r_{aEP}-\beta_{aEP}x^{\star}\right|$,
is similarly ``asymptotically silent'': $\left|r_{aEP}-\beta_{aEP}x^{\star}\right|\left[\psi^{''}\left(x^{\star}\right)\right]^{-1}$
is negligible before $\sum_{i=1}^{n}\left|\phi_{i}^{'}\left(x^{\star}\right)\right|\left[\psi^{''}\left(x^{\star}\right)\right]^{-1}$.

The limit behavior thus simplifies into:

\begin{eqnarray*}
r_{aEP} & = & \beta_{aEP}x^{\star}+\O\left(\delta\right)\\
\beta_{aEP} & = & \psi^{''}\left(x^{\star}\right)+\O\left(\delta+\delta_{aEP}\right)
\end{eqnarray*}

When $\psi^{''}\left(x^{\star}\right)$ is sufficiently large, there
exists $c$ such that: $\left|r_{aEP}-\beta_{aEP}x^{\star}\right|\leq c\delta$.
If $\gamma\geq c$, then this proves that the aEP iteration is contractive
for the linear-shift natural parameter.

Similarly, when $\psi^{''}\left(x^{\star}\right)$ is sufficiently
large, $\left|\beta_{aEP}-\psi^{''}\left(x^{\star}\right)\right|\leq c_{2}\left(\delta+\delta_{aEP}\right)$,
and if $\gamma\geq\max\left(c,c_{2}\right)$, the aEP iteration is
contractive in both directions which proves the result.

The result for EP follows from the exact same reasoning.
\end{proof}

\subsection{Exactness of aEP and EP in the large-data limit}

In this final section, we will prove our only non-deterministic result
concerning the behavior of EP and aEP. We will prove that, if we accumulate
sites which are generated according to some process guaranteeing Local
Asymptotic Normality of the log-posterior, then aEP and EP have a
fixed point which converges to the CGA in total-variation. Furthermore,
if the process generating the sites respects some assumptions which
constrain the mass of the posterior outside of a close neighborhood
of the highest-mode, then the posterior $p_{n}\left(x\right)$ and
its CGA $q_{n}\left(x\right)$ converge towards one another in total-variation.
This ensures that aEP and EP both have a fixed point which is asymptotically
exact in total-variation.

We were unable to obtain a Bernstein-von Mises results which offers
exactly the conditions we needed, so we we had to re-derive every
result from scratch. We make no claim for originality for this section
which is extremely similar to many other asymptotic studies of likelihoods
and posterior distributions.

Let's first detail the assumptions that we will need on the random
process generating the sites $l_{i}\left(x\right)$. We will assume
that:
\begin{itemize}
\item the $l_{i}$ are i.i.d
\item Their distribution is such that the expected value function:
\[
x\rightarrow E\left(\phi_{i}\left(x\right)\right)
\]
has a global maximum at $x_{0}$. We will note $I_{0}=E\left(\phi_{i}^{''}\left(x_{0}\right)\right)>0$.
\item their distribution is such that the following quantities are finite:
\begin{eqnarray*}
\var\left(\phi_{i}^{''}\left(x_{0}\right)\right) & < & \infty\\
\var\left(\phi_{i}^{'}\left(x_{0}\right)\right) & < & \infty\\
E\left(\left|\phi_{i}^{'}\left(x_{0}\right)\right|\right) & < & \infty
\end{eqnarray*}
Their size controls how regular the process generating the sites is.
Note that, in the well-specified case, $\var\left(\phi_{i}^{'}\left(x_{0}\right)\right)=E\left(\phi_{i}^{''}\left(x_{0}\right)\right)=I_{0}$
by integration by parts, in which case $I_{0}$ is the Fisher information
provided by the observations. 
\end{itemize}
These conditions will ensure that the posterior is Locally Asymptotically
Normal so that it can be approximated locally by a Gaussian. Furthermore,
in order to have a global approximation, we will require that the
model is such that estimation with it is consistent: with probability
tending to 1, it converges towards $x_{0}$ which is the parameter
that fits the data best. The condition is that: $\forall\epsilon>0$,
the random variables $\int p_{n}\left(x\right)1\left(\left|x-x_{0}\right|\leq\epsilon\right)dx$
converge to 1 in probability as $n\rightarrow\infty$.

Armed with these assumptions, let's now derive several results on
the log-posterior.
\begin{lem}
As $n\rightarrow\infty$, the log-curvature at $x_{0}$ grows linearly.
More precisely, $\forall\epsilon>0$:\label{lem: log-curvature grows linearly}
\[
\sum_{i=1}^{n}\phi_{i}^{''}\left(x_{0}\right)\geq n(I_{0}-\epsilon)
\]
with probability tending to 1. Similarly, the log-gradient at $x_{0}$
is of order $\sqrt{n}$. With probability tending to 1, we have that:
\[
\left|\sum_{i=1}^{n}\phi_{i}^{'}\left(x_{0}\right)\right|\leq\sqrt{n\left[\var\left(\phi_{i}^{'}\left(x_{0}\right)\right)+\epsilon\right]}
\]
\end{lem}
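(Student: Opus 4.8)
The plan is to obtain both halves of the lemma from the finiteness assumptions on the distribution of the $\phi_i$ by elementary second-moment arguments (the weak law of large numbers / Chebyshev's inequality); nothing more delicate is needed, since only an order-of-magnitude statement is claimed.

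For the first claim, the $\phi_i''(x_0)$ are i.i.d.\ with mean $E(\phi_i''(x_0))=I_0$ and, by hypothesis, finite variance. Chebyshev's inequality gives $P\!\left(\left|\frac1n\sum_{i=1}^n\phi_i''(x_0)-I_0\right|>\epsilon\right)\le\var(\phi_i''(x_0))/(n\epsilon^2)\to 0$, so $\frac1n\sum_{i=1}^n\phi_i''(x_0)\to I_0$ in probability and in particular $\sum_{i=1}^n\phi_i''(x_0)\ge n(I_0-\epsilon)$ with probability tending to $1$ (only this one-sided bound is used downstream).

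For the second claim, the key preliminary is $E(\phi_i'(x_0))=0$. This is the first-order stationarity condition at $x_0$: since $x_0$ is an interior extremum of $x\mapsto E(\phi_i(x))$ and, under the stated integrability conditions, differentiation may be exchanged with the expectation, $0=\left.\frac{d}{dx}E(\phi_i(x))\right|_{x_0}=E(\phi_i'(x_0))$. Hence $S_n:=\sum_{i=1}^n\phi_i'(x_0)$ is a sum of i.i.d.\ centered variables with $E(S_n)=0$ and $E(S_n^2)=n\,\var(\phi_i'(x_0))=:nv$, $v<\infty$. Markov's inequality applied to $S_n^2$ gives
\[
P\!\left(|S_n|\ge\sqrt{n(v+\epsilon)}\right)=P\!\left(S_n^2\ge n(v+\epsilon)\right)\le\frac{E(S_n^2)}{n(v+\epsilon)}=\frac{v}{v+\epsilon}
\]
for every $\epsilon>0$; taking $\epsilon$ large enough that $v/(v+\epsilon)$ is below any prescribed level yields the stated bound with that probability, uniformly in $n$, which is precisely the assertion that $\sum_{i=1}^n\phi_i'(x_0)$ is of order $\sqrt n$. (If a distributional statement is wanted, the central limit theorem refines this to $S_n/\sqrt n\Rightarrow\mathcal N(0,v)$.)

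The lemma is essentially routine; the only step deserving care is the justification of $E(\phi_i'(x_0))=0$, which relies on interchanging $\frac{d}{dx}$ with $E(\cdot)$ and on $x_0$ being an interior stationary point — both covered by the assumptions on the site-generating process. These two estimates are exactly what is needed later: a first-order Taylor expansion of $\sum\phi_i'$ about $x_0$, together with the linear growth of $\sum\phi_i''(x_0)$ and the $\sqrt n$ size of $\sum\phi_i'(x_0)$, locates a mode $x_n^\star$ within $\mathcal O(n^{-1/2})$ of $x_0$ and shows the curvature there still grows linearly, so that Theorems~\ref{thm:Limit-behavior-of aEP and EP} and~\ref{thm:aEP-and-EP are exact} apply.
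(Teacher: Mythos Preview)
Your proof takes essentially the same approach as the paper's: both parts follow from Chebyshev's inequality applied to the i.i.d.\ sums, using the finite-variance assumptions and, for the gradient, the fact that $E(\phi_i'(x_0))=0$ because $x_0$ is a stationary point of $x\mapsto E(\phi_i(x))$. You are in fact more careful than the paper on the second claim: a raw second-moment bound gives only $P\bigl(|S_n|\ge\sqrt{n(v+\epsilon)}\bigr)\le v/(v+\epsilon)$, which does not tend to $0$ for fixed $\epsilon$, and you correctly reinterpret the assertion as $S_n=O_P(\sqrt n)$ (choose $\epsilon$ after the desired confidence level), which is all that is actually used downstream.
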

\begin{proof}
This result is simply a large-data limit concentration result.

The mean of the cumulative sum is $nI_{0}$ while its variance is
$n\var\left(\phi_{i}^{''}\left(x_{0}\right)\right)$. By Chebyshev's
concentration theorem, the probability that the cumulative sum deviates
from its mean decays at speed $\sqrt{n}$.

By the exact same argument, we get that the cumulative log-gradient
is small: its mean is 0, by definition of $x_{0}$, and its variance
is $n\var\left(\phi_{i}^{'}\left(x_{0}\right)\right)$. Once again,
by applying Chebyshev's theorem, we get the claimed result.\end{proof}
\begin{lem}
As $n\rightarrow\infty$, there exists with probability tending to
1 a mode $x_{n}^{\star}$ in close proximity to $x_{0}$. More precisely:
\[
x_{0}-x_{n}^{\star}=\O\left(n^{-1/2}\right)
\]

\end{lem}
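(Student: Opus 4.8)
The plan is to show that the log-posterior $\Psi_{n}(x)=\sum_{i=1}^{n}\phi_{i}(x)$ has a strict local minimum within distance $\O(n^{-1/2})$ of $x_{0}$; such a local minimum is exactly a mode of $p_{n}(x)\propto\exp(-\Psi_{n}(x))$, and the mode closest to $x_{0}$ is then automatically that close. Throughout I would work on the event, which by Lemma \ref{lem: log-curvature grows linearly} has probability tending to $1$, on which simultaneously $\Psi_{n}''(x_{0})=\sum_{i}\phi_{i}''(x_{0})\ge n(I_{0}-\epsilon)$ and $|\Psi_{n}'(x_{0})|=|\sum_{i}\phi_{i}'(x_{0})|\le\sqrt{n\left[\var(\phi_{i}'(x_{0}))+\epsilon\right]}$, for a fixed small $\epsilon<I_{0}$.

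The core estimate is a Taylor expansion of $\Psi_{n}'$ around $x_{0}$ on the shrinking interval $\mathcal{J}_{n}=[x_{0}-c/\sqrt{n},\,x_{0}+c/\sqrt{n}]$, with $c$ a constant to be fixed. Using the global bound $|\Psi_{n}^{(3)}|\le nK_{3}$ from assumption \eqref{eq: assumption 2, bound the 3 and 4 derivatives}, for $x\in\mathcal{J}_{n}$ one has $\Psi_{n}''(x)=\Psi_{n}''(x_{0})+\O(nK_{3}\cdot c/\sqrt{n})=n(I_{0}-\epsilon)+\O(\sqrt{n})$, so that once $n$ is large the $\Theta(n)$ curvature term dominates and $\Psi_{n}''(x)\ge\tfrac{n}{2}(I_{0}-\epsilon)>0$ on all of $\mathcal{J}_{n}$; hence $\Psi_{n}$ is strictly convex there and $\Psi_{n}'$ strictly increasing. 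Evaluating at the endpoints via the mean value theorem, $\Psi_{n}'(x_{0}\pm c/\sqrt{n})=\Psi_{n}'(x_{0})\pm\tfrac{c}{\sqrt{n}}\Psi_{n}''(\xi_{\pm})$ with $\Psi_{n}''(\xi_{\pm})\ge\tfrac{n}{2}(I_{0}-\epsilon)$, so $\Psi_{n}'(x_{0}-c/\sqrt{n})\le\sqrt{n}\left(\sqrt{\var(\phi_{i}'(x_{0}))+\epsilon}-\tfrac{c}{2}(I_{0}-\epsilon)\right)$ and symmetrically at the right endpoint. Choosing $c>2\sqrt{\var(\phi_{i}'(x_{0}))+\epsilon}/(I_{0}-\epsilon)$ makes $\Psi_{n}'$ negative at the left endpoint and positive at the right one. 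By the intermediate value theorem (the one-dimensional instance of the Brouwer argument used elsewhere in the paper), $\Psi_{n}'$ has a zero $x_{n}^{\star}\in\mathcal{J}_{n}$; since $\Psi_{n}''(x_{n}^{\star})>0$ it is a strict local minimum of $\Psi_{n}$, i.e.\ a mode of $p_{n}$, and $|x_{n}^{\star}-x_{0}|\le c/\sqrt{n}=\O(n^{-1/2})$.

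The main obstacle — really the only delicate point — is the competition of scales: the leading curvature $\Psi_{n}''(x_{0})$ is $\Theta(n)$, whereas the gradient $\Psi_{n}'(x_{0})$ and the third-derivative remainder over $\mathcal{J}_{n}$ are each only $\O(\sqrt{n})$; it is exactly this gap, which rests on $I_{0}>0$, that makes the argument go through and pins down the $n^{-1/2}$ rate (a neighbourhood of fixed radius would not stay convex). The rest is bookkeeping: $\epsilon$ and $c$ must be chosen depending only on $I_{0}$ and $\var(\phi_{i}'(x_{0}))$ and not on $n$, and the two "with probability tending to $1$" events of Lemma \ref{lem: log-curvature grows linearly} must be intersected, which still leaves an event of probability tending to $1$. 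Finally, replacing $x_{n}^{\star}$ by the mode of $p_{n}$ closest to $x_{0}$ can only shrink the distance, so the stated bound holds for it as well.
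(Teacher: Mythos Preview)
Your proposal is correct and follows essentially the same route as the paper: both use the previous lemma to get $\Psi_{n}''(x_{0})=\Theta(n)$ and $\Psi_{n}'(x_{0})=\O(\sqrt{n})$, combine these with the third-derivative bound $|\Psi_{n}^{(3)}|\le nK_{3}$, and exploit the resulting gap in scales to locate a zero of $\Psi_{n}'$ within $\O(n^{-1/2})$ of $x_{0}$. The only cosmetic difference is that the paper bounds $\Psi_{n}'$ below by a quadratic in $(x-x_{0})$ and checks that its discriminant is positive, whereas you work directly on the interval $[x_{0}-c/\sqrt{n},\,x_{0}+c/\sqrt{n}]$, show $\Psi_{n}''>0$ there, and apply the intermediate value theorem at the endpoints; the underlying content is the same.
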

Throughout the following lemmas, we will always refer to $x_{n}^{\star}$
without explicitly mentionning that $x_{n}^{\star}$ does not always
exist (but has probability tending to 1 of existing).
\begin{proof}
With lemma \ref{lem: log-curvature grows linearly}, we have that
with probability 1, the log-curvature at $x_{0}$ grows linearly and
the log-gradient is of order $\sqrt{n}$. Assume for the following
that the log-gradient at $x_{0}$ is negative. 

We also have that the third log-derivative is bounded:
\[
\left|\sum_{i=1}^{n}\phi_{i}^{\left(3\right)}\left(x\right)\right|\leq nK_{3}
\]

Let's then consider the following Taylor lower-bound to the log-gradient:
\[
\sum_{i=1}^{n}\phi_{i}^{'}\left(x\right)=\sum_{i=1}^{n}\phi_{i}^{'}\left(x_{0}\right)+\left(x-x_{0}\right)\left(\sum_{i=1}^{n}\phi_{i}^{''}\left(x\right)\right)-\frac{\left(x-x_{0}\right)^{2}}{2}nK_{3}
\]

If the log-curvature is sufficiently large, and the log-gradient sufficiently
small, the second degree polynomial in $x$ has two roots. More precisely,
this happens when the discriminant is positive:
\[
Disc=\left(\sum_{i=1}^{n}\phi_{i}^{''}\left(x\right)\right)^{2}-4\left|\sum_{i=1}^{n}\phi_{i}^{'}\left(x_{0}\right)\right|nK_{3}
\]

Given the growth rates of the log-gradient and of the log-curvature,
we have that, with probability 1, this discriminant is positive, because
it grows asymptotically as $\sqrt{n}$. Thus, in the large-data limit,
we are guaranteed to have a root in close vicinity of $x_{0}$. Furthermore,
it is easy to check that the dominating term in $x_{n}^{\star}$ is
the root of the order 1 polynomial $\sum_{i=1}^{n}\phi_{i}^{'}\left(x_{0}\right)+\left(x-x_{0}\right)\left(\sum_{i=1}^{n}\phi_{i}^{''}\left(x\right)\right)=0$:
\[
x_{n}^{\star}-x_{0}=-\left(\sum_{i=1}^{n}\phi_{i}^{''}\left(x\right)\right)^{-1}\left(\sum_{i=1}^{n}\phi_{i}^{'}\left(x_{0}\right)\right)+\O\left(n^{-1}\right)
\]

This leading term is of order $n^{-1/2}$ which concludes this proof%
\footnote{This proof is straightforward to extend to the high-dimensional case.
One simply needs to diagonalize the $\left(\sum_{i=1}^{n}\phi_{i}^{''}\left(x\right)\right)$
matrix. %
}.\end{proof}
\begin{lem}
The log-curvature at $x_{n}^{\star}$ grows linearly with probability
tending to 1. More precisely, $\forall\epsilon>0$:
\[
\sum_{i=1}^{n}\phi_{i}^{''}\left(x_{n}^{\star}\right)\geq n(I_{0}-\epsilon)
\]
with probability tending to 1.

Similarly, the following quantity grows linearly:
\[
\sum_{i=1}^{n}\left|\phi_{i}^{'}\left(x_{n}^{\star}\right)\right|\approx nE\left(\left|\phi_{i}^{'}\left(x_{0}\right)\right|\right)
\]
\end{lem}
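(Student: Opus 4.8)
The plan is to derive both estimates from a Taylor expansion around $x_0$, plugging in two facts already in hand: that $x_n^\star - x_0 = \O(n^{-1/2})$, and that at $x_0$ the cumulative curvature grows linearly while the cumulative gradient is only of order $\sqrt{n}$ (Lemma~\ref{lem: log-curvature grows linearly}). Throughout one works on the high-probability event on which $x_n^\star$ exists and satisfies $|x_n^\star - x_0| \leq C n^{-1/2}$ for some fixed $C$; every statement below is meant with probability tending to $1$.

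For the curvature, I would write for each site the exact first-order expansion $\phi_i''(x_n^\star) = \phi_i''(x_0) + \phi_i^{(3)}(\xi_i)(x_n^\star - x_0)$ with $\xi_i$ between $x_0$ and $x_n^\star$, sum over $i$, and bound the remainder using $|\phi_i^{(3)}| \leq K_3$:
\[
\sum_{i=1}^n \phi_i''(x_n^\star) \;\geq\; \sum_{i=1}^n \phi_i''(x_0) - n K_3 |x_n^\star - x_0| \;\geq\; \sum_{i=1}^n \phi_i''(x_0) - C K_3 \sqrt{n}.
\]
Applying Lemma~\ref{lem: log-curvature grows linearly} with $\epsilon/2$ gives $\sum_i \phi_i''(x_0) \geq n(I_0 - \epsilon/2)$ with probability tending to $1$, and for $n$ large the $\O(\sqrt{n})$ correction is absorbed, leaving $\sum_i \phi_i''(x_n^\star) \geq n(I_0 - \epsilon)$.

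For the gradient I would use the reverse triangle inequality together with one further term of the expansion, $\phi_i'(x_n^\star) - \phi_i'(x_0) = \phi_i''(x_0)(x_n^\star - x_0) + \frac{1}{2}\phi_i^{(3)}(\eta_i)(x_n^\star - x_0)^2$, to get
\[
\Bigl|\sum_{i=1}^n |\phi_i'(x_n^\star)| - \sum_{i=1}^n |\phi_i'(x_0)|\Bigr| \;\leq\; |x_n^\star - x_0|\sum_{i=1}^n |\phi_i''(x_0)| + \frac{n K_3}{2}(x_n^\star - x_0)^2.
\]
Since $\var(\phi_i''(x_0)) < \infty$ and $E(\phi_i''(x_0)) = I_0 < \infty$, we have $E|\phi_i''(x_0)| < \infty$, so the law of large numbers gives $\sum_i |\phi_i''(x_0)| = \O(n)$; combined with $|x_n^\star - x_0| = \O(n^{-1/2})$ the right-hand side is $\O(\sqrt{n}) + \O(1) = o(n)$. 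As $E|\phi_i'(x_0)| < \infty$ is a standing assumption, the law of large numbers also gives $\sum_i |\phi_i'(x_0)| \approx n E|\phi_i'(x_0)|$, and adding the $o(n)$ deviation yields $\sum_i |\phi_i'(x_n^\star)| \approx n E|\phi_i'(x_0)|$.

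I expect no genuine obstacle here: the whole thing is a Taylor remainder controlled by the uniform bound $K_3$ and the previously established $\sqrt{n}$-rate for $x_n^\star$. The only care needed is bookkeeping — restricting to the event where $x_n^\star$ exists and is close to $x_0$, and observing that although the remainder points $\xi_i,\eta_i$ depend on $i$ and on the random $x_n^\star$, the bound $K_3$ is uniform in both, so the sums are controlled — together with checking that $E|\phi_i''(x_0)|$ and $E|\phi_i'(x_0)|$ are finite so the laws of large numbers apply.
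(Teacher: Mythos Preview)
Your proposal is correct and follows essentially the same route as the paper: Taylor-expand each $\phi_i''$ and $\phi_i'$ around $x_0$, bound the remainder uniformly by $K_3$, and use the previously established $|x_n^\star - x_0| = \O(n^{-1/2})$ together with Lemma~\ref{lem: log-curvature grows linearly}. Your treatment of the gradient part is in fact slightly more careful than the paper's --- you carry the second-order remainder explicitly and justify the law-of-large-numbers step for $\sum_i |\phi_i''(x_0)|$ via finiteness of $E|\phi_i''(x_0)|$, whereas the paper writes the first-order bound $\bigl(\sum_i |\phi_i''(x_0)|\bigr)|x_n^\star - x_0|$ directly and leaves the LLN implicit.
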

\begin{proof}
We compute the log-curvature at $x_{n}^{\star}$ from the log-curvature
at $x_{0}$ with a Taylor expansion:
\[
\left|\sum_{i=1}^{n}\phi_{i}^{''}\left(x_{n}^{\star}\right)-\phi_{i}^{''}\left(x_{0}\right)\right|\leq nK_{3}\left|x_{n}^{\star}-x_{0}\right|
\]

Since $\left|x_{n}^{\star}-x_{0}\right|$ is of order $n^{-1/2}$,
we get that the error between the two is of order $\sqrt{n}$ and
that the log-curvature at $x_{n}^{\star}$ grows linearly.

Similarly:
\[
\left|\sum_{i=1}^{n}\left|\phi_{i}^{'}\left(x_{n}^{\star}\right)\right|-\sum_{i=1}^{n}\left|\phi_{i}^{'}\left(x_{0}\right)\right|\right|\leq\left(\sum_{i=1}^{n}\left|\phi_{i}^{''}\left(x_{0}\right)\right|\right)\left|x_{n}^{\star}-x_{0}\right|
\]

and we have that this quantity also grows linearly.
\end{proof}
With this result, we reach a turning point in our proof. So far, we
have proved that $x_{n}^{\star}$ exits and that various quantities
measured at $x_{n}^{\star}$ grow linearly (with probability tending
to 1). This proves that the log-posterior is Locally Asymptotically
Normal (LAN) around $x_{n}^{\star}$ (with probability tending to
1). 

We will now show that the LAN behavior of the posterior ensures that
aEP and EP have a stable region in a small neighborhood around $x_{n}^{\star}$
and that all global approximations of $p_{n}\left(x\right)$ inside
that stable region converge towards $q_{n}\left(x\right)$: the CGA
of $p_{n}\left(x\right)$ at $x_{n}^{\star}$.
\begin{lem}
We can apply main text th. 3 to $x_{n}^{\star}$ with probability
tending to 1.\label{lem: there exists a stable region - LEMMA}

Thus, there exists a stable region of order 0:
\begin{eqnarray*}
n\Delta_{n,r} & = & \O\left(n^{0}\right)\\
n\Delta_{n,\beta} & = & \O\left(n^{0}\right)
\end{eqnarray*}
around the CGA at $x_{n}^{\star}$.\end{lem}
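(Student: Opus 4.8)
The plan is to verify the hypotheses of the stable-region theorem (Theorem \ref{thm:Convergence-of-fixed-points of EP and aEP-1}) at the point $x_n^\star$ and then invoke it directly. That theorem asks that the log-curvature $\psi''(x_n^\star)=\sum_{i=1}^{n}\phi_i''(x_n^\star)$ be large, and that the two quantities
\[
\delta_{aEP}=\max(K_3,K_4)\sum_{i=1}^{n}|\phi_i'(x_n^\star)|\,[\psi''(x_n^\star)]^{-1},\qquad \delta=n\max(K_3,K_4)\,[\psi''(x_n^\star)]^{-1}
\]
be of order $1$. All three requirements are immediate consequences of the growth rates established in the preceding lemmas, so the only work here is to assemble them and keep track of the probabilities.

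First I would use the lemma just proved, which gives with probability tending to $1$ that $\sum_{i=1}^{n}\phi_i''(x_n^\star)\geq n(I_0-\epsilon)$ with $I_0>0$; hence $\psi''(x_n^\star)\to\infty$, so it is ``sufficiently large'' for all $n$ beyond some deterministic threshold. On that same event $n[\psi''(x_n^\star)]^{-1}\leq(I_0-\epsilon)^{-1}$, and since $K_3,K_4$ are the fixed constants of our standing assumption \eqref{eq: assumption 2, bound the 3 and 4 derivatives}, this gives $\delta\leq\max(K_3,K_4)(I_0-\epsilon)^{-1}=\mathcal{O}(1)$. For $\delta_{aEP}$, the preceding lemma also gives $\sum_{i=1}^{n}|\phi_i'(x_n^\star)|\approx nE(|\phi_i'(x_0)|)$, so that $\sum_{i=1}^{n}|\phi_i'(x_n^\star)|\,[\psi''(x_n^\star)]^{-1}$ is bounded, with probability tending to $1$, by a constant of order $E(|\phi_i'(x_0)|)/(I_0-\epsilon)$, whence $\delta_{aEP}=\mathcal{O}(1)$ as well. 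Intersecting the finitely many probability-tending-to-$1$ events invoked here and in the earlier lemmas --- including the event that $x_n^\star$ exists --- yields a single event, still of probability tending to $1$, on which all of these bounds hold simultaneously, with constants depending only on the deterministic quantities $I_0$, $K_3$, $K_4$ and $E(|\phi_i'(x_0)|)$.

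On that event, Theorem \ref{thm:Convergence-of-fixed-points of EP and aEP-1} applies verbatim and produces stable regions around the CGA at $x_n^\star$ with $n\Delta_{n,r}$ and $n\Delta_{n,\beta}$ of size $\mathcal{O}(\delta_{aEP}+\delta)=\mathcal{O}(1)=\mathcal{O}(n^0)$ for aEP, and of size $\mathcal{O}(\delta)=\mathcal{O}(1)$ for EP, which is exactly the claim. There is no analytic difficulty: all the hard estimates already live in the deterministic Theorem \ref{thm:Convergence-of-fixed-points of EP and aEP-1} and in the law-of-large-numbers lemmas above. The one point that merits a careful sentence is the probabilistic bookkeeping --- that the implied $\mathcal{O}(1)$ constants can be chosen uniformly over the good event --- and this is the place where a reader might want the details, though it is routine since those constants are deterministic functions of $I_0$, $K_3$, $K_4$ and $E(|\phi_i'(x_0)|)$.
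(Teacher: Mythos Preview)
Your proposal is correct and follows essentially the same approach as the paper: verify that $\psi''(x_n^\star)$, $n\max(K_3,K_4)$, and $\sum_i|\phi_i'(x_n^\star)|$ all grow linearly in $n$ (via the preceding lemmas), so that $\delta$ and $\delta_{aEP}$ are $\mathcal{O}(1)$, and then invoke Theorem~\ref{thm:Convergence-of-fixed-points of EP and aEP-1}. The paper's proof is a terse three-line version of exactly this; your additional care about intersecting the probability-tending-to-$1$ events and tracking the uniformity of the $\mathcal{O}(1)$ constants is a welcome elaboration, not a different route.
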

\begin{proof}
The log-curvature at the mode $x_{n}^{\star}$ grows linearly in $n$
and so does $n\max\left(K_{3},K_{4}\right)$. Similarly, for aEP,
$\sum_{i=1}^{n}\left|\phi_{i}^{'}\left(x_{n}^{\star}\right)\right|$
grows linearly. Application of the theorem then gives that there exists
a stable region in the large-data limit.\end{proof}
\begin{lem}
\label{lem: Convergence of fixed-points of aEP and EP to the CGA}All
Gaussian approximations of $p_{n}\left(x\right)$ inside the stable
region of lemma \ref{lem: there exists a stable region - LEMMA} converge
in KL divergence to $q_{n}\left(x\right)$: the CGA centered at $x_{n}^{\star}$.
This convergence in KL divergence implies a convergence in total-variation. 

More precisely:$\forall r,\beta$ such that:
\begin{eqnarray*}
\left|r-\beta x_{n}^{\star}\right| & \leq & n\Delta_{n,r}\\
\left|\beta-\sum_{i=1}^{n}\phi_{i}^{''}\left(x_{n}^{\star}\right)\right| & \leq & n\Delta_{n,\beta}
\end{eqnarray*}

then:
\begin{eqnarray*}
KL\left(\N\left(x|r,\beta\right),q_{n}\left(x\right)\right) & = & \O\left(n^{-1}\right)\\
d_{TV}\left(\N\left(x|r,\beta\right),q_{n}\left(x\right)\right) & = & \O\left(n^{-1/2}\right)
\end{eqnarray*}
\end{lem}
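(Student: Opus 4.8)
The plan is to reduce the statement to the closed-form expression for the Kullback--Leibler divergence between two univariate Gaussians and then control each of its terms using the width of the stable region together with the linear growth of the curvature at $x_{n}^{\star}$ established in the preceding lemmas. First I would fix notation: write $\beta_{n}^{\star}=\sum_{i=1}^{n}\phi_{i}^{''}\left(x_{n}^{\star}\right)=\psi_{n}^{''}\left(x_{n}^{\star}\right)$, so that the CGA is $q_{n}=\N\left(x\middle|x_{n}^{\star},\left(\beta_{n}^{\star}\right)^{-1}\right)$, and put the candidate approximation in mean/variance form as $\N\left(x\middle|\mu,\beta^{-1}\right)$ with $\mu=r/\beta$. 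By the previous lemmas the curvature grows linearly, so $\beta_{n}^{\star}$ is of order $n$ (bounded above and below by positive multiples of $n$ with probability tending to $1$); in particular $\beta$ is positive and also of order $n$ for $n$ large. The stable-region inequalities give $\left|\beta-\beta_{n}^{\star}\right|\leq n\Delta_{n,\beta}=\O\left(1\right)$ and $\left|r-\beta x_{n}^{\star}\right|\leq n\Delta_{n,r}=\O\left(1\right)$, and dividing the latter by $\beta$ yields the two estimates I will need:
\begin{eqnarray*}
\mu-x_{n}^{\star} & = & \frac{r-\beta x_{n}^{\star}}{\beta}=\O\left(n^{-1}\right)\\
\frac{\beta_{n}^{\star}}{\beta}-1 & = & \frac{\beta_{n}^{\star}-\beta}{\beta}=\O\left(n^{-1}\right)
\end{eqnarray*}

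Next I would substitute into the Gaussian KL identity
\[
KL\left(\N\left(x\middle|\mu,\beta^{-1}\right),q_{n}\right)=\frac{1}{2}\left(\frac{\beta_{n}^{\star}}{\beta}-1-\log\frac{\beta_{n}^{\star}}{\beta}+\beta_{n}^{\star}\left(\mu-x_{n}^{\star}\right)^{2}\right).
\]
Setting $t=\beta_{n}^{\star}/\beta-1=\O\left(n^{-1}\right)$, the first three terms collapse to $t-\log\left(1+t\right)=\O\left(t^{2}\right)=\O\left(n^{-2}\right)$, while the mean term is $\beta_{n}^{\star}\left(\mu-x_{n}^{\star}\right)^{2}=\O\left(n\right)\cdot\O\left(n^{-2}\right)=\O\left(n^{-1}\right)$. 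Hence the mean term dominates and $KL=\O\left(n^{-1}\right)$, which is the first claim. Pinsker's inequality $d_{TV}\leq\sqrt{KL/2}$ then gives $d_{TV}\left(\N\left(x\middle|r,\beta\right),q_{n}\right)=\O\left(n^{-1/2}\right)$; the order of the arguments of $KL$ is immaterial, since the reverse divergence admits an analogous $\O\left(n^{-1}\right)$ expansion and total variation is symmetric.

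The computation is essentially routine, so the only point requiring genuine care is the bookkeeping of asymptotic orders: one must notice that the leading contribution is the squared-mean term, whose nominal $\O\left(n^{-2}\right)$ size is inflated back up to $\O\left(n^{-1}\right)$ by the linearly growing precision $\beta_{n}^{\star}$, and one must check that $\beta$ stays bounded away from $0$ (it in fact grows linearly) so that the divisions above are legitimate for $n$ large. The $p$-dimensional version is analogous: one replaces scalars by the corresponding matrices and traces, uses the matrix Gaussian KL formula, and invokes the high-dimensional stable-region theorem; the extra factor $p$ is harmless because $p$ is fixed.
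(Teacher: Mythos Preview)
Your proposal is correct and follows essentially the same route as the paper: both arguments plug the stable-region bounds and the linear growth of $\beta_{n}^{\star}$ into the closed-form Gaussian KL identity, observe that the precision term is $\O\left(n^{-2}\right)$ while the mean term is $\O\left(n^{-1}\right)$, and then invoke Pinsker's inequality for the total-variation bound. Your handling of the precision contribution via $t-\log(1+t)=\O(t^{2})$ is in fact a slightly cleaner version of the paper's quadratic approximation of the same term.
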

\begin{proof}
The formula for the KL divergence between two Gaussian distributions
(parameterized with mean and variance) $q_{1}=\N\left(\mu_{1},\beta_{1}^{-1}\right)$
and $q_{2}=\N\left(\mu_{2},\beta_{2}^{-1}\right)$ is the following:
\begin{eqnarray*}
2KL\left(q_{1},q_{2}\right) & = & \beta_{2}\left(\mu_{1}-\mu_{2}\right)^{2}+\frac{\left(\beta_{2}-\beta_{1}\right)}{\beta_{1}}-\log\left(\beta_{2}/\beta_{1}\right)\\
 & \approx & \beta_{2}\left(\mu_{1}-\mu_{2}\right)^{2}+\frac{\left(\beta_{2}-\beta_{1}\right)^{2}}{2\beta_{1}^{2}}
\end{eqnarray*}
where the approximation is valid when $\mbox{\ensuremath{\frac{\left(\beta_{2}-\beta_{1}\right)}{\beta_{1}}\approx}0}$.%
\footnote{In the high-dimensional case, the KL divergence is:
\[
2KL\left(q_{1},q_{2}\right)=\left(\bmu_{1}-\bmu_{2}\right)\mathbf{Q}_{2}\left(\bmu_{1}-\bmu_{2}\right)+Tr\left(\mathbf{Q}_{2}\mathbf{Q}_{1}^{-1}\right)-d-\log\left(\frac{\left|\mathbf{Q}_{2}\right|}{\left|\mathbf{Q}_{1}\right|}\right)
\]

If $\mathbf{Q}_{1}$ and $\mathbf{Q}_{2}$ are co-diagonal, the complicated
term $Tr\left(\mathbf{Q}_{2}\mathbf{Q}_{1}^{-1}\right)-d-\log\left(\frac{\left|\mathbf{Q}_{2}\right|}{\left|\mathbf{Q}_{1}\right|}\right)$
is, like the 1D case, almost quadratic. Even when they are not co-diagonal:
\[
\log\left(\left|\mathbf{Q}_{2}+\Delta\right|\right)\approx Tr\left(\mathbf{Q}_{1}^{-1}\Delta+\frac{1}{2}\mathbf{Q}_{1}^{-1}\Delta\mathbf{Q}_{1}^{-1}\right)
\]

so that the quadratic approximation is still valid:
\[
2KL\left(q_{1},q_{2}\right)\approx\left(\bmu_{1}-\bmu_{2}\right)\mathbf{Q}_{2}\left(\bmu_{1}-\bmu_{2}\right)+\frac{1}{2}Tr\left(\left[\mathbf{Q}_{2}-\mathbf{Q}_{1}\right]^{2}\mathbf{Q}_{1}^{-2}\right)
\]
}

When we apply this expression to our distributions of interest, we
find the following upper-bound for the KL divergence between any Gaussian
approximation in the stable region and the CGA at $x_{n}^{\star}$
(slightly abusively, we use the quadratic limit approximation of the
$\beta$ dependent term):
\begin{eqnarray*}
2KL & \leq & \left(\sum_{i=1}^{n}\phi_{i}^{''}\left(x_{n}^{\star}\right)\right)\left(\frac{n\Delta_{n,r}}{\sum_{i=1}^{n}\phi_{i}^{''}\left(x_{n}^{\star}\right)-n\Delta_{n,\beta}}\right)^{2}+\frac{1}{2}\left(\frac{n\Delta_{n,\beta}}{\sum_{i=1}^{n}\phi_{i}^{''}\left(x_{n}^{\star}\right)-n\Delta_{n,\beta}}\right)^{2}\\
2KL & \leq & \O\left(n^{-1}+n^{-2}\right)\\
2KL & \leq & \O\left(n^{-1}\right)
\end{eqnarray*}

We finally apply Pinsker's inequality, which relates the KL divergence
to the total-variation metric: $KL\geq2\left(d_{TV}\right)^{2}$ and
we get the claimed result.\end{proof}
\begin{lem}
\label{lem:(Van-de-Vaart)}\citet{KleijnVanDerVaart:BernsteinVonMisesUnderMisspec}
If $p_{n}\left(x\right)$ is Locally Asymptotically Normal and its
mass concentrates near $x_{n}^{\star}$, then:
\[
d_{TV}\left(p_{n}\left(x\right),q_{n}\left(x\right)\right)=O\left(n^{-1/2}\right)
\]
\end{lem}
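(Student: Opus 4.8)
The plan is to prove this Bernstein--von Mises statement directly, via the classical localize-and-rescale argument, feeding in the estimates already established in the preceding lemmas: the linear growth $\sum_i\phi_i''(x_n^\star)\approx nI_0$, the bound $|x_n^\star-x_0|=\mathcal{O}(n^{-1/2})$, the third-derivative bound $K_3$, and the assumed concentration of the mass of $p_n$ inside any fixed $\epsilon$-ball around $x_0$ (hence, since $x_n^\star\to x_0$, around $x_n^\star$). Throughout write $\psi_n=\sum_i\phi_i$, so $p_n(x)\propto\exp(-\psi_n(x))$ and $q_n=\N\!\left(x\,\middle|\,x_n^\star,[\psi_n''(x_n^\star)]^{-1}\right)$.

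First I would split $d_{TV}(p_n,q_n)=\tfrac12\int|p_n(x)-q_n(x)|\,dx$ into a bulk contribution over a shrinking ball $\mathcal{B}_n=\{|x-x_n^\star|\le\rho_n\}$, with $\rho_n$ of order $n^{-1/2}\sqrt{\log n}$, and a tail contribution over $\mathbb{R}\setminus\mathcal{B}_n$. For $q_n$, which has precision $\asymp nI_0$, a Gaussian tail bound makes $\int_{\mathbb{R}\setminus\mathcal{B}_n}q_n$ decay faster than any power of $n$. For $p_n$ I would upgrade the assumed fixed-radius concentration to concentration inside $\mathcal{B}_n$: on a ball of fixed radius $\epsilon<I_0/(2K_3)$ one has $\psi_n''(x)\ge\psi_n''(x_n^\star)-nK_3\epsilon\ge nI_0/2>0$ (using the $K_3$ bound and the linear growth of $\psi_n''(x_n^\star)$), so $\psi_n$ is $\tfrac{nI_0}{2}$-strongly convex there; combining strong convexity on the fixed ball with the hypothesised concentration inside it yields $\int_{\mathbb{R}\setminus\mathcal{B}_n}p_n=\mathcal{O}(n^{-c})$ for some $c>1/2$, provided $\rho_n\gtrsim n^{-1/2}\sqrt{\log n}$. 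So both tails are negligible at the target rate.

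Second, on the bulk I would change variables to $u=\sqrt{n}\,(x-x_n^\star)$ and Taylor-expand $\psi_n$ about its mode: since $\psi_n'(x_n^\star)=0$, $\psi_n(x)=\psi_n(x_n^\star)+\tfrac12\psi_n''(x_n^\star)(x-x_n^\star)^2+R_n(x)$ with $|R_n(x)|\le\tfrac{nK_3}{6}|x-x_n^\star|^3$. In the rescaled coordinate this reads $\tfrac12\hat{I}_n u^2+\tilde{R}_n(u)$, where $\hat{I}_n=\psi_n''(x_n^\star)/n\to I_0$ and $|\tilde{R}_n(u)|\le\tfrac{K_3}{6\sqrt{n}}|u|^3=\mathcal{O}(n^{-1/2}(\log n)^{3/2})=o(1)$ uniformly on $\{|u|\le\rho_n\sqrt{n}\}$. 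Hence the rescaled posterior density is $\tilde{p}_n(u)\propto e^{-\tfrac12\hat{I}_n u^2}\bigl(1+\mathcal{O}(n^{-1/2}|u|^3)\bigr)$ while the rescaled CGA is exactly $\propto e^{-\tfrac12\hat{I}_n u^2}$; the two normalising constants therefore agree up to a factor $1+\mathcal{O}(n^{-1/2})$ (the cubic correction is bounded by $n^{-1/2}$ times Gaussian third absolute moments). Bounding $|\tilde{p}_n(u)-\tilde{q}_n(u)|$ pointwise by a constant times $n^{-1/2}|u|^3 e^{-\tfrac12\hat{I}_n u^2}$ and integrating --- a Scheff\'e-type argument --- gives the bulk contribution $\mathcal{O}(n^{-1/2})$, and adding the negligible tails yields $d_{TV}(p_n,q_n)=\mathcal{O}(n^{-1/2})$. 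The same argument goes through in the $p$-dimensional case after diagonalising $\psi_n''(x_n^\star)$, at the price of a factor $p$ in the remainder.

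The main obstacle is precisely the passage from the \emph{fixed}-radius mass concentration that is assumed to the \emph{shrinking}-radius concentration needed to make the tail of $p_n$ genuinely smaller than $n^{-1/2}$: this is where the strong-convexity-on-a-fixed-ball step, and with it the probabilistic (LAN) estimates of the earlier lemmas, must be invoked carefully, and where the radius $\rho_n$ has to be tuned so that \emph{simultaneously} the tail of $p_n$ is of smaller order than $n^{-1/2}$ and the cubic remainder $\tilde{R}_n(u)\lesssim n^{-1/2}(\rho_n\sqrt{n})^3$ stays $o(1)$ on the bulk. Once that balance is struck, everything that remains reduces to routine Gaussian-moment computations.
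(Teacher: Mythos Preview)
Your proposal is correct and follows essentially the same route as the paper, which in fact only gives a two-sentence sketch citing Kleijn--Van der Vaart: localize via the mass-concentration assumption, then Taylor-expand $\psi_n$ around $x_n^\star$ inside the localized region. Your write-up is a careful expansion of exactly that outline, and the delicate step you flag --- upgrading fixed-radius concentration to shrinking-radius concentration via strong convexity of $\psi_n$ on a fixed ball --- is precisely what the paper's sketch alludes to with ``some care must be taken to ensure the rate of convergence holds.''
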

\begin{proof}
We just sketch the proof of Kleijn et al.

Fix $\epsilon>0$. Under our assumption, with probability 1, the mass
of $p_{n}\left(x\right)$ concentrates inside the ball $\left|x-x_{0}\right|\leq\epsilon$
as $n\rightarrow\infty$.

By a simple Taylor expansion inside the ball around $x_{n}^{\star}$,
we find that the CGA $q_{n}\left(x\right)$ is a good approximation
of $p\left(x\right)$. Some care must be taken to ensure the rate
of convergence holds.
\end{proof}
With these last two lemmas \ref{lem: Convergence of fixed-points of aEP and EP to the CGA}
and \ref{lem:(Van-de-Vaart)}, we conclude our proof of the theorem:
under our conditions, all aEP and EP fixed points near the CGA at
$x_{n}^{\star}$ converge in total-variation (with speed $\O\left(n^{-1/2}\right)$)
to the true posterior $p_{n}\left(x\right)$.

\subsection{Exactness on probit and logit regression}

Let's now show that both for probit and logit regression, EP is exact.

In both cases, the likelihoods $l_{i}\left(\mathbf{x}\right)$ have
the following simple form:
\[
l_{i}\left(\mathbf{x}\right)=a\left(\mathbf{v}_{i}^{t}\mathbf{x}\right)
\]
where $\mathbf{v}_{i}$ is the vector of predictors for the $i^{th}$
datapoint and $a$ is the link function (or activation function) of
the model.

Both for the probit and the logit case, $a$ is a log-concave function
with bounded derivatives of all orders. Thus, as long as the predictor
vectors $\mathbf{v}_{i}$ are guaranteed to be bounded, then the $l_{i}$
are guaranteed to respect our assumptions on the derivatives (eqs.
\ref{eq: assumption 1, bound the range of the log-curvature} and
\ref{eq: assumption 2, bound the 3 and 4 derivatives}). Furthermore,
the log-concavity of $a$ makes it so that we can get rid of the identifiability
assumption, as identifiability is implied by the other assumptions.
\begin{lem}
If the Fisher information matrix if strictly positive $\mathbf{I}_{0}>0$,
then the posterior distribution is identifiable.\end{lem}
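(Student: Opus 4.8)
The target is the concentration condition used in this section: for every $\epsilon>0$, $\int p_n(\mathbf{x})\,\mathbf{1}\!\left(\left\Vert \mathbf{x}-x_0\right\Vert \le\epsilon\right)d\mathbf{x}\to1$ in probability. The plan is to derive this from log-concavity. Since $a$ is log-concave, each likelihood factor $\phi_i(\mathbf{x})=-\log a\!\left(\mathbf{v}_i^{t}\mathbf{x}\right)$ is convex (a convex function of a scalar composed with an affine map), and, together with the Gaussian (hence log-concave) prior factor, the log-posterior $\Psi_n(\mathbf{x})=\sum_{i=1}^{n}\phi_i(\mathbf{x})$ is convex; thus $p_n$ is log-concave with a unique mode $x_n^{\star}$. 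This is precisely the feature that lets us drop an assumed identifiability hypothesis: it converts a local statement about curvature into a global tail bound.

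First I would invoke the results already proved in this section that do \emph{not} themselves rely on identifiability: with probability tending to $1$, the mode $x_n^{\star}$ exists and $\left\Vert x_n^{\star}-x_0\right\Vert =\O\!\left(n^{-1/2}\right)$, and its log-curvature grows linearly, $H\Psi_n\!\left(x_n^{\star}\right)\ge n\left(I_0-\epsilon\right)\mathbf{I}$. This last step is where the hypothesis $\mathbf{I}_0>0$ enters (for probit/logit, $\mathbf{I}_0=E\!\left(\left(-\log a\right)''\!\left(\mathbf{v}_i^{t}x_0\right)\mathbf{v}_i\mathbf{v}_i^{t}\right)$, which is strictly positive exactly when the predictor vectors are spread out enough not to lie on a hyperplane). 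Next, using the third-derivative bound $K_3$ — which holds here because $a$ has bounded derivatives of all orders and the $\mathbf{v}_i$ are bounded — I would extend the curvature bound to a ball of fixed radius $\rho=\left(I_0-\epsilon\right)/(2K_3)$ around $x_n^{\star}$: from $\left\Vert H\Psi_n(\mathbf{x})-H\Psi_n\!\left(x_n^{\star}\right)\right\Vert \le nK_3\left\Vert \mathbf{x}-x_n^{\star}\right\Vert $ we get $H\Psi_n(\mathbf{x})\ge\tfrac12 n\left(I_0-\epsilon\right)\mathbf{I}$ on that ball, and integrating twice from the minimum (where $\nabla\Psi_n$ vanishes) yields $\Psi_n(\mathbf{x})-\Psi_n\!\left(x_n^{\star}\right)\ge\tfrac14 n\left(I_0-\epsilon\right)\left\Vert \mathbf{x}-x_n^{\star}\right\Vert ^{2}$ there.

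Finally I would globalise the bound by convexity: a convex function dominating a paraboloid on the ball of radius $\rho$ dominates the associated supporting hyperplane outside it, so $\Psi_n(\mathbf{x})-\Psi_n\!\left(x_n^{\star}\right)\ge\tfrac14 n\left(I_0-\epsilon\right)\rho\left\Vert \mathbf{x}-x_n^{\star}\right\Vert $ for $\left\Vert \mathbf{x}-x_n^{\star}\right\Vert >\rho$. Dividing through by the mass that the quadratic bound already guarantees near $x_n^{\star}$ (bounded below by a multiple of $n^{-p/2}e^{-\Psi_n(x_n^{\star})}$), this shows $p_n$ puts all but an exponentially small fraction of its mass within $\O\!\left(n^{-1}\log n\right)$ of $x_n^{\star}$, hence within $\epsilon$ of $x_0$ once $n$ is large; that is the concentration condition. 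The main obstacle is organising the ``with probability tending to $1$'' events — existence of $x_n^{\star}$, its $n^{-1/2}$ proximity to $x_0$, and the linear growth of $H\Psi_n(x_n^{\star})$ are each high-probability events coming from the Chebyshev/LLN arguments of this section, so one must intersect them and check that the exceptional probabilities still vanish. Everything else is routine; the only genuinely new ingredient relative to the rest of the section is the use of log-concavity to turn the local curvature at $x_n^{\star}$ into a global linear tail bound, and it is exactly this step that would fail — forcing identifiability to be re-imposed by hand — for a link function that is not log-concave.
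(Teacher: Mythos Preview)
Your argument is correct and follows the same overall skeleton as the paper's: establish log-concavity of $p_n$ from log-concavity of the link $a$, use the third-derivative bound together with the linear growth of $H\Psi_n(x_n^\star)$ to get a quadratic lower bound on $\Psi_n-\Psi_n(x_n^\star)$ in a fixed-radius ball, and then convert this into concentration of mass around $x_n^\star$. The one genuine difference is in the last step. The paper stops at the observation that $\Psi_n(\mathbf{x})-\Psi_n(x_n^\star)$ grows linearly in $n$ and invokes Theorem~III.1 of \citet{Pereyra2016} as a black box --- a concentration result for log-concave distributions stating that most mass lies where the negative log-density is not far above its minimum. You instead carry out an elementary version of that argument by hand: you push the quadratic bound on the ball of radius $\rho$ out to a global linear lower bound via the monotone-slope property of convex functions, and then compare the resulting exponential tail against a Laplace-type lower bound on the central mass. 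This buys you self-containment (no external citation) at the cost of a few more lines; the paper's route is terser but imports a nontrivial result. A minor quibble: your quoted concentration radius $\O(n^{-1}\log n)$ is what the \emph{linear} tail would give for $\|\mathbf{x}-x_n^\star\|>\rho$, but inside $\rho$ the quadratic bound only yields $\O(n^{-1/2}\sqrt{\log n})$; this does not matter for the fixed-$\epsilon$ conclusion you actually need.
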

\begin{proof}
the posterior distribution is a product of log-concave distributions.
It is thus log-concave.

Furthermore, for the posterior, the log-curvature at $x_{n}^{\star}$
is growing linearly almost at speed $\mathbf{I}_{0}$ and the log
third derivative is bounded by $nK_{3}$.

The case with the larger tails that fits this picture is somewhat
like the Huber log-likelihood (quadratic center with linear tails):
it has a central curved region surrounded by linear tails. More precisely,
the log-curvature of the posterior is larger than:
\begin{equation}
\sum_{i=1}^{n}\phi_{i}^{''}\left(\mathbf{x}\right)\geq\max\left(\sum_{i=1}^{n}\phi_{i}^{''}\left(\mathbf{x}_{n}^{\star}\right)-nK_{3}\left\Vert \mathbf{x}-\mathbf{x}_{n}^{\star}\right\Vert ,0\right)
\end{equation}

We can integrate the limit behavior of the log-curvature in this expression
yielding:
\begin{equation}
\sum_{i=1}^{n}\phi_{i}^{''}\left(\mathbf{x}\right)\geq n\max\left(\mathbf{I}_{0}-K_{3}\left\Vert \mathbf{x}-\mathbf{x}_{n}^{\star}\right\Vert ,0\right)
\end{equation}

Critically, the log-curvature grows (approximately) linearly with
$n$. If we simply integrate the above expressoin twice, this remains
true for the negative log-posterior:

\begin{equation}
\sum_{i=1}^{n}\left(\phi_{i}\left(\mathbf{x}\right)-\phi_{i}\left(\mathbf{x}_{n}^{\star}\right)\right)\propto n\label{eq:approximate growth}
\end{equation}

In order to conclude, we will use a brand new result: theorem III.1
of \citet{Pereyra2016}. Pereyra shows that, for a log-concave probability
distribution, most of the mass is concentrated in a region where the
negative log-posterior is not too high above the minimum. Eq. \ref{eq:approximate growth}
shows that the negative log-posterior grows linearly with the number
of data-points $n$. By applying Pereyra's theorem, we have that the
probability concentrates around $\mathbf{x}_{n}^{\star}$ which concludes
our proof.
\end{proof}
A probit or logit regression thus respects all of our hypotheses.
Thus, in the large-data limit, both aEP and EP are exact on a probit
or logit model, as long as the Fisher information matrix is strictly
positive.

\section{aEP: an alternative to EP ?}

In the main text, we used aEP as a theoretical tool to study the asymptotics
of EP, but could it hold practical interest as well? aEP is simpler
than standard EP, and in particular it requires fewer matrix factorizations.
In standard EP computing the covariance matrix of cavity distributions
is a $\mathcal{O}\left(nm^{3}\right)$ or $\mathcal{O}\left(nm^{2}\right)$
operation (where $m$ is the number of parameters), depending on the
problem and the implementation. In aEP the covariance of the cavity
is just $\frac{n}{n-1}\bS$, the current covariance matrix, so that
forming the cavity distribution is a very cheap $\mathcal{O}\left(m^{2}\right)$
operation. Depending on $m$ aEP may be substantially faster. 

Another advantage of aEP is that, due to its much smaller parameter
set ($\mathcal{O}\left(m^{2}\right)$ vs $\mathcal{O}\left(nm^{2}\right)$),
generic numerical tools for fixed point iterations may be used out-of-the-box.
We experimented with R package SQUAREM \citep{Varadhan:SimpleGloballyConvMethods},
a set of algorithms that seek to accelerate fixed point iterations.
Our limited experimentation indicates that although SQUAREM does not
always achieve speed-ups, the fact that step sizes are chosen makes
aEP very robust. 

On the other hand, since aEP is an asymptotic approximation of EP,
we will incur a loss in performance. We ran some simulations to see
how different aEP and EP are in practical examples. We picked two
statistical models, Cauchy regression and probit regression. Probit
regression is a well-know EP success story \citep{KussRasmussen:AssessingApproxInfGP,NickishRasmussen:ApproxGaussianProcClass},
with well-behaved, log-concave sites. Cauchy regression features non-log
concave sites and posterior distributions may be multimodal.

Probit regression is the following model:

\begin{eqnarray*}
y_{i} & = & \mbox{sign}\left(\mathbf{x}_{i}^{t}\bm{\alpha}+\epsilon\right)\\
\epsilon & \sim & \N\left(0,1\right)
\end{eqnarray*}

where $\mathbf{x}_{i}$ is a vector of covariates, while Cauchy regression
is:

\begin{eqnarray*}
y_{i} & = & \mathbf{x}_{i}^{t}\bm{\alpha}+\epsilon\\
\epsilon & \sim & Cauchy\left(0,1\right)
\end{eqnarray*}

in both cases inference is for the regression coefficients $\bm{\alpha}$.
We used the standard factorization of the posterior (over likelihood
sites) with hybrid moments computed numerically. The prior over $\bm{\alpha}$
was set in both cases to $\bm{\alpha}\sim\N\left(0,1\right)$. The
regressors $\mathbf{x}_{i}\in\mathbb{R}^{4}$ were B-spline functions
evaluated on a grid of \emph{n} locations over the unit interval.
Data were generated according to the model. We ran aEP and EP for
20 passes at speed $\gamma=0.4$, since the Cauchy likelihood induced
occasional convergence problems. 

To measure the difference between aEP and EP, we used a relative difference
in means:

\begin{equation}
d_{\mu}=\frac{\left|\mu_{EP}-\mu_{aEP}\right|}{\mbox{min}\left(\sigma_{EP},\sigma_{aEP}\right)}\label{eq:rel_diff_means}
\end{equation}

which expresses how different the estimates are in units of standard
deviations. Differences in estimated posterior variance was summarized
by a ratio:

\begin{equation}
d_{\sigma}=\mbox{max}\left(\frac{\sigma_{EP}}{\sigma_{aEP}},\frac{\sigma_{aEP}}{\sigma_{EP}}\right)\label{eq:sd_ratio}
\end{equation}

Both measures were averaged over the $m=4$ parameters.

The results are shown on fig. \ref{fig:aEP-vs-EP}. aEP and EP are
both exact in large $n$, but the differences between the Cauchy and
the probit model are notable (one order of magnitude). In the probit
model aEP and EP are practically the same with just 20 datapoints,
with relative differences in means reaching a maximum of 5\%, whereas
differences in the Cauchy model can reach 40\%. With enough datapoints
the differences disappear in the Cauchy model as well.

\begin{center}
\begin{figure}
\begin{centering}
\includegraphics[width=10cm]{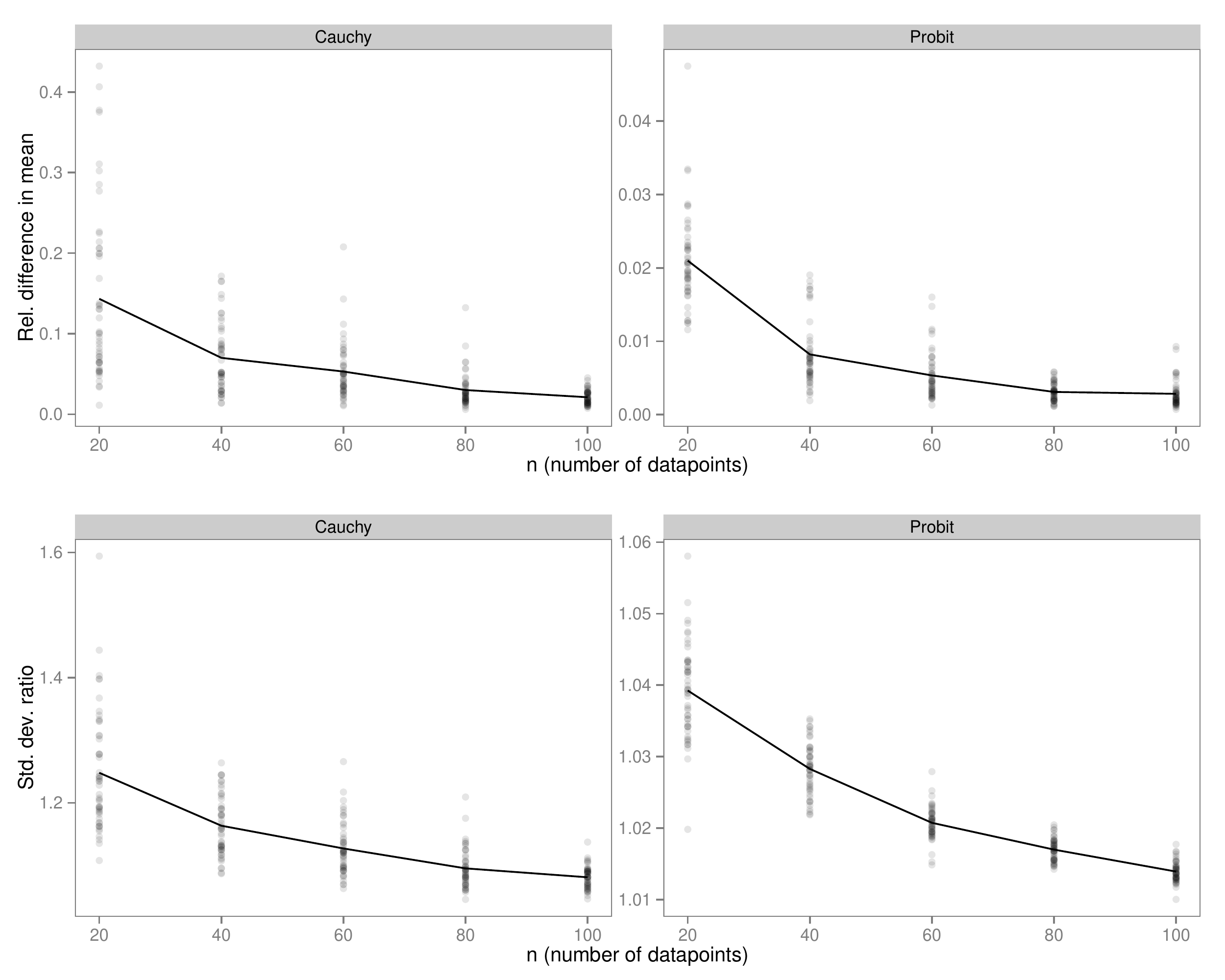}
\par\end{centering}

\protect\caption{aEP vs EP in probit and Cauchy models. Note that the vertical scales
are not the same across panels. Upper row: relative difference in
means (eq. \eqref{eq:rel_diff_means}) between aEP and EP. Lower row:
ratio of standard deviations between aEP and EP ($d_{\sigma}$ in
\eqref{eq:sd_ratio}). Dots represent individual simulations, the
continuous line connects the means. \label{fig:aEP-vs-EP}}
\end{figure}

\par\end{center}

When applying aEP, one must be careful to remember that it uses the
approximation that $\forall i,\ \lambda_{i}\approx\frac{1}{N}\sum\lambda_{j}$.
This means that if a few sites are outliers and have a very exceptional
contribution to the posterior, then they might make this approximation
wrong and aEP might give a very poor approximation. If applying aEP,
it thus seems sensible to check the validity of the assumption at
the last point of the iteration as a simple sanity check to verify
the quality of the approximation.

One interesting possibility is to run aEP until it gets close to a
fixed point to take advantage of the smaller amounts of computations,
and then switch to the EP iteration to take advantage of the possible
increased precision of the EP algorithm.
\end{document}